\def\be{\begin{equation}}
\def\ee{\end{equation}}
\def\1{\mathbf{1}}
\DeclareMathSymbol{\dash}{\mathord}{AMSa}{"39}
 \newcommand{\Glim}[1]{\Gamma\dash\lim_{#1}}
\newtheorem{theorem}{Theorem}[section]
\newtheorem{lemma}[theorem]{Lemma}
\newtheorem{remark}[theorem]{Remark}
\newtheorem{definition}[theorem]{Definition}
\title{The {Kuramoto} model on the {Sierpinski} {Gasket} {II}: Twisted states}
\author{Georgi S. Medvedev \and Matthew S. Mizuhara}
\begin{document}
	
	\maketitle
	
	\begin{abstract}
We study the Kuramoto model (KM) of coupled phase oscillators on graphs approximating the Sierpinski gasket (SG).
As the size of the graph tends to infinity, the limit points of the sequence of stable equilibria in the KM correspond
to the minima of the Dirichlet energy, i.e., to harmonic maps from the SG to the circle. We provide a complete
description of the stable equilibria of the continuum limit of the KM on graphs approximating the SG,
under both Dirichlet and free boundary conditions. We show that there is a unique stable equilibrium in each homotopy
class of continuous functions from the SG to the circle. These equilibria serve as generalizations of the classical twisted states on ring networks. Furthermore, we extend the analysis to the KM on
post-critically finite fractals. The results of this work reveal the link between self-similar organization and
network dynamics.
%        The Kuramoto model (KM) of coupled phase oscillators on graphs approximating
%              a fractal set is analyzed in this work. We show that as the size of the graph goes
%              to infinity, the limit points of the stable equilibria in the KM are given by the minima
 %             of the Dirichlet energy, i.e., by harmonic maps from the underlying fractal to the circle.
 %             The latter can be explicitely computed using the harmonic extension algorithm.
 %             The combination of the results of this work and \cite{MM2024} provides an exhaustive
 %             description of stable equilibria of the continuum limit of the KM on graphs approximating
 %             Sierpinski Gasket (SG) with Dirichlet and free boundary conditions. These stable states are
%              the analogs of twisted states in the KM on a circle. Our results highlight the implications
 %             of self-similarity to the structure of the attractor in the KM. In particular,
 %             we show that there is a unique stable equilibrium corresponding to each homotopy class of continuous functions on SG to the circle.           Furthermore, we extend these results to the KM on post-critically finite (p.c.f.)  fractals.
        \end{abstract}

        \leftline{\small 2020 {\it Mathematical Subject Classification.\/}
          34C15, %   	Nonlinear oscillations and coupled oscillators for ordinary differential equations
          28A80, %          Fractals
          37B35, %   	Gradient-like behavior; isolated (locally maximal) invariant sets; attractors,
                     % repellers for topological dynamical systems
          37M15, %   	Discretization methods and integrators (symplectic, variational, geometric, etc.) for dynamical systems
          58E30, %   	Variational principles in infinite-dimensional spaces
92B20.           %Neural networks for/in biological studies, artificial life and related topics    
}
\noindent{\small{\it Keywords and phrases.\/}
Kuramoto model, $XY$ spin model, fractal, Sierpinski Gasket, $\Gamma$-convergence, twisted state
}  
  
%%%%%%%%%%%%%%%%%%%%%%%%%%%%%%%%%%%%%%%%%%%%%%%%%%%%%%%%%%%%%%%
  
\section{Introduction}\label{sec.intro}
\setcounter{equation}{0}

%  Fractals feature prominently in problems of analysis, geometry, probability,
%  and dynamical systems \cite{Falc-FracGeom, BishopPeres, BarSimSol}.
%  Similarly, they appear in applied science across various contexts, including dendritic venation in leaves,
% signal processing, and antenna design \cite{du2013recognition,levy1995fractal,AlZabee2016}. 
% There is empirical evidence of self-similarity in natural and technological
% networks \cite{Mandelbrot-Fractal-Geometry,ChoShi14}.
% Often, this is a by-product of a network’s hierarchical structure, as seen
% in the Internet and power grids, where smaller modules mirror the structure of the larger network
% \cite{DillKum02, HubAda99,wang2017fractal}.
% Other examples include dendritic trees of pyramidal neurons, metabolic pathways, and ecological networks,
% such as contact networks of infectious diseases \cite{batty1994fractal,FadGam15,SmiHar21}.
% While none of these examples exhibit self-similarity in the strict mathematical sense,
% the analysis of idealized mathematical models on fractals
% can provide insights into the implications of self-similarity observed in real-life networks.

In this paper, we initiate the study of the Kuramoto model (KM) of coupled phase oscillators on
self-similar networks. This model has been extensively studied in nonlinear science \cite{Str00} and in statistical physics, where it is known under the name of 
$XY$-model \cite{CosSha2021}.
In particular, it has been used for understanding the role of the network topology on emergent
dynamics and synchronization \cite{CMM18}.
In a similar vein, we use the KM to explore how self-similar network organization influences
network dynamics. There is empirical evidence of self-similarity in natural and technological
 networks \cite{Mandelbrot-Fractal-Geometry}.
 Often, this is a by-product of a network’s hierarchical structure, as seen
 in the Internet, where smaller modules mirror the structure of the larger network
 \cite{DillKum02}.
 While self-similarity in real-life networks is not exact from a mathematical standpoint,
 the analysis of idealized models on fractals
 can offer insights into the implications of self-similarity observed in such networks.

As a model of self-similar connectivity, we use a sequence of graphs approximating a
fractal domain $K\subset\R^d$. To simplify presentation, we first focus on the case
when $K$ is a Sierpinski Gasket (SG), a canonical example of a fractal set \cite{Kig01, Strich2006}.
In Section~\ref{sec.pcf}, we will extend this analysis to the KM on graphs approximating
a post-critically finite (p.c.f.) fractal. 

Until Section \ref{sec.pcf}, $K\subset\R^2$ is the unique nonempty compact set
satisfying the following fixed point equation
\begin{equation}\label{fixed-point}
  K=\bigcup_{i=1}^3 F_i(K),\qquad F_i(x)=2^{-1}(x-v_i)+v_i,
  \end{equation}
  where $v_1, v_2, v_3$ are three distinct points in the plane (see Fig.~\ref{fig:sg}).
  Without loss of generality, we set $v_1=(0,0),$ $v_2=(1/2, \sqrt{3}/2)$, and
  $v_3=(1,0)$, so that $v_1, v_2,$ and $v_3$ are vertices of an equilateral triangle
  with side length $1$.
  
\begin{figure}[h]
	\centering
	\includegraphics[width = .35\textwidth]{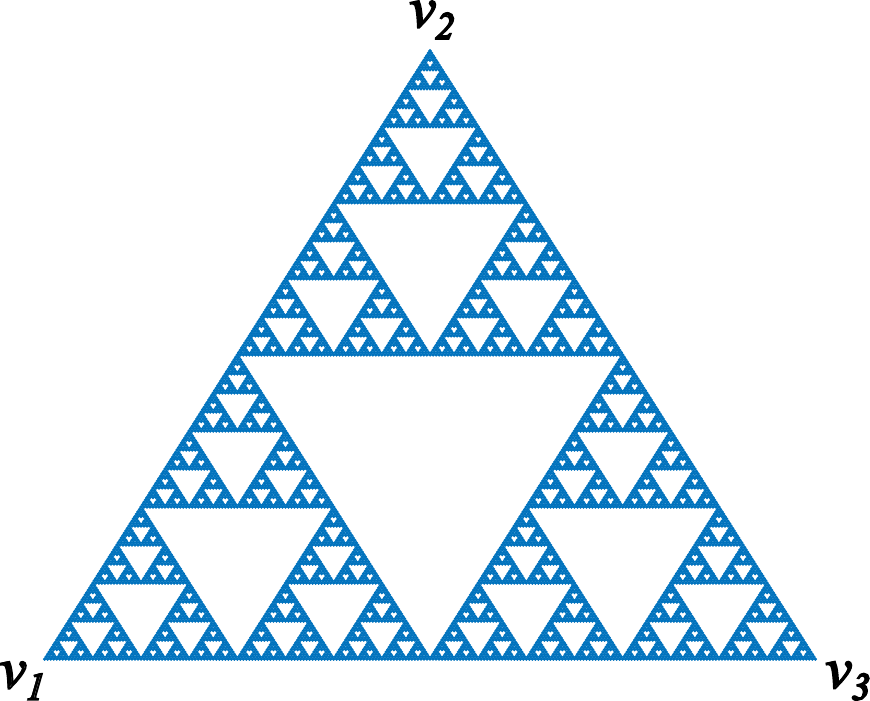}
	\caption{SG.}
	\label{fig:sg}
\end{figure}

  The sequence of graphs $\Gamma_n=(V_n,E_n)$ approximating $K$ is constructed as follows.
  Let $V_n$ stand for the set of vertices of $\Gamma_n$. Then $V_0=\{ v_1, v_2, v_3\}$ and
  for $n\ge 1$ set
  $$
V_n=\bigcup_{i=1}^3 F_i(V_{n-1}).
$$
Further, $\Gamma_0$ is the complete graph on three nodes. To describe the set of edges of $\Gamma_n$,
we need an alphabet $S\doteq \{1,2,3\}$
and the set of words consisting of $n$ symbols, $S^n$.
Then  $v_i,v_j \in V_n$ are adjacent (denoted by either $v_i\sim_n v_j$, or simply $i\sim_n j$) if there is $w=(w_1,w_2,\dots, w_n)\in S^n$,
such that 
$$
v_i,v_j\in F_w(V_0), \quad\mbox{where}\quad F_w\doteq F_{w_1}\circ F_{w_2}\circ\dots\circ F_{w_n}.
$$
Geometrically, $\Gamma_0$ is a triangle with vertices $v_1, v_2,$ and $v_3$, and
$\Gamma_n=\bigcup_{w\in S^n} F_w(\Gamma_0)$. Two nodes $v_i,v_j\in V_n$ are adjacent
if both belong to the same $n$--cell $F_w(\Gamma_0)$ for some $w\in S^n$
(see Fig.~\ref{fig:sg_graphs}).
\begin{figure}[h]
	\centering
	\includegraphics[width = .9\textwidth]{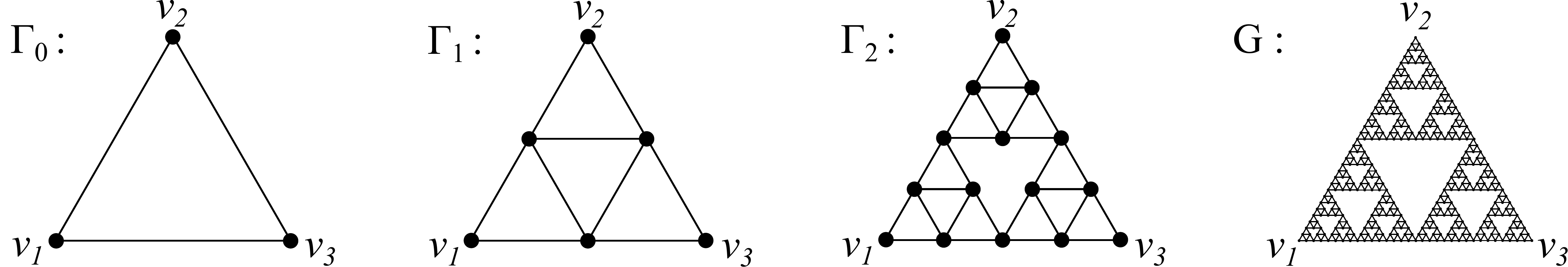}
	\caption{Graphs $\Gamma_n$ approximating SG.}
	\label{fig:sg_graphs}
      \end{figure}
      There are $\frac{3}{2}\left(3^n-1\right)$ vertices of $\Gamma_n$. We will distinguish between
      the boundary vertices given by $V_0=\{v_1, v_2, v_3\}\subset V_n$ and the interior vertices
      $V_n\setminus V_0$. Note that the interior vertices have degree $4$ while the boundary ones
      have degree $2$.

We pause to explain the  coding of the nodes of the graphs
approximating $K$. Let $T$ denote the solid triangle with vertices $v_1, v_2,$ and
$v_3$. Every node from $V_n$ is a vertex of the corresponding triangle
$$
T_w=F_w(T),\; w=(w_1,w_2,\dots ,w_n) \in S^n
%;\quad\mbox{where}\quad F_w\doteq F_{w_1}\circ F_{w_2}\circ\dots\circ F_{w_n}.
$$ 
and can be represented
as
$$
\bigcap_{k=1}^\infty
F_{ w\underbrace{ \scriptscriptstyle iii\dots i}_\text{\normalfont $k$ times}}\,(T).
$$
For each node in $V_n$ we define an itinerary $w\bar{i}$,
where $\bar{i}$ stands for the infinite sequence of $i$'s: $iii\dots,$ $i\in S$.
Note that for $v\in V_n\setminus V_0$ there are two possible itineraries, e.g.,
$v_{1\bar{2}}$ and $v_{2\bar{1}}$
correspond to the same node from $V_2$.   We order the vertices in $V_n$ by the
corresponding itineraries in lexicographical order:
\begin{equation}\label{order-nodes}
V_n=\{ v_{\bar 1}, v_{\bar 2}, v_{\bar 3}, v_{1\bar 1}, v_{1 \bar 2}, \dots\}.
\end{equation}
Abusing notation, we will use $v_1, v_2, v_3, v_4, \dots$ to denote the elements of $V_n$
appearing in the same order as in \eqref{order-nodes}.

Now we are ready to formulate the KM on $\Gamma_n$:
\begin{equation}\label{KM-intro}
  \dot u(t,v_i)=\left(\frac{5}{3}\right)^n \sum_{j\sim_n i} \sin\left(2\pi\left(u(t,v_j)-u(t,v_i)\right)\right), \quad
  v_i\in  V_n,
\end{equation}
where $u(t,v_i)$ represents the state of oscillator located at $v_i$ at time $t$. The scaling factor $\left(\frac{5}{3}\right)^n$
is used so that the model has a nontrivial continuum limit as $n\to\infty$.

It is instructive to rewrite the KM on $\Gamma_n$ as the gradient system
\begin{equation}\label{KM-grad}
  \dot u=-2\pi \nabla  \cJ_n(u), \quad u= (u(t,v_1),u(t,v_2),\dots, u(t,v_{k_n})),
\end{equation}
where
\begin{equation}\label{KM-energy}
\cJ_n(u) \doteq \left(\frac{5}{3}\right)^n\frac{1}{4\pi^2} \sum_{(i,j)\in E_n} \left( 1-\cos\left(2\pi(u(t,v_j)-u(t,v_i))\right)\right).
\end{equation}
We use the convention that each (undirected) edge $(i,j)\in E_n$ appears only once in the sum to avoid double-counting. The factor $\frac{1}{4\pi^2}$ is taken for convenience of future analysis. 

Since \eqref{KM-grad} is a gradient system, its attractor coincides with the set of minima of the energy
function $\cJ_n$. Due to translation invariance of \eqref{KM-intro} the points of minimum of $\cJ_n$ are
not isolated. To eliminate this translation invariance, we fix the value at $v_1$ in the steady-state solutions of \eqref{KM-intro}
to $0$. This leads to the following minimization problem:
\begin{equation}\label{min-intro}
  \cJ_n \;\rightarrow\; \min_{H_n},\quad H_n\doteq \{ u\in L(V_n, \T):\; u(v_1)=0\},
\end{equation}
where $L(X,Y)$ denotes the space of functions with domain $X$ and co-domain $Y$.

The main result of this paper reveals the link between the structure of the attractor of \eqref{KM-intro} and the
topology of the SG. We show that there is one-to-one correspondence between the homotopy classes
of continuous functions from the SG to $\T$ and stable equilibria of \eqref{KM-intro}.

Before formulating the
main result, we review the homotopy classes of continuous maps on SG following \cite{MM2024}.
To this end, let
        $$
        \mathcal{P}_n=\left\{ \partial T_w,\; w\in S^n\right\} \quad \mbox{and}\quad
        \mathcal{P}= \bigcup_{n=0}^\infty   \mathcal{P}_n.
        $$

        For each triangular loop $\gamma\in\mathcal{P}$, we choose a reference point $O_\gamma\in\gamma$.
        For concreteness, let $O_\gamma$ be the leftmost vertex of $\gamma$.
% For $f\in C(G,\T)$ and $\gamma\in\mathcal{P}$, consider $f_\gamma\doteq f\left|_\gamma\right.$, the
% restriction of $f$ to $\gamma$.
Choose a \textit{uniform} parametrization
$c_\gamma:\T\to\gamma$, which starts at $O_\gamma$, $c_\gamma(0)=O_\gamma$, and traces $\gamma$ in clockwise
direction with constant speed.
% Throughout this paper, we will always
% assume that for $\gamma\in \mathcal{P}$, $O_\gamma$ is the leftmost vertex of the  triangle
% $T_w$ such that $\gamma=\partial T_w$.

For a given $f\in C(K,\T)$,  $f_\gamma\doteq f\circ c_\gamma$ is a continuous function
from $\T$ to itself\footnote{Throughout this paper, $c_\gamma$ is a uniform parametrization, i.e.,
$\dot c_\gamma=1.$}. There is   a unique continuous function $\bar{f}_\gamma:\R\to\R$
such that
\begin{align} \label{lift-1}
  \hat{f}_\gamma(0)&= f(0),\\
  \label{lift-2}
  \hat{f}_\gamma(x)&= f(x\mod 1) + k(x), \quad k(x)\in \Z.
\end{align}
The second condition \eqref{lift-2} can be written as 
\begin{equation}\label{lift-3}
\pi\circ \hat f_\gamma=  f_\gamma \circ \pi,
\end{equation}
where $\pi: \,\R\ni x\mapsto x\mod 1.$ 

$\hat{f}_\gamma$ is called the lift of $f_\gamma$. 
The degree of $f_\gamma$ is expressed in terms of the lift of 
$f_\gamma$:
$$
\omega(f_\gamma)\doteq \hat{f}_\gamma(1)-\hat{f}_\gamma(0).
$$

\begin{definition}\label{def.w-vector}
  The degree of $f\in C(K,\T)$ is defined by
  \begin{equation}\label{w-vector}
    \bar\omega(f)=\left(\omega_{\gamma_0}(f), \omega_{\gamma_1}(f), 
    \omega_{\gamma_2}(f),\dots \right),
\end{equation}
where $\omega_\gamma(f)\doteq\omega(f_\gamma)$.
\end{definition}

\begin{remark}
    Since $f$ is uniformly continuous on $K$, the number of nonzero entries in \eqref{w-vector}
    is finite. Unless specified otherwise, we use the topology on $K$ induced by the Euclidean metric.
\end{remark}

\begin{definition}\label{def.homotopy}
  Two maps $f,g \in C(K,\T)$ are called homotopic, denoted $f\sim g$, if there exists a continuous mapping
  $F: [0,1]\times K\to \T$ such that
  \begin{equation}\label{F-hom}
    F(0,\cdot)=f\qquad\mbox{and}\quad F(1,\cdot)=g.
    \end{equation}
  \end{definition}

  \begin{theorem}\label{thm.homotopy}(Theorem 2.4 in \cite{MM2024}) Let $f,g \in C(K,\T)$. Then
    $f\sim g$ if and only if $\bar\omega(f)=\bar\omega(g)$.
  \end{theorem}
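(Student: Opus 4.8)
The plan is to dispatch the easy converse first and then prove the forward implication by an explicit lifting construction, since the usual covering-space criterion is unavailable for $K$. For the converse, if $f\sim g$ via $F$, then for each $\gamma\in\mathcal{P}$ the map $(t,s)\mapsto F(t,c_\gamma(s))$ is a homotopy between the circle maps $f_\gamma$ and $g_\gamma$; as the degree of a self-map of $\T$ is a homotopy invariant, $\omega_\gamma(f)=\omega_\gamma(g)$ for every $\gamma$, that is $\bar\omega(f)=\bar\omega(g)$.

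For the forward direction I would use that $\T=\R/\Z$ is a topological abelian group, so pointwise addition makes the set of homotopy classes a group, $\omega_\gamma$ is additive, and $f\sim g$ if and only if $h\doteq f-g\sim 0$. Since $\bar\omega(h)=\bar\omega(f)-\bar\omega(g)=0$, everything reduces to one claim: if $\bar\omega(h)=0$ then $h$ is null-homotopic. I would prove this by producing a continuous lift $\tilde h\colon K\to\R$ with $\pi\circ\tilde h=h$; once such a lift exists, $(t,x)\mapsto\pi\bigl((1-t)\tilde h(x)\bigr)$ contracts $h$ to a constant because $\R$ is contractible, and transporting back through the group law gives $f=g+h\sim g$.

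To build the lift I would first work on the one-dimensional set $\Gamma_n\subset K$ (the union of triangle sides at level $n$), whose union over $n$ is dense in $K$. A continuous map from a connected graph into $\T$ lifts to $\R$ exactly when its winding number around every cycle vanishes, and the triangular loops $\{\partial T_w:\,|w|\le n\}$ generate $H_1(\Gamma_n;\Z)$: their number $\sum_{k=0}^n 3^k=\tfrac12(3^{n+1}-1)$ equals the first Betti number of $\Gamma_n$, and a short induction on $n$ shows they form a basis. Hence $\bar\omega(h)=0$ forces every cycle winding number of $h|_{\Gamma_n}$ to vanish, so $h|_{\Gamma_n}$ lifts; normalizing each lift to agree at the base vertex $v_1\in\Gamma_n$ makes the lifts on successive $\Gamma_n$ compatible and yields a single $\tilde h$ on $\bigcup_n\Gamma_n$ with $\pi\circ\tilde h=h$, whose increment $\tilde h(y)-\tilde h(x)$ is independent of the connecting path.

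The crux is extending $\tilde h$ continuously to all of $K$, and this is genuinely delicate: because the holes of the SG accumulate everywhere, $K$ fails to be semilocally simply connected, no general covering-space argument applies, and a priori the path-lifted values could jump. Here I would combine the vanishing of the winding numbers at \emph{every} scale with the quasiconvexity of the SG: there is a constant $M$ so that any $x,y\in\bigcup_n\Gamma_n$ are joined by a path $p_{xy}$ in $\bigcup_n\Gamma_n$ with $\mathrm{diam}(p_{xy})\le M\,d(x,y)$. Given $\eps\in(0,\tfrac12)$, uniform continuity of $h$ on the compact set $K$ provides $\delta>0$ such that $d(x,y)<\delta$ makes $h(p_{xy})$ a connected set of diameter below $\eps$, hence contained in an arc of length less than $1$; along such a path the lift carries no integer ambiguity, so $|\tilde h(x)-\tilde h(y)|<\eps$. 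Thus $\tilde h$ is uniformly continuous on the dense set $\bigcup_n\Gamma_n$ and extends uniquely to a continuous $\tilde h\colon K\to\R$, with $\pi\circ\tilde h=h$ preserved by density. This furnishes the required lift and completes the proof. I expect this extension step to be the main obstacle, as it is precisely where the small-scale hypotheses $\omega_{\partial T_w}(h)=0$ for large $|w|$, absent from the classical ring-network setting, are indispensable.
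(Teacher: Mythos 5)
You should note at the outset that the paper does not prove this statement at all: it is imported verbatim as Theorem 2.4 of \cite{MM2024}, so there is no internal proof to compare against, and your argument has to stand on its own. On its merits, your proposal is correct in outline and identifies the right mechanism. The converse via homotopy invariance of degree is standard; the reduction through the group structure of $\T$ to the single claim ``$\bar\omega(h)=0$ implies $h$ is null-homotopic'' is clean and legitimate (degrees add under pointwise addition since lifts add); and the lift-then-contract scheme is exactly what is needed given that $K$ is not semilocally simply connected. Your two quantitative claims check out: $\Gamma_n$ has $3^{n+1}$ edges and $\tfrac{3}{2}(3^n+1)$ vertices, so $b_1(\Gamma_n)=\tfrac{1}{2}(3^{n+1}-1)=\sum_{k=0}^n 3^k$, matching the number of loops $\partial T_w$, $|w|\le n$; independence of these loops, which you leave to ``a short induction,'' is most cleanly seen by pairing each loop with the winding number around the central hole of $T_{w'}$ (nonzero iff $w$ is a prefix of $w'$), giving a unitriangular matrix under ordering by containment. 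The extension step, which you rightly flag as the crux, is also sound: quasiconvexity of the gasket (paths in $\bigcup_n\Gamma_n$ of diameter $\le M\,d(x,y)$) is a standard fact, and your arc-containment argument correctly rules out integer jumps of the lift along short paths, yielding uniform continuity on the dense skeleton and hence a continuous extension with $\pi\circ\tilde h=h$ by density. Where your route differs from the machinery this paper actually builds from \cite{MM2024}: there, $\T$-valued maps are handled by constructing a degree-dependent covering space of $K$ itself, with cut points $z_w$ chosen per nonzero winding entry, and real-valued lifts live on a fundamental domain. Your approach sidesteps covering spaces of $K$ entirely by lifting only the \emph{difference} map, whose total triviality of degree makes a global $\R$-valued lift possible; this is more elementary and self-contained, at the cost of not producing the per-class structures (fundamental domains, jump conditions) that the rest of the paper needs for its variational arguments. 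As a proof of the classification theorem alone, your sketch is essentially complete, with only the basis independence and the quasiconvexity constant left as routine facts to be written out.
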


  We will need the following result on harmonic maps from SG to the circle. It follows from \cite{Strich02}.
  For a given $\omega^\ast\in \Z^\ast \doteq \bigcup_{l=1}^\infty \Z^l$ there exists a unique solution $u^\ast \in L(K,\T)$
  of the following boundary value problem:
  \begin{align}\label{HM-Lap}
    \Delta u^\ast & =0,\\
    \label{HM-dir}
    u^\ast(v_1)&= 0,\\
    \label{HM-neu}
    \partial_{\bf n} u^\ast (v_i) &=0,  i=1,2,3, \\
    \label{HM-deg}
    \bar \omega(u^\ast)&=\omega^\ast,
    \end{align}
    where $\partial_{\bf n}$ stands for the normal derivative \cite{Str06}.

    We can now formulate the main result of this work. Below, we will generalize this result for the KM
  on p.c.f. fractals, but for now we restrict our attention to the KM on SG.
    \begin{theorem}\label{thm.main-sg}
  	Let $\omega^\ast\in \Z^\ast$ and $u^\ast\in L(K,\T)$ be the solution of the boundary value
  	problem \eqref{HM-Lap}-\eqref{HM-deg}. Then, for each $\varepsilon>0$, there exists $N\in\N$ such
        that for all $n\ge N$, the KM on $\Gamma_n$
  	has a stable steady state solution $u^{n}\in L(V_n,\T)$ such that
  	\begin{equation}\label{steady-state-approx}
  	\max_{x\in V_n} | u^{n}(x)-u^*(x)|<\epsilon.
  	\end{equation}
        {Moreover, there is an extension of $u^{n}$ to a continuous function on $K$, $\tilde{u}^{n}$
          such that 
          $\omega(\tilde u^{n})=\omega^\ast$ and
\begin{equation}\label{harm-spline-estimate}
  	\max_{x\in K} | \tilde{u}^{n}(x)-u^*(x)|<\epsilon.
      \end{equation}
      }
             % Here, $\tilde u^{n,\ast}\in L(K,\T)$ stands for the harmonic spline extending $u^{n,\ast}$.
  \end{theorem}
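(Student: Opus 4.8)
The plan is to realize $u^n$ as a constrained energy minimizer of $\cJ_n$ inside the homotopy class prescribed by $\omega^*$, and then to identify its continuum limit with the harmonic map $u^*$ through the Kigami--Strichartz theory of Dirichlet forms on the SG. Write $\cE_n(u)\doteq (5/3)^n\sum_{i\sim_n j}(u(v_i)-u(v_j))^2$ for the renormalized discrete Dirichlet form, edge differences being taken as representatives in $(-1/2,1/2]$. Since $1-\cos(2\pi\theta)=2\pi^2\theta^2(1+O(\theta^2))$, one has $\cJ_n=\tfrac12\cE_n(1+o(1))$ uniformly over configurations whose edge differences are uniformly small. By uniform continuity of $u^*$ and since the mesh of $\Gamma_n$ is $2^{-n}$, for large $n$ the restriction $u^*_n\doteq u^*|_{V_n}$ has arbitrarily small edge differences; in this regime its discrete winding numbers around the elementary triangles $\partial T_w$, $w\in S^n$, are well defined and, once $n$ exceeds the largest level carrying a nonzero entry of $\omega^*$, they recover $\bar\omega(u^*)=\omega^*$ on every loop in $\cP$. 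Let $\cA_n\subset H_n$ consist of the configurations with edge differences $<1/4$ and the same elementary winding numbers as $u^*_n$. Passing to cellwise lift coordinates (legitimate on $\cA_n$ because the windings are fixed), $\cA_n$ is convex and $\cJ_n$ is strictly convex there, as $1-\cos(2\pi\theta)$ is strictly convex on $|\theta|<1/4$ and the resulting Hessian is a positively weighted graph Laplacian, nondegenerate once $u(v_1)=0$ is imposed on the connected graph $\Gamma_n$; hence $\cJ_n$ has a unique minimizer $u^n$ over $\overline{\cA_n}$.

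Next I would verify that $u^n$ is a stable steady state. Comparing energies, $\cJ_n(u^n)\le\cJ_n(u^*_n)$, and since $\cE_n(u^*_n)\nearrow\cE(u^*)<\infty$ this forces $\sum_{i\sim_n j}(u^n(v_i)-u^n(v_j))^2\le C(3/5)^n$, so the maximal edge difference of $u^n$ tends to $0$. Consequently, for large $n$ the constraints defining $\cA_n$ are inactive: $u^n$ lies in the interior of $\cA_n$, and the winding class is open around it, so every sufficiently small perturbation in $H_n$ stays in $\cA_n$. Therefore $u^n$ is a nondegenerate strict local minimizer of $\cJ_n$ on all of $H_n$, hence an asymptotically stable equilibrium of the gradient system \eqref{KM-grad}.

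Finally I would upgrade this to the uniform estimates. Let $\tilde u^n$ be the harmonic spline of $u^n$ (cellwise lift, harmonic extension, projection back to $\T$), which is well defined since the edge differences are small and which satisfies $\cE(\tilde u^n)=\cE_n(u^n)$, a uniformly bounded quantity. The resistance (Sobolev) estimate on the SG, $|f(x)-f(y)|^2\le R(x,y)\,\cE(f)$ with $R$ Hölder-comparable to the Euclidean metric, makes $\{\tilde u^n\}$ uniformly equicontinuous, so by the Arzel\`a--Ascoli theorem a subsequence converges uniformly to some $v\in C(K,\T)$. The degree is locally constant under uniform convergence, giving $\bar\omega(v)=\omega^*$; lower semicontinuity of $\cE$ combined with $\cE(\tilde u^n)=\cE_n(u^n)\le\cE_n(u^*_n)(1+o(1))$ and $\cE_n(u^*_n)\to\cE(u^*)$ yields $\cE(v)\le\cE(u^*)$. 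Because $u^*$ is the unique minimizer of $\cE$ in the class $\omega^*$, namely the harmonic map of \eqref{HM-Lap}--\eqref{HM-deg}, we conclude $v=u^*$; as the limit is independent of the subsequence, $\tilde u^n\to u^*$ uniformly on $K$. This gives \eqref{harm-spline-estimate} with $\bar\omega(\tilde u^n)=\omega^*$, and restriction to $V_n$ gives \eqref{steady-state-approx}.

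I expect the main obstacle to be this last step: promoting variational (energy) convergence to uniform convergence while simultaneously controlling the $\T$-valued degree. This is where the SG-specific tools are essential — the resistance/Sobolev embedding to get equicontinuity, the spline identity $\cE(\tilde u^n)=\cE_n(u^n)$, and the persistence of $\bar\omega$ in the limit. The genuine difficulty is the interplay between the nonconvex circle-valued energy $\cJ_n$ and the fixed homotopy class, and it is precisely this that necessitates the preliminary reduction to the convex, winding-constrained class $\cA_n$ in the first paragraph.
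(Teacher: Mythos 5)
Your proposal founders on one concrete point: the constraint class $\cA_n$ does not pin down the homotopy class. You define $\cA_n$ by fixing the discrete winding numbers around the \emph{elementary} triangles $\partial T_w$, $w\in S^n$, and claim these "recover $\bar\omega(u^*)=\omega^*$ on every loop in $\cP$." This is false for the SG graphs, and the failure is exactly the topological feature the paper is about. The cycle space of $\Gamma_n$ has rank $\tfrac{3^{n+1}-1}{2}$, spanned by the cell boundaries $\partial T_w$ at \emph{all} levels $0\le |w|\le n$ (equivalently, by the $3^n$ upward triangles together with the boundaries of the downward holes at every scale), whereas the elementary triangles contribute only $3^n$ cycles. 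Knowing the elementary windings therefore says nothing about the winding around $\partial T$ or any coarser loop. Worse, inside your own regime the elementary constraint is vacuous: if every edge difference has representative of modulus $<1/4$, the winding around a $3$-edge triangle is an integer of modulus $<3/4$, hence $0$ --- for $u^*_n$ and for every competitor alike. Consequently the constant state $u\equiv 0$ lies in $\cA_n$, and since $\cJ_n\ge 0$ vanishes only there, your "unique minimizer over $\overline{\cA_n}$" is $u^n\equiv 0$ whenever $\omega^*\neq 0$. Everything downstream then collapses at the degree-preservation step: $\bar\omega(\tilde u^n)=0\neq\omega^*$, and the limit $v$ is the zero function, not $u^*$. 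The same root cause invalidates the claim that "cellwise lift coordinates" make $\cA_n$ convex: a global affine lift chart exists only on a sector where the windings around a \emph{full basis} of $\operatorname{Cyc}(\Gamma_n)$ are fixed; your $\cA_n$ is a union of infinitely many such sectors, one per assignment of hole windings, and minimizing over the union selects the trivial sector.

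The repair is to constrain the windings around a full cycle basis --- e.g.\ around all $\partial T_w$ with $0\le |w|\le n$ --- to the values $\rho_w$ for $|w|\le m$ and $0$ otherwise; these loops are long ($3\cdot 2^{n-|w|}$ edges), so nonzero winding is compatible with small edge differences, each winding is locally constant and affine in spanning-tree lift coordinates, and then your convexity, inactive-constraint, and limit-identification arguments do go through. But note that fixing a full basis of windings and lifting is precisely the paper's covering-space construction (cuts at one vertex per nonzero degree entry, jump conditions on the fundamental domain, Sections~\ref{sec.cover}--\ref{sec.higher}); after the fix your argument becomes a variant of Section~\ref{sec.proof}, with a hands-on strict-convexity/Hessian stability argument in place of the paper's harmonic-spline compactness in a H\"older ball, and a direct liminf-plus-comparison step in place of the full $\Gamma$-convergence machinery of Theorem~\ref{thm:SGThm}. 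Those substitutions are sound; the missing idea is the one your proposal skipped --- the degree on $K$ lives on loops at every scale, not on the elementary cells.
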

%  \begin{theorem}\label{thm.main-sg}
%    Let $\omega^\ast\in \Z^\ast$ and $u^\ast\in L(K,\T)$ be the solution of the boundary value
%    problem \eqref{HM-Lap}-\eqref{HM-deg}. Then there exists $\epsilon_0>0$ such that for any
%    $0<\epsilon<\epsilon_0$ there exists $N\in\N$ such that for $n\ge N$, the KM on $\Gamma_n$
%    has a stable steady state solution $u^{n,\ast}\in L(V_n,\T)$ such that
%    $$
%\sup_x |\tilde{u}^{n,*}(x)-u^*(x)|<\epsilon,
%    $$
%    $\omega(\tilde u^{n,\ast})=\omega^\ast$ and $u^{n,\ast}_1=0$. Here, $\tilde u^{n,\ast}\in L(K,\T)$ stands for
%    the harmonic spline extending $u^{n,\ast}$.
%    \end{theorem}
  Figure \ref{fig:snapshots} shows two examples of stable equilibria of KM on SG with non-trivial degrees.
  The KM equilibria in Theorem \ref{thm.main-sg} are
  analogous to twisted states of the KM on nearest-neighbor graphs (cf.~\cite{WilStr06};
  see also Section \ref{sec.near} for more details).
  Distinct twisted states are homotopic if and only if they share the same winding number
  $q$ so classifying twisted states is straightforward.
  On the other hand, on the fractal $K$ the homotopy of a map is determined by its
  degree \eqref{w-vector}, which can have arbitrary (finite) length. As a result, there is a rich
  diversity of KM equilibria of increasing topological complexity on the fractal $K$. %illustrated
  %in Figure~\ref{fig:snapshots}.
  \begin{figure}[h]
  	\centering
  	{\bf a)} \includegraphics[width=.45\textwidth]{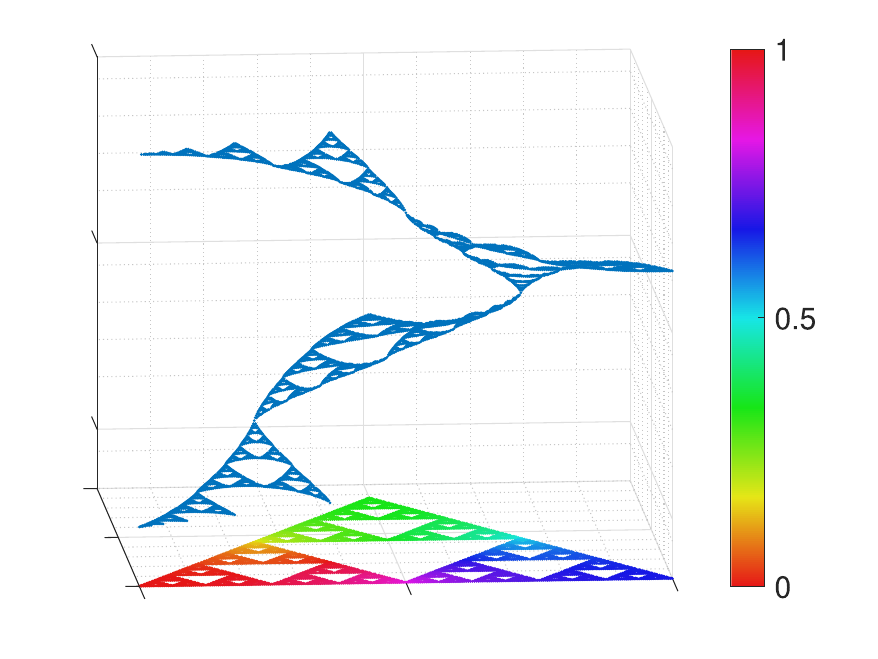}
  	 {\bf b)} \includegraphics[width=.45\textwidth]{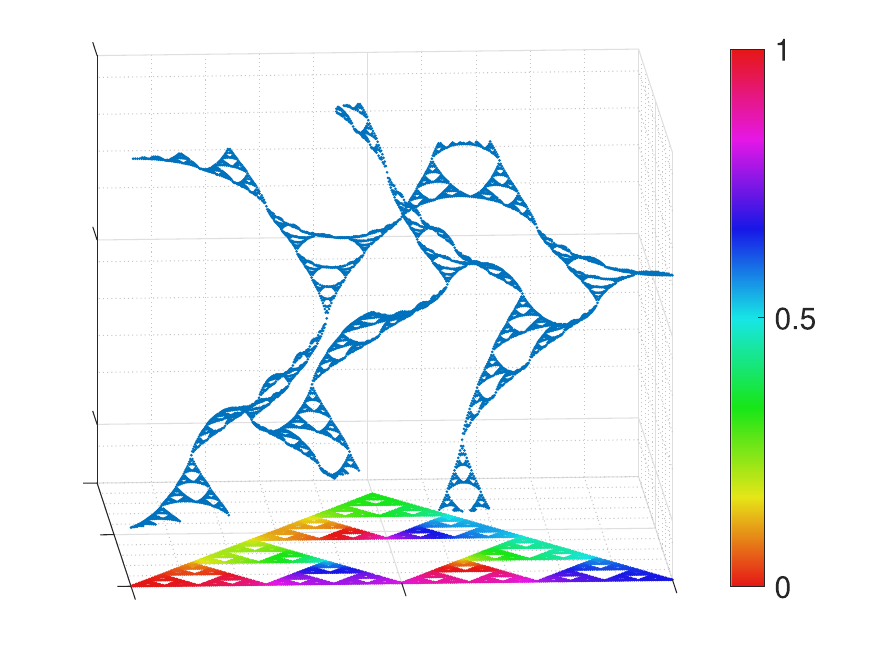}
         \caption{Twisted states on SG are equilibria of the KM on
           SG. The degrees of the equilibria shown above:
           {\bf a)} $\bar \omega(f)=(1,0,0,\dots)$, and {\bf b)} $\bar \omega(f)=(1,1,1,1,0,0,\dots)$.
These plots were obtained by numerically solving the initial value problem for the KM \eqref{KM-intro} on
          on $\Gamma_8$ with initial conditions taken as the
           corresponding harmonic map $u^*$ satisfying \eqref{HM-Lap}-\eqref{HM-deg} and
           restricted to $V_8$.}
  	\label{fig:snapshots}
  \end{figure}

 The remainder of this paper will be mainly focused on proving Theorem \ref{thm.main-sg}. We will begin by considering a simpler variation of the problem with trivial degree (i.e., $\bar \omega(u^*)=(0,0,\dots)$) and Dirichlet boundary conditions (cf. \eqref{HM-neu}). This problem is interesting in its own right, but also serves to build intuition and develop the necessary tools for the main result. In Section \ref{sec.harmSG} we review the necessary background on  harmonic functions on SG, and in Section \ref{sec.gamma} we combine $\Gamma$-convergence and a priori H\"older estimates to establish the convergence of KM equilibria to these (real-valued)  harmonic functions (Theorem \ref{thm:conv_minSG}). In Section \ref{sec.near} we translate these techniques to the setting of the KM on nearest-neighbor graphs. In particular, we show how covering spaces are used to adapt the results to $\T$-valued harmonic maps. We will then be prepared to prove Theorem \ref{thm.main-sg}. In Section \ref{sec.hmaps} we first review the necessary theory of $\T$-valued harmonic maps on SG, and in Section \ref{sec.proof} we complete the remaining details for Theorem \ref{thm.main-sg}. Finally, in Section \ref{sec.pcf} we describe how these results can be applied in a more general setting to study the KM on post-critically finite fractals.

\section{Harmonic functions on SG}\label{sec.harmSG}
\setcounter{equation}{0}

Harmonic functions can be defined as minimizers of the Dirichlet energy (cf.~\cite{Struwe-PDE}).
On a Euclidean domain, the Dirichlet energy is given by
\begin{equation}\label{Dir-Euc}
  \cI(u)=\frac{1}{2}\int_\Omega \left| \nabla u\right|^2 dx,\qquad \Omega\subset\R^d.
\end{equation}
On a fractal set, partial derivatives are not defined in general. Instead, the Dirichlet energy is defined
as a limit of the functionals $\cE_n$ defined using graphs approximating a given fractal
(see Figure~\ref{fig:sg_graphs}). For the reader's convenience, we outline the limiting procedure
below and refer the interested reader to \cite{Kig01} for details. Although the approach we
describe applies to a class of fractals known as p.c.f. fractals, we restrict our discussion to
SG for clarity of presentation. The extension to p.c.f. fractals will be explained in Section \ref{sec.pcf}.

First, we construct the sequence of graphs $(\Gamma_m)$
approximating SG (Figure~\ref{fig:sg_graphs}). 
On $\Gamma_m$ we define the discrete Laplacian
\begin{equation}\label{n-Laplace}
   \Delta_m f(v_i)=\left(\frac{5}{3}\right)^m\sum_{j\sim_m i} \left[f(v_j)-f(v_i)\right], \quad f\in L(V_n,\R).
\end{equation}

A function $f_m^* \in L(V_m,\R)$ is called $\Gamma_m$-harmonic if it satisfies the discrete
	Laplace equation at every interior node of $\Gamma_m$:
	\begin{equation}\label{d-harm}
	\Delta_mf_m^*(v)=0\quad \forall v\in V_m\setminus V_0.
              \end{equation}
Equivalently, harmonic functions are critical points of the associated Dirichlet forms
\begin{equation}
	\cE_m(f) \doteq \left(\frac{5}{3}\right)^m\sum_{(i,j)\in E_n} \frac{(f(v_j)-f(v_i))^2}{2},\quad f\in L(V_n,\R).\end{equation}
   The sequence of Dirichlet forms $(\mathcal{E}_m)$ has the following 
properties.
\begin{enumerate}
    \item A $\Gamma_m$-harmonic $f_m^\ast\in L(V_m,\R)$ minimizes $\mathcal{E}_m$ 
    over all functions subject
    to the same boundary conditions
    \begin{equation}\label{variational-harmonic}
    \mathcal{E}_m(f_m^\ast)=\min\left\{ \mathcal{E}(f):\; f\in L(V_m,\R), \; f|_{V_0}=f_m^\ast|_{V_0}\right\}.
    \end{equation}
\item The minimum of the energy over all extensions of 
$f\in L(V_{m-1},\R)$ to $V_m$ is equal to $\mathcal{E}_{m-1}(f)$:
\begin{equation}\label{variational-extension}
 \min\left\{ \mathcal{E}_m(\tilde f):\; \tilde f\in L(V_m,\R), 
 \tilde f|_{V_{m-1}}=f\in L(V_{m-1},\R)\right\} = \mathcal{E}_{m-1}(f).
\end{equation}
In particular,
\begin{equation}\label{marting}
  f^\ast_{m+1}|_{V_m}=f^\ast_m.
  \end{equation}
\end{enumerate}
The first property follows from the
Euler-Lagrange equation for $\cE_m$. This property does not depend on the scaling
coefficient $(5/3)^m.$ The second property, on the other hand, holds 
due to the choice of the scaling constant.  The sequence of 
$(\mathcal{E}_m)$ equips $K$ with a \textit{harmonic structure} (cf.~\cite{Kig01}).

Now let $f$ be a continuous function on $K$, $f\in C(K,\R)$. Define 
$
P_m: C(K,\R)\to L(V_m,\R)
$
by restricting $f$ to $V_m$:\;
$
P_m f\doteq f|_{V_m}.
$
We next extend the definition of $\cE_m$ to functions in $C(K,\R)$ (which we also denote by $\cE_m$ by abuse of notation):
$$
\cE_m(f):= \cE_m\circ P_m(f).
$$
By \eqref{variational-extension}, $(\cE_m(f))$ is a non-decreasing sequence. Thus,
\begin{equation}\label{monotone-conv}
\cE(f)\doteq \lim_{m\to\infty} \cE_m(f)
\end{equation}
defines a functional on $C(K,\R)$. 

 The domain of the Laplacian on $K$ is defined by
$$
\operatorname{dom}(\cE)=\left\{ f\in C(K,\R):\; \cE(f)<\infty \right\}.
$$

\begin{definition}\label{def.R-harm}
A function $f\in \operatorname{dom}(\cE)$ is called harmonic if it 
minimizes $\cE(f)$ over all continuous functions on SG subject to boundary conditions 
on $V_0$.
\end{definition}

The combination of \eqref{variational-harmonic} and \eqref{variational-extension}, implies that 
the restriction of harmonic $f\in \operatorname{dom}(\cE)$ to $\Gamma_m$,
  $f\left|_{\Gamma_m}\right.$, is
$\Gamma_m$-harmonic on $\Gamma_m$ for every $m\in\N$.

      The second property of the energy form (cf.~\eqref{variational-extension})  yields 
      a recursive algorithm for computing the values of a harmonic function on 
      $V_\ast=\bigcup_{m=0}^\infty V_m$, a dense subset
      of $K$. For $m=0$, the values on $V_0$ are prescribed:
      $$
      f|_{V_0}=\phi.
      $$
      Given $f|_{V_{m-1}},\; m\ge 1,$ the values on $V_m\setminus V_{m-1}$ are computed using the
      following $\frac{1}{5}-\frac{2}{5}$ rule, which we state for an arbitrary fixed $(m-1)$-cell
      $T_w,\; w\in S^{m-1}$: Suppose the values of $f$ at $a, b, c$, the nodes of $T_w$, are
 known. Then the values of $f$ at $x,y,z,$ the nodes at the next level of discretization  are 
 computed as follows 
\begin{equation}\label{classical-extension}
\begin{pmatrix} f(x) \\ f(y)\\ f(z)
\end{pmatrix}
=
\begin{pmatrix}
    \frac{2}{5} &  \frac{2}{5} &  \frac{1}{5}\\
     \frac{1}{5} &  \frac{2}{5} &  \frac{2}{5}\\
      \frac{2}{5} &  \frac{1}{5} &  \frac{2}{5}
\end{pmatrix}
\begin{pmatrix} f(a) \\ f(b)\\ f(c)
  \end{pmatrix},
\end{equation}
%where $f_v$ stands for the value of $f$ at $v\in V_m$ 
see Fig.~\ref{fig:harm_ext}.
\begin{figure}[h]
	\centering
	\includegraphics[width  =.3\textwidth]{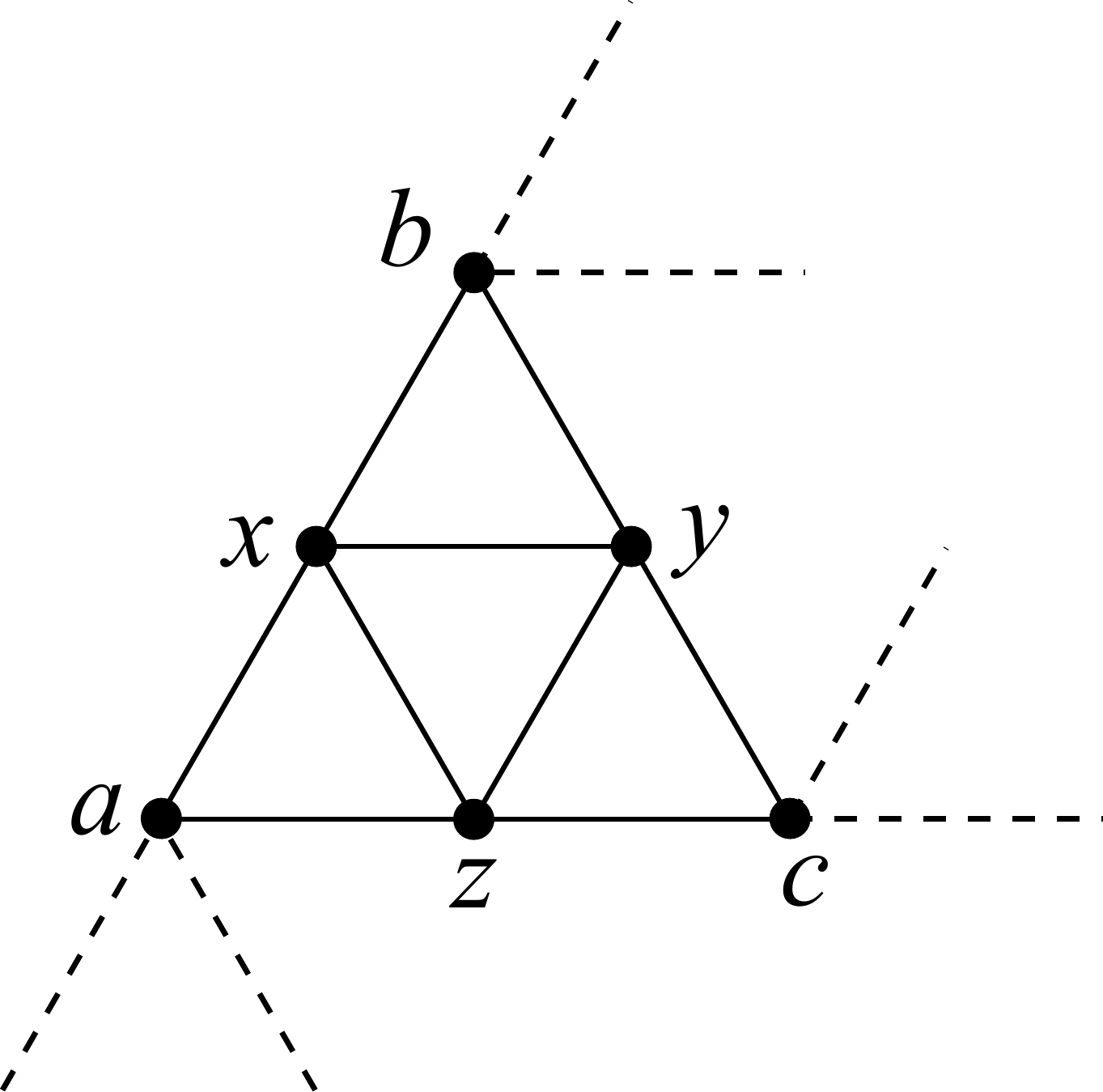}
	\caption{Harmonic extension algorithm for harmonic functions, see \eqref{classical-extension}.}
	\label{fig:harm_ext}
\end{figure}

      This algorithm is called \textit{harmonic extension} \cite{Str06}.
      Using \eqref{variational-extension}, one can show that at each step of the harmonic extension:
      $$
      \cE_m(f|_{V_m})=\cE_0(f|_{V_0})=\min \left\{ \cE_m(f):\; f\in L(V_m,\R),\; f|_{V_0}=\phi\right\}.
      $$
      Note that each of  $f|_{V_m}$ is a $\Gamma_m$-harmonic function. Further, harmonic extension
      results in a uniformly continuous function on $V_*$. Thus, the values of $f$ on $G\setminus V_*$ can
      be obtained from those on $V_*$ by continuity.
      
      Finally, the following Holder estimate will be used to provide compactness, which is needed for the
      convergence of the energies and the equilibria of the KM.
\begin{lemma}[cf. \cite{Str06,Kozlov1993}]\label{lem.SG_holder}
	If $f\in \operatorname{dom}(\cE)$ then
	\begin{equation*}
	\frac{|f(x)-f(y)|}{|x-y|^{\beta}}\leq C\sqrt{\mathcal{E}(f)},
	\end{equation*}
        where  $\beta = \frac{\log(5/3)}{2\log(2)}$ and $C$ is independent of $f$.
\end{lemma}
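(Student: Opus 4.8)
The plan is to reduce the statement to a single uniform oscillation bound on cells, and then to convert cell diameters into Euclidean distances; this conversion is exactly where the exponent $\beta$ enters. Throughout write $r\doteq 3/5$, so that the renormalization in \eqref{n-Laplace} reads $(5/3)^m=r^{-m}$, and note the algebraic identity $r^{m/2}=(3/5)^{m/2}=2^{-m\beta}$, which is just $\beta=\tfrac{\log(5/3)}{2\log 2}$ rewritten. The starting observation is elementary: for any edge $(i,j)\in E_m$, the quantity $\tfrac{(f(v_j)-f(v_i))^2}{2}$ is a single nonnegative summand of $r^{m}\cE_m(f)$, so by the monotonicity of $(\cE_m(f))$ one gets $|f(v_i)-f(v_j)|\le \sqrt2\,r^{m/2}\sqrt{\cE(f)}$. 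Since the three vertices of any $m$-cell are mutually adjacent in $\Gamma_m$, the oscillation of $f$ over the vertex set of an $m$-cell satisfies the same bound.

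The second step upgrades this vertex bound to the oscillation over an entire cell, using the self-similar scaling of the energy. Grouping the edges of $\Gamma_k$ ($k\ge m$) by the unique $m$-cell containing them and applying $F_w$ as a bijection of edges yields $\cE_k(f)=(5/3)^m\sum_{w\in S^m}\cE_{k-m}(f\circ F_w)$, and letting $k\to\infty$ gives $\cE(f)=(5/3)^m\sum_{w\in S^m}\cE(f\circ F_w)$; in particular $\cE(f\circ F_w)\le r^{m}\cE(f)$ for every $w\in S^m$. It therefore suffices to prove the seed estimate $\operatorname{osc}_K f\le C_2\sqrt{\cE(f)}$, since applying it to $g\doteq f\circ F_w$ and using $\operatorname{osc}_{F_w(K)}f=\operatorname{osc}_K g$ then produces the cell bound
\begin{equation*}
\operatorname{osc}_{F_w(K)}f\;\le\;C_2\,r^{m/2}\sqrt{\cE(f)}\;=\;C_2\,2^{-m\beta}\sqrt{\cE(f)},\qquad w\in S^m.
\end{equation*}
For the seed estimate I would fix $x\in K$, take the nested $m$-cells $C_0\supset C_1\supset\cdots$ with $x\in C_m$, and choose a vertex $p_m$ of $C_m$ so that the consecutive pair $p_m,p_{m+1}$ are joined by at most two edges of $\Gamma_{m+1}$ lying in $C_m$ (the vertices of $\Gamma_{m+1}$ inside $C_m$ form a scaled copy of $\Gamma_1$, whose graph diameter is $2$). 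Telescoping and the Step-1 bound at level $m+1$ give $|f(x)-f(p_0)|\le 2\sqrt2\,\sqrt{\cE(f)}\sum_{m\ge0}r^{(m+1)/2}=C_1\sqrt{\cE(f)}$, the geometric series converging because $r<1$, and $f(p_m)\to f(x)$ by continuity of $f\in\operatorname{dom}(\cE)\subset C(K,\R)$. Combining with the level-$0$ vertex bound on $V_0$ yields $\operatorname{osc}_K f\le C_2\sqrt{\cE(f)}$.

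The third step converts the cell bound into the Hölder estimate, and this is where the main obstacle lies: one needs the geometric separation fact that there is $c_0\in(0,1]$ with $\operatorname{dist}\bigl(F_w(K),F_{w'}(K)\bigr)\ge c_0\,2^{-m}$ for any two disjoint $m$-cells. This follows by self-similarity from the coarsest configuration in which non-adjacency occurs, together with the observation that a corner of $K$ lies at distance $\ge 2^{-\ell}$ from any $\ell$-cell not containing it. Granting this, for $x\ne y$ choose $m$ with $c_0 2^{-(m+1)}\le|x-y|<c_0 2^{-m}$. Then $x$ and $y$ must lie in equal or adjacent $m$-cells (otherwise their distance would exceed $c_0 2^{-m}>|x-y|$), so inserting a shared vertex and applying the cell bound twice gives $|f(x)-f(y)|\le 2C_2\,2^{-m\beta}\sqrt{\cE(f)}$. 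Finally $2^{-m}\le (2/c_0)\,|x-y|$ turns this into $|f(x)-f(y)|\le C\,|x-y|^{\beta}\sqrt{\cE(f)}$ with $C=2C_2(2/c_0)^{\beta}$; distances exceeding $\operatorname{diam}K$ are handled directly by the seed estimate. I expect the separation lemma and the careful bookkeeping in the telescoping (choice of the vertices $p_m$ and the bounded chain length) to be the only nontrivial points; everything else is the arithmetic of the scaling exponent.
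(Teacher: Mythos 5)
Your proof is correct, but it takes a genuinely different route from the paper: the paper does not prove Lemma~\ref{lem.SG_holder} at all (it is quoted from \cite{Str06,Kozlov1993}), and the only in-paper proof of a statement of this type, Lemma~\ref{lem:holder_estimate_pcf} for general p.c.f.\ fractals, goes through the effective resistance metric --- $|f(x)-f(y)|\le R(x,y)^{1/2}\sqrt{\cE(f)}$ followed by the bound $R(x,y)^{1/2}\le C|x-y|^{\beta}$ obtained from a test harmonic spline. You instead argue by direct chaining: the single-edge bound $|f(v_i)-f(v_j)|\le \sqrt{2}\,(3/5)^{m/2}\sqrt{\cE(f)}$ from monotonicity, the exact self-similar identity $\cE(f)=(5/3)^m\sum_{|w|=m}\cE(f\circ F_w)$ (your edge-partition argument is valid, since distinct cells meet in at most one point, so each edge of $\Gamma_k$, $k\ge m$, lies in a unique $m$-cell), a telescoping chain through nested cells for the seed oscillation bound (the graph-diameter-$2$ count for the copy of $\Gamma_1$ inside an $m$-cell is right), and a Euclidean separation lemma for disjoint $m$-cells. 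What this buys is a self-contained elementary proof with explicit constants that shows exactly where $\beta=\frac{\log(5/3)}{2\log 2}$ comes from, via $2^{-m\beta}=(3/5)^{m/2}$; what the resistance-metric route buys is brevity given the general theory and immediate transfer to p.c.f.\ fractals, where your Euclidean separation input is precisely the piece that would have to be re-proved case by case. One spot in your sketch needs fleshing out: the estimate $\operatorname{dist}(F_w(K),F_{w'}(K))\ge c_0 2^{-m}$ for disjoint cells does not follow from the corner-distance observation alone. When the two cells sit inside distinct child cells meeting at a junction point $p$, you also need the geometric fact that those two children lie in vertically opposite $60^\circ$ cones at $p$, so that $|x-y|\ge \max(|x-p|,|y-p|)$ for $x,y$ in the respective children; combined with your (correct) corner observation applied after rescaling --- at most one of the two disjoint cells contains $p$, and the other is at distance at least $2^{-m}$ from $p$ --- this closes the separation lemma, so the gap is cosmetic rather than structural.
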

% We postpone the proof of Lemma \ref{lem.SG_holder} until Section \ref{sec.pcf}, where a more general result is proved on p.c.f. fractals.

\section{$\Gamma$-convergence}\label{sec.gamma}
\setcounter{equation}{0}

$\Gamma$-convergence is our main tool for studying the asymptotic
behavior of stable equilibria in the KM for large $n$.
Below we recall the definition of $\Gamma$-convergence on metric spaces and refer the reader to \cite{Braides-handbook} for further details.

% Introduced by De~Giorgi in the 1970s, $\Gamma$-convergence
% has become a powerful tool of the calculus of variations with applications ranging from
% discrete-to-continuum problems, to homogenization, to geometric flows and phase
% transitions (cf.~\cite{Braides-handbook}). We remind the reader the definition of
% $\Gamma$-convergence and refer to \cite{Braides-handbook} for further details.

\begin{definition} \label{def.Gamma} Let $(X,d)$ be a metric space and
        $F_n,F\colon X\to \overline{\R}:=\R\cup \{\pm \infty\}$ be a sequence of functionals.
        Then $F_n$ {\it $\Gamma$-converges} to $F$, written 
        $\Gamma$-$\lim F_n = F$, if the following two conditions hold.
	\begin{enumerate}
		\item[C1.] For all $x_n\in X$ with $x_n\to x\in X$, $F(x)\leq \liminf_{n\to\infty} F_n(x_n)$.
		\item[C2.] For all $x\in X$, there is some $x_n\to x$ so that $F(x)\geq \limsup_{n\to\infty} F_n(x_n)$.
	\end{enumerate}
\end{definition}

\begin{remark} Sequence $(x_n)$ in C2 is called  a \textit{recovery sequence}.
The combination of C1  and C2 implies that
% In view of C1, C2  can be rewritten as follows:
for every $x\in X$, there exists $x_n\to x$ such that
	\begin{equation*}
		F(x) = \lim_{n\to\infty} F_n(x_n).
	\end{equation*}
      \end{remark}

      Let $X$ be the space of continuous functions on $K$, $C(K,\R)$, equipped with the supremum metric
      \begin{equation}\label{sup-metric}
d(f,g)\doteq\sup_{x\in K} |f(x)-g(x)|.
\end{equation}

First, we note that $\cE_n$ $\Gamma$-converges to $\cE$. This follows from the monotone
pointwise convergence of $\cE_n\nearrow\cE$ (cf.~\eqref{monotone-conv}) and continuity of $\cE_n$
(cf.~\cite[Remark~1.10(ii)]{Braides-beginners}). Importantly, $\cJ_n$ also $\Gamma$-converges to $\cE$ as shown
in the following theorem.

\begin{theorem} \label{thm:SGThm}
  Let $(\cJ_n)$ be the Kuramoto energy functionals (cf.~\eqref{KM-energy}). Then
  \begin{equation}\label{Kuramoto-converge}
    \cE=\Glim{n\to\infty} \cJ_n.
    \end{equation}
  Moreover, suppose $(u^n)$ is a sequence of minimizers of  $(\cJ_n)$ converging to  $u^\ast\in X$, then
  $u^\ast$ is a minimizer of $\mathcal{E}$ and
	\begin{equation}\label{Kuramoto-minimizers}
	\lim_{n\to\infty} \cJ_n(u^n) = 	\cE(u^\ast).
	\end{equation}
\end{theorem}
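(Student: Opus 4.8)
The plan is to verify the two $\Gamma$-convergence conditions C1 and C2 of Definition~\ref{def.Gamma} directly, and then deduce the convergence of minimizers from the fundamental theorem of $\Gamma$-convergence. The only analytic input needed is the elementary two-sided estimate, valid for every $\theta\in\R$,
\begin{equation*}
  \frac{\theta^2}{2}\Bigl(1-\frac{\pi^2\theta^2}{3}\Bigr)\;\le\;\frac{1}{4\pi^2}\bigl(1-\cos(2\pi\theta)\bigr)\;\le\;\frac{\theta^2}{2},
\end{equation*}
which follows from $\tfrac{x^2}{2}-\tfrac{x^4}{24}\le 1-\cos x\le\tfrac{x^2}{2}$ with $x=2\pi\theta$. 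Writing $\delta^n_{ij}\doteq u_n(v_j)-u_n(v_i)$ and $\eta_n\doteq\max_{(i,j)\in E_n}|\delta^n_{ij}|$, applying these estimates edge-by-edge against the weights $(5/3)^n$ and comparing with \eqref{KM-energy} yields, respectively, the clean pointwise bound $\cJ_n(f)\le\cE_n(f)$ for every $f$, and the one-sided bound $\cJ_n(u_n)\ge\bigl(1-\tfrac{\pi^2}{3}\eta_n^2\bigr)\cE_n(u_n)$. Condition C2 is then immediate with the \emph{constant} recovery sequence $u_n\equiv u$: combining $\cJ_n(u)\le\cE_n(u)$ with the monotone convergence $\cE_n(u)\nearrow\cE(u)$ (cf.~\eqref{monotone-conv}) gives $\cJ_n(u)\le\cE(u)$ for all $n$, hence $\limsup_n\cJ_n(u)\le\cE(u)$ (the case $\cE(u)=\infty$ being vacuous).

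The substance is the liminf inequality C1: given $u_n\to u$ in $(C(K,\R),d)$ (cf.~\eqref{sup-metric}), show $\cE(u)\le\liminf_n\cJ_n(u_n)$. The first and crucial observation is that the finest-scale increments vanish uniformly, $\eta_n\to0$: each edge of $\Gamma_n$ joins two vertices of a common $n$-cell of diameter $2^{-n}$, so uniform continuity of $u$ on the compact set $K$ gives $\max_{(i,j)\in E_n}|u(v_i)-u(v_j)|\le\omega_u(2^{-n})\to0$, and adding the uniform error $2\,d(u_n,u)\to0$ yields $\eta_n\to0$, whence $1-\tfrac{\pi^2}{3}\eta_n^2\to1$. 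Now fix $m$; for $n\ge m$ the monotonicity of the Dirichlet forms, $\cE_m\le\cE_n$ (cf.~\eqref{variational-extension}), combined with the one-sided bound gives, for all $n$ large enough that the prefactor is nonnegative,
\begin{equation*}
  \cJ_n(u_n)\;\ge\;\Bigl(1-\tfrac{\pi^2}{3}\eta_n^2\Bigr)\cE_n(u_n)\;\ge\;\Bigl(1-\tfrac{\pi^2}{3}\eta_n^2\Bigr)\cE_m(u_n).
\end{equation*}
Since $\cE_m$ depends only on the finitely many values $\{u_n(v)\}_{v\in V_m}$ and $u_n\to u$ uniformly, $\cE_m(u_n)\to\cE_m(u)$; taking $\liminf_n$ and using $\eta_n\to0$ gives $\liminf_n\cJ_n(u_n)\ge\cE_m(u)$, and letting $m\to\infty$ with $\cE_m(u)\nearrow\cE(u)$ proves C1 (correctly returning $+\infty$ when $\cE(u)=\infty$). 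I expect this step to be the main obstacle, since $\cJ_n$ is a sum over the \emph{finest} edges $E_n$ whereas $\cE$ is the increasing limit of the \emph{coarse} forms $\cE_m$, and the lower-bound constant in the $1-\cos$ estimate fails to match $\tfrac12$ for large arguments; both difficulties are neutralized precisely by the two devices above, namely monotonicity $\cE_m\le\cE_n$ and $\eta_n\to0$.

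For the convergence of minimizers I would run the standard argument. By C1, $\cE(u^\ast)\le\liminf_n\cJ_n(u^n)$. For any competitor $v\in X$, the constant recovery sequence of C2 gives $\cJ_n(v)\le\cE(v)$ eventually, and minimality of $u^n$ gives $\cJ_n(u^n)\le\cJ_n(v)$, so $\limsup_n\cJ_n(u^n)\le\cE(v)$; any pointwise constraint (the value at $v_1$, or boundary data on $V_0\subset V_n$) is automatically inherited by the constant sequence, so the comparison stays within the admissible class. Taking $v=u^\ast$ collapses the chain $\cE(u^\ast)\le\liminf_n\cJ_n(u^n)\le\limsup_n\cJ_n(u^n)\le\cE(u^\ast)$, which proves \eqref{Kuramoto-minimizers}; letting $v$ range over $X$ shows $\cE(u^\ast)\le\cE(v)$ for all $v$, i.e. $u^\ast$ minimizes $\cE$.
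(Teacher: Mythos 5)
Your proof is correct, and its key step differs genuinely from the paper's. The paper also uses the constant recovery sequence and the same standard minimizer argument, but for the liminf inequality it expands $1-\cos$ via Taylor's theorem with remainder $g$, invokes $\Glim{}\cE_n=\cE$ as a cited fact, and controls the error term $\left(\frac{5}{3}\right)^n\sum g(\delta^n_{ij})(\delta^n_{ij})^2/2$ through the a priori H\"older estimate of Lemma~\ref{lem.SG_holder}, which requires a uniform bound on $\cE(u^n)$ along the sequence (justified there by appealing to ``continuity of $\cE$,'' a delicate point, since $\cE$ is only lower semicontinuous with respect to the supremum metric and uniform convergence alone does not bound $\cE(u^n)$). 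You sidestep that machinery entirely: your two-sided cosine bound gives $\cJ_n\le\cE_n$ pointwise and $\cJ_n(u_n)\ge(1-\frac{\pi^2}{3}\eta_n^2)\cE_n(u_n)$, the vanishing of the finest-scale increments $\eta_n\to 0$ follows just from uniform continuity of the limit plus uniform convergence (using the same geometric fact $|v_i-v_j|=2^{-n}$ on edges that the paper uses), and the chain $\cE_n(u_n)\ge\cE_m(u_n)\to\cE_m(u)\nearrow\cE(u)$ replaces both the cited $\Gamma$-convergence of $(\cE_n)$ and the H\"older/energy-bound step, since each $\cE_m$ is a continuous function of finitely many point values. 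What your route buys: it needs no regularity of the approximating sequence, it handles $\cE(u)=+\infty$ cleanly (the paper's argument is written only for $u\in\operatorname{dom}(\cE)$), and it avoids the questionable uniform energy bound. What the paper's route buys: the H\"older-estimate mechanism is exactly what gets recycled in Section~\ref{sec.proof} on the covering space (via Lemma~\ref{lem:holder_sg_lift}, where adjacent vertices lie in a common $m$-cell), so the paper's proof transfers verbatim to the $\T$-valued setting — though your uniform-continuity argument would adapt there as well with the same cell-wise localization.
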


	\begin{proof}
        By the Taylor's expansion of $\cos$, 
	\begin{align}\nonumber
          \cJ_n(u) &= \left(\frac{5}{3}\right)^n \sum_{(j,i)\in E_n}  \frac{(u(v_j)-u(v_i))^2}{2}
                     \left[1+g(u(v_j)-u(v_i))\right]\\
          \label{Taylor-cos}
          &= \cE_n(u)+ \left(\frac{5}{3}\right)^n \sum_{(j,i)\in E_n} g(u(v_j)-u(v_i))\frac{(u(v_j)-u(v_i))^2}{2},
	\end{align}
	where $g(x)=-\frac{2(2\pi x)^2}{4!}+\frac{2(2\pi x)^4}{6!}-\cdots$. Using the alternating
        series remainder bound,
        we have
	\begin{equation} \label{g-bound1}
		|g(x)|\leq \frac{\pi^2|x|^2}{3}.
	\end{equation} 
	
	Suppose $u^n\to u\in\operatorname{dom}(\cE)$. Then
	\begin{align*}
          \liminf_n \cJ_n(u^n) &\geq \liminf_n \cE_n(u^n) +
                                 \liminf_n \left(\frac{5}{3}\right)^n\sum_{(j,i)\in E_n}   g(u^n(v_j)-u^n(v_i))
                                 \frac{(u^n(v_j)-u^n(v_i))^2}{2}\\
                               &\geq \mathcal{E}(u)+
                                 \liminf_n \left(\frac{5}{3}\right)^n \sum_{(j,i)\in E_n} g(u^n(v_j)-u^n(v_i))\frac{(u^n(v_j)-u^n(v_i))^2}{2},
	\end{align*}
	where we used $\Glim{n\to\infty} \cE_n =\cE$.
        
	Using \eqref{Taylor-cos} and \eqref{g-bound1}, we continue
	\begin{align}
          &\left|\left(\frac{5}{3}\right)^n\sum_{(j,i)\in E_n} g(u^n(v_j)-u^n(v_i))
          \frac{(u^n(v_j)-u^n(v_i))^2}{2}\right| \label{eq:liminf_ineq}\\
          &\leq \max_{j\sim_n i} |g(u^n(v_j)-u^n(v_i))|\left(\frac{5}{3}\right)^n
           \sum_{(j,i)\in E_n} \frac{(u^n(v_j)-u^n(v_i))^2}{2} \nonumber\\
          &\leq C\max_{j\sim_n i}|u^n(v_j)-u^n(v_i)|^2\left(\frac{5}{3}\right)^n
           \sum_{(j,i)\in E_n} \frac{(u^n(v_j)-u^n(v_i))^2}{2} \nonumber\\
              &\leq C \left[\max_{j\sim_n i} \frac{|u^n(v_j)-u^n(v_i)|^2}{|v_j-v_i|^{2\beta}}\right]
               \max_{j\sim_n i}|v_j-v_i|^{2\beta} \cE_n(u^n) \nonumber\\
         &\leq C (\mathcal{E}(u^n))^2\max_{j\sim_n i}|v_j-v_i|^{2\beta} ,\nonumber
	\end{align}
        where we used Lemma \ref{lem.SG_holder} to obtain the bound in the last line.
       
	Since $u^n\to u$, the continuity of $\mathcal{E}$ implies a uniform bound on
        $\cE(u^n)$.
        Finally, by construction of $\Gamma_n$,
        $$
        |v_j-v_i | = 2^{-n}, \quad \mbox{ when } j\sim_n i.
        $$
        %where $C$ depends only on $v_1, v_2,$ and $v_3$.
        Thus, $ \max_{j\sim_n i} |v_j-v_i|^{2\beta}\to 0$ as $n\to \infty$, and 
	$$
        \lim_{n\to \infty} \left(\frac{5}{3}\right)^n \sum_{j\sim_n i} g(u^n(v_j)-u^n(v_i))
        \frac{(u^n(v_j)-u^n(v_i))^2}{2}= 0.
        $$
	We conclude that $\liminf \cJ_n(u^n)\geq \mathcal{E}(u)$. This verifies the lower bound in the
        definition of $\Gamma$-convergence (cf.~C1. Definition~\ref{def.Gamma}).

        To verify the upper bound, we need to show that for an arbitrary $u \in X$ there is a recovery sequence
        $u^n\to u$ such that $\lim_{n\to\infty} \cJ_n(u^n)=\cE(u^\ast)$.
        To this end, let $u^n=u^\ast$ be a constant sequence.
        Arguing as above, we show that
        \begin{equation}\label{constant-seq}
          \lim_n\cJ_n(u)= \mathcal{E}(u).
          \end{equation}
        This completes the proof of $\Gamma$-convergence of $\cJ_n$.
	
	Finally, the fact that convergent sequences of minimizers converge to a minimizer of the
        $\Gamma$-limit is a standard result (see, e.g.,
        \cite{Braides-beginners}). Below we include a short proof of
        this statement for completeness.

	First, by $\Gamma$-convergence of $(\cJ_n)$, for $u \in X$ there is a recovery sequence
        $u^n\to u$ such that $\limsup_{n} \cJ_n(u^n)\leq \cE(u)$. Thus,
	\begin{equation*}
		\limsup_n \min_{w\in X} \cJ_n(w)\leq \limsup_n \cJ_n(u^n)\leq \cE(u).
	\end{equation*}
	Since $u\in X$ is arbitrary,
	\begin{equation}\label{eq:limsupEst1}
		\limsup_n \min_{w\in X} \cJ_n(w) \leq \inf_{u\in X} \mathcal{E}(u).
	\end{equation}
	
	Suppose now that $u^n$ is a sequence of minimizers of $\cJ_n$ with $u^{n}\to u$.
        Using $\Gamma$-convergence  of $(\cJ_n)$ again, we have
	\begin{equation*}
          \inf_{X} \mathcal{E} \leq \mathcal{E}(u) \leq \liminf_n\cJ_n(u^n) =
          \liminf_n \min_{X} \cJ_n\leq \limsup_n \min_{X} \cJ_n\leq \inf_{X}\cE,
	\end{equation*}
	where the final inequality follows from \eqref{eq:limsupEst1}. Thus,
	$$
        \min_{X} \cE=\cE(u) = \lim_n \cJ_n(u^n).
        $$
\end{proof}

The key implication of the $\Gamma$-convergence of $(\mathcal{E}_n)$ and $(\mathcal{J}_n)$ to $\mathcal{E}$ is that any limit point of a sequence of minimizers
of $(\mathcal{E}_n)$ or $(\mathcal{J}_n)$ is a minimizer of $\mathcal{E}$, i.e., a harmonic function on
the SG. For $(\mathcal{E}_n)$, one can easily find the minimizers by either solving
a system of linear equations
or by harmonic extension. It is straightforward to show that the
sequence of these minimizers converges to a harmonic function on SG.
% Alternatively, the minimizers can be constructed using the harmonic
% extension algorithm (cf.~\cite{Str06}).
% In either case, one obtains a convergent sequence of minimizers of $(\mathcal{E}_n)$.
The situation for the Kuramoto energy functionals $(\mathcal{J}_n)$ is more subtle: the minimization
does not reduce to a linear system, and the harmonic extension method is not applicable. Nevertheless,
using general properties of $\Gamma$-convergence, below we show the existence of a sequence
of minimizers of $(\mathcal{J}_n)$ that converges to a harmonic function on SG.
% A more refined analysis, such as using the Newton-Kantorovich method, would require
% new estimates on the eigenvalues of the Jacobian evaluated on harmonic splines on SG.
% We defer this technical investigation to future work.

\begin{theorem}\label{thm:conv_minSG}
  Let $u^*$ be the harmonic function on the SG subject to the following boundary condition
  $u^*|_{V_0}=\phi\in L(V_0,\R)$.
  Then there exists a sequence of stable equilibria of the KM on $\Gamma_n$, $u^n$, with $u^n|_{V^0}=\phi$ so that
  $\max_{x\in V_n}\|u^n(x)-u^*(x)\|\to 0$ as $n\to\infty$.
\end{theorem}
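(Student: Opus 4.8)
The plan is to realize the equilibria $u^n$ as \emph{constrained} minimizers of $\cJ_n$ in a fixed sup-norm neighborhood of $u^*$, and then to verify that for large $n$ the constraint is inactive, so that $u^n$ is an interior local minimizer of $\cJ_n$ --- equivalently, a Lyapunov-stable equilibrium of the gradient flow \eqref{KM-grad} with the value on $V_0$ frozen to $\phi$. Fix a small $\delta\in(0,1/8)$ and set
\[
  K_n^\delta \doteq \bigset{u\in L(V_n,\R):\ u|_{V_0}=\phi,\ \max_{x\in V_n}|u(x)-u^*(x)|\le\delta}.
\]
Each $K_n^\delta$ is compact and contains $u^*|_{V_n}$, so $\cJ_n$ attains its minimum on $K_n^\delta$ at some $u^n$; by construction $\max_{V_n}|u^n-u^*|\le\delta$.

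The heart of the argument, and the step I expect to be the main obstacle, is a uniform bound $\sup_n\cE_n(u^n)<\infty$. The difficulty is that $\cJ_n$ is periodic and nonconvex, so a priori a minimizer could exploit the periodicity and have edge differences near $\pm\tfrac12$, decoupling $\cJ_n$ from the Dirichlet energy $\cE_n$. The local constraint rules this out: for adjacent $v_i\sim_n v_j$,
\[
  |u^n(v_j)-u^n(v_i)| \le 2\delta + |u^*(v_j)-u^*(v_i)| \le 2\delta + C\,2^{-n\beta}\sqrt{\cE(u^*)},
\]
where the last estimate is Lemma~\ref{lem.SG_holder} together with $|v_i-v_j|=2^{-n}$. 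Hence for $n$ large all edge differences lie in $[-\tfrac12,\tfrac12]$, where $1-\cos(2\pi x)\ge 8x^2$, so that $\cJ_n\ge\frac{4}{\pi^2}\cE_n$ on $K_n^\delta$. Comparing with the feasible competitor $u^*|_{V_n}$ and invoking the constant-sequence identity \eqref{constant-seq} gives $\cJ_n(u^n)\le\cJ_n(u^*|_{V_n})\to\cE(u^*)$, which yields $\cE_n(u^n)\le\frac{\pi^2}{4}\bigpar{\cE(u^*)+1}$ for all large $n$.

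With the energy bound in hand, I would extend each $u^n$ to $\tilde u^n\in\operatorname{dom}(\cE)$ by harmonic extension on all finer cells; by \eqref{variational-extension} this preserves energy, $\cE(\tilde u^n)=\cE_n(u^n)$, so $(\tilde u^n)$ has uniformly bounded energy. Lemma~\ref{lem.SG_holder} then makes $(\tilde u^n)$ uniformly $\beta$-Hölder and (using the fixed value on $V_0$) uniformly bounded, so Arzelà--Ascoli yields a subsequence converging in $(C(K,\R),d)$ to some $w$; since the constraint passes to the dense set $V_*$ and then to $K$ by continuity, $w|_{V_0}=\phi$ and $\max_K|w-u^*|\le\delta$. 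Because $\cJ_n(\tilde u^n)=\cJ_n(u^n)$, the liminf inequality (C1 in Definition~\ref{def.Gamma}) together with the $\Gamma$-convergence of $(\cJ_n)$ (Theorem~\ref{thm:SGThm}) gives $\cE(w)\le\liminf_n\cJ_n(u^n)\le\cE(u^*)$. As $u^*$ is the \emph{unique} minimizer of $\cE$ subject to $f|_{V_0}=\phi$ (strict convexity of the Dirichlet form on the interior values), this forces $w=u^*$. Every subsequential limit therefore equals $u^*$, so the full sequence satisfies $\tilde u^n\to u^*$ uniformly, and in particular $\max_{x\in V_n}|u^n(x)-u^*(x)|\to0$.

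Finally, the uniform convergence forces $\max_{V_n}|u^n-u^*|<\delta$ for all large $n$, i.e.\ $u^n$ lies in the interior of $K_n^\delta$. Hence $u^n$ is an \emph{unconstrained} local minimizer of $\cJ_n$ among functions with $u|_{V_0}=\phi$; it thus satisfies the KM equilibrium equations at every interior node, and --- being a local minimum of the Lyapunov function $\cJ_n$ for the gradient flow \eqref{KM-grad} --- it is a stable equilibrium, as required.
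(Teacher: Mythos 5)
Your proposal is correct, and while it shares the paper's overall skeleton (constrained minimization of $\cJ_n$, $\Gamma$-convergence, H\"older compactness, uniqueness of the harmonic limit, and showing the constraint is inactive for large $n$), the mechanics are genuinely different. The paper minimizes $\cJ_n$ over the harmonic splines $X^n$ intersected with a closed ball in the \emph{H\"older norm} $\|\cdot\|_\beta$ around $u^*$; compactness in $C(K,\R)$ is then built into the constraint via the compact embedding, no quantitative relation between $\cJ_n$ and $\cE_n$ on the constraint set is needed, and interiorness is argued by contradiction from minimizers sitting on the H\"older sphere. You constrain instead in the sup norm on $V_n$, where existence of minimizers is trivial by finite-dimensionality, but a sup-norm ball carries no modulus of continuity, so you must manufacture compactness yourself: that is exactly what your coercivity step does. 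The observation that the pointwise constraint together with Lemma~\ref{lem.SG_holder} applied to $u^*$ confines edge differences to $[-1/2,1/2]$, where $1-\cos(2\pi x)\geq 8x^2$ and hence $\cJ_n\geq \frac{4}{\pi^2}\cE_n$, yielding the uniform bound $\cE_n(u^n)\leq \frac{\pi^2}{4}\bigl(\cE(u^*)+1\bigr)$ via the competitor $u^*|_{V_n}$ and \eqref{constant-seq}, is the one ingredient absent from the paper, and it is sound; note it is genuinely needed, since the converse of Lemma~\ref{lem.SG_holder} fails (a $\beta$-H\"older bound does not control $\cE_n$), so your detour through the energy bound and the energy-preserving harmonic-spline extension \eqref{variational-extension} is the right way to recover equicontinuity. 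Your endgame is also cleaner in one respect: proving full-sequence convergence $\max_{V_n}|u^n-u^*|\to 0$ makes the sup-norm constraint strictly inactive for large $n$ by inspection, whereas the paper's contradiction with $\|u^{n_j}-u^*\|_\beta=\varepsilon$ is more delicate, because the H\"older norm is not continuous under uniform convergence, so uniform convergence of $u^{n_j}$ to $u^*$ does not by itself contradict remaining on the H\"older sphere. What the paper's route buys in exchange is brevity: no coercivity inequality, no smallness condition $\delta<1/8$, and no edge-difference bookkeeping. One point you leave at the same level of rigor as the paper: the passage from ``local minimizer of $\cJ_n$ with the boundary values frozen'' to ``stable equilibrium'' is immediate only for isolated or strict minima; in general it relies on analyticity of $\cJ_n$ (\L{}ojasiewicz-type arguments), which neither you nor the paper spells out.
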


\begin{proof}
  Let $X$ be the space of H\"{o}lder continuous functions on $K$ with exponent $\beta=\frac{\log(5/3)}{2\log 2}$
  equipped with the norm
  \begin{equation}\label{Holder-norm}
    \|v\|_\beta \doteq \sup_{x\in K}|v(x)| + \sup_{x\neq y} \frac{|v(y)-v(x)|}{|x-y|^\beta}.
  \end{equation}

  Let $\varepsilon>0$ be arbitrary but fixed. Below, we prove that for all  sufficiently
  large $n\in\N$, $\cJ_n$ has a local minimum in
$$
  B_\varepsilon(u^*)=\{f\in X\colon f|_{V_0}=\phi,\; \|f -u^*\|_{\beta}< \varepsilon\}.
  $$
  
A continuous function on $K$, whose restriction to each $n$-cell of
$K$ is a harmonic function is called a harmonic spline (cf.~\cite{Str06}).
Consider the space of harmonic splines on $K$,
\begin{equation}\label{harm-splines}
  X^n := \{f \in C(K,\R) : f\circ F_w \mbox{ is harmonic for all } |w|=n \},
\end{equation}
where $|w|$ stands for the length of the symbolic string $w=(w_1,w_2,\dots,w_n)\in S^n$.

             Then $X^n$ is a finite-dimensional subspace of $X$ and $X_n\cap \overline{B_\varepsilon(u^*)}$ is a compact
              subset of $X$, as a closed and bounded subset of a finite-dimensional subspace.
              For every  $n\in\N$, $J_n$ is a continuous functional on $X$. Thus, it achieves its minimum
              on $X_n\cap \overline{B_\varepsilon(u^*)}$. Denote this minimum by $u^n$. In case of multiple minima,
              whenever possible we choose the one lying in $B_\epsilon(u^*)$.

              It remains to show that $u^n\in B_\epsilon(u^*)$ for all but possibly finite number of $n\in\N$. That is, we show that for all $n$ sufficiently large there is a local minimizer, $u^n$ of $\cJ_n$ satisfying $\|u^n-u^*\|_{C(K)}<\varepsilon$, which will complete the proof.
              
              Suppose on the contrary that there are infinitely many $n$ such that
              $\|u^n-u^*\|_\beta=\varepsilon$.
%           After restricting to this subsequence, we have
%	\begin{equation*}
%          \min \{\cJ_{n_k}(f)\colon \|f-u\|_{\beta}=\varepsilon\}<\inf\{\cJ_{n_k}(f)\colon\;
%          \|f-u\|_{\beta}< \varepsilon\}\leq \cJ_{n_k}(u).
%	\end{equation*}
Thus, there is an infinite subsequence $u^{n_k}\in C(K,\R)$ such
        that $\|u^{n_k}-u^*\|_\beta=\varepsilon$ and $\cJ_n(u^{n_k})< \cJ_{n_k}(u^*)$.
	The sequence $\{u^{n_k}\}$ is  bounded in $X$ and using the compactness of the embedding
        $X$ in $C(K,\R)$, we can extract a subsequence
        $u^{n_j}$ converging to $\bar{u}$  in the norm of $C(K,\R)$.

      %  Since $u$ is the point of minimum of $\cE$ on $X$, we have $\mathcal{E}(u)\leq \mathcal{E}(\bar{f})$.
        By construction, the harmonic $u^*\in C(K,\R)$ satisfies the following inequalities
        $
 \cE_n(u^*) \le  \cE_n(\bar u)\; \forall n\in\N.
        $
        Therefore, $\cE (u^*) \le  \cE_n(\bar u)$ (cf.~\eqref{monotone-conv}).
        On the other hand, by Theorem~\ref{thm:SGThm},
	$$
        \mathcal{E}(\bar{u})\leq \liminf \cJ_n(u^{n_j})\leq \liminf \cJ_n(u^*)=\mathcal{E}(u^*),
        $$
        where the first inequality follows from
        \eqref{Kuramoto-converge} and the second from \eqref{Kuramoto-minimizers}.
        We conclude that $\mathcal{E}(\bar{u})=\mathcal{E}(u^*)$ and $\bar{u}=u^*$.
        This contradicts the assumption that
        $\|u^n-u^*\|_\beta=\epsilon$.	
      \end{proof}

\begin{remark}
Both Theorem~\ref{thm:conv_minSG} and Theorem~\ref{thm.main-sg} claim
the existence of families of equilibria in the discrete KM models converging to harmonic
functions on~$K$.
However, the settings of the two theorems differ. In the former case,
we consider real-valued solutions subject to Dirichlet boundary conditions, whereas in
the latter, we deal with $\T$-valued solutions for the model with free boundary
conditions. Nonetheless, the proof of Theorem~\ref{thm:conv_minSG} provides a roadmap
that will be useful in addressing the more challenging setting of Theorem~\ref{thm.main-sg}.
\end{remark}

       \section{The KM on nearest-neighbor graphs}\label{sec.near}
      \setcounter{equation}{0}

      In the previous section, we proved $\Gamma$-convergence of the Kuramoto energy to the Dirichlet
energy as $n \to \infty$.
Until now, we have treated solutions of the KM as real-valued
functions. It is time to recognize the importance of viewing them as maps from their respective
domains to the circle. To extend $\Gamma$-convergence techniques to $\T$-valued maps, we
incorporate several geometric constructions developed in~\cite{MM2024}. To make the
transition to the $\T$-valued setting more intuitive, we first analyze the KM on the unit circle.
The fact that a circle, an interval, and a cube are self-similar sets was used effectively
in~\cite{Str06} to elucidate analysis on fractals.
Similarly, we begin the analysis of the KM on SG by
first considering it on nearest-neighbor graphs.
Besides its pedagogical value, the KM
on nearest-neighbor graphs is of independent interest, as it features prominently in many applications
(see, e.g.,~\cite{WilStr06, BMS2025, MedZhu12}).

      In this section, consider the KM on the nearest-neighbor graph
      \begin{equation}\label{KM-nn}
        \dot u^n(t,v_i)=2^n\sum_{j\sim_n i} \sin\left( 2\pi(u^n(t,v_j)-u^n(t,v_i))\right),\quad i\in\left\{0, 1, \dots, 2^n-1 \right\},
      \end{equation}
      where
      \begin{equation}\label{nn-adj}
        j\sim_n i \Leftrightarrow j=i\pm 1 \mod 2^n\end{equation}

        Since we are interested in the continuum limit of \eqref{KM-nn} as $n\to\infty$, it is convenient
        to interpret $u^n(t,v_i)$ as the values of $u^n(t,x)$ at $x=i2^{-n}$ where $i$ runs from $0$ to $2^n-1$.
        Specifically,
        we represent the spatial domain of $u^n(t,x)$ as a discretization of the unit interval with the two end points
        identified
        $I=[0,1]\setminus \{0\simeq 1\}$ by the set of points
        \begin{equation}\label{vertex-nn}
        V_n=\{0, 2^{-n}, 2\cdot 2^{-n},\dots, (2^{n}-1)2^{-n}\}=:\{v_0,v_1,\dots\}.
        \end{equation}
        Accordingly, we view \eqref{KM-nn} as a dynamical system on the nearest neighbor graph $\Gamma^n$
        with the vertex set $V_n$ \eqref{vertex-nn} and the adjacency relation \eqref{nn-adj}.

        We will argue that the natural candidate for the continuum limit of \eqref{KM-nn}
        is the heat equation
        \begin{equation}\label{heat-c}
          \partial_t u=\Delta u, 
          \end{equation}
       on $(0,1)$ with periodic boundary condition
          $
          u(0)=u(1)
          $
          \footnote{In \cite{Groi2025}, the heat equation was
          identified as a continuum limit of the 
           KM on geometric random graphs.}.
          Here, we view $u(t,\cdot)$ as a map from $\T$ to $\T$. Thus, the steady states of \eqref{heat-c}
          are harmonic maps from $\T$ to $\T$:
          \begin{equation}\label{1d-hm}
            \Delta u =0, \quad u(0)=u(1)=0.
          \end{equation}
          Here, we set $u(0)=0$ to eliminate translation invariance.

          If $u$ were real-valued then $u\equiv 0$ would be a unique solution of \eqref{1d-hm}. In the class
          of $\T$-valued functions, \eqref{1d-hm} has infinitely many solutions:
          \begin{equation}\label{twist}
            u(x)= qx\mod 1,\quad q\in\Z,
          \end{equation}
          i.e.,   one solution per homotopy class, determined by the degree  $q$. Such solutions are called
          $q$-twisted states after \cite{WilStr06}.

          Going back to the KM, it is easy to find discrete counterparts of \eqref{twist}:
          \begin{equation}\label{disc-twist}
            u(v_i)=qi2^{-n}\mod 1, \quad i\in [2^n].
          \end{equation}
          In the case of the KM on the circle, \eqref{disc-twist} clearly converges to \eqref{twist} as $n \to \infty$.
However, for other self-similar sets, e.g., for SG, the relation between the solutions of the KM and those
of its continuum limit is not as transparent. 
In preparation for the analysis of the KM on SG and other
fractals, we extend below the $\Gamma$-convergence techniques from the previous section
to cover $\T$-valued solutions.

\begin{remark}
    Below, we focus on nearest-neighbor ring graphs to make the connection with the KM on
    fractals, which is our ultimate goal. However, the analysis in the remainder of this section can
    be naturally extended to cover $k$-nearest-neighbor coupling. If $k\gtrsim O(1)$
    the results can be further extended to random networks
    (e.g., small-world graphs) using
    approximation results for dynamical models on random networks \cite{Med19}.
    In \cite{Groi2025, CirGro2024}, twisted states on geometric random graphs are
    analyzed using a different, albeit related, approach.
    \end{remark}

\subsection{Harmonic maps on $\T$}\label{sec:mapsonT}
          Before we begin, we recall a few basic facts about continuous maps from $\T$ to $\T$ (see also Section \ref{sec.intro}).
          For a given $f\in C(\T,\T)$, there is a unique $\hat f\in C(\R,\R)$ such that
          $$
          \pi\circ \hat f(x)= f\circ \pi (x)\quad \forall x\in\R,
          $$
          where $\pi(x)\doteq x\mod 1$. $\hat f$ is called the \textit{lift}  of $f$.

          The integer $\hat f(1)-\hat f(0)$ is called the \textit{degree of} $f$. The degree counts the number of
          rotations around the circle made by  the trajectory $(f(t), \;0\le t\le 1)$.
          By the Hopf degree theorem, the degree determines the homotopy class
          of continuous maps from $\T$ to $\T$.

          We now explain our strategy for extending the $\Gamma$-convergence techniques for the model at hand.
          \begin{description}
          \item[I.] Let $q\in\Z$ be fixed. For a homotopy class defined by $q$, we  construct the
            covering of the circle. Define
            \begin{align*}
              \T_\times & \doteq \T\times \Z,\\
              \T^k&\doteq \T\times \{k\}\subset \T_\times, \; k\in \Z\qquad\mbox{(sheets)}
            \end{align*}
            Cut $\T^k$ at $0\times \{k\}$ producing two copies $0^k$ and $1^k$. Identify $1^k\simeq 0^{k+q}$. Then $\tilde{\T}\doteq \T_\times/\simeq$ is the covering space, and $\T^0$ is called the \textit{fundamental domain}, see Figure \ref{fig:cs_circle}. Discretize $\T^0$ by $V_n^0 = \{0,2^{-n},\dots, (2^n-1)2^{-n},1\}$, where we omit superscripts on vertex points for notational ease. 
             \begin{figure}[h]
             	\centering
             	\includegraphics[width = .8\textwidth]{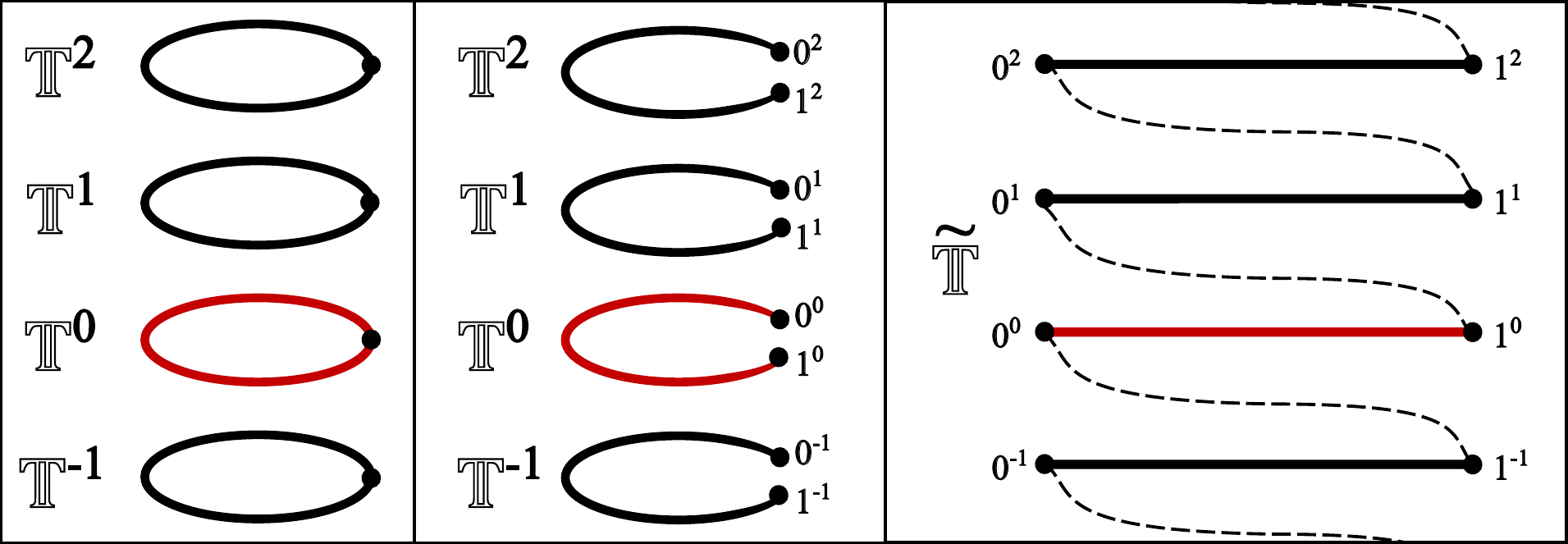}
             	\caption{Construction of the covering space $\tilde{\T}$ for $q=1$. The fundamental domain $\T^0$ is highlighted in red.}
             	\label{fig:cs_circle}
             \end{figure}
             \item[II.] On the fundamental domain, reformulate the boundary value problem for the discrete Laplacian
               for the lift of $u$:
               \begin{equation}\label{bvp-fundamental}
\Delta_n \hat u^{n}=0, \quad \hat u^{n}(0)=0, \; \hat u^{n}(1)=q.
              \end{equation}    
               In one-dimension, this problem can be solved immediately:
              \begin{equation}\label{discrete-twist}
\hat u^{n}(v_i)=qi 2^{-n},\quad i\in \{0,1,\dots, 2^n\}.
              \end{equation}
           %    It can also be solved inductively in $n$ by harmonic extension, which in the case of the interval is straightforward
           %    (see \cite{Str06}). For SG and other p.c.f. fractals, harmonic extension is going to be  the only way 
           %    available to us. 
%Extend to other sheets of the covering space:
%$$
%\hat u^{k,n}=u^{0,n}+k, \quad k\in\Z.
%$$
%This yields the (real-valued) steady-state solution of the discrete Laplace equation on the covering space.
             \item[III.] We have computed the value of solutions on a dense subset of the
               fundamental domain $V^\ast=\bigcup_{k\in \Z} V^0_n$. Noting that the discrete twisted states are
               \eqref{discrete-twist} are Lipschitz continuous on $V^\ast$, we extend by them by continuity
               to the rest of the fundamental domain, to obtain the harmonic
               function on the fundamental domain, denoted
               $
\hat u^{\ast}.
$
Extend these solutions to other sheets of the covering space:
$$
{\bf \hat u^{\ast}}(x,k)=\hat u^{\ast}(x)+k, \quad k\in\Z.
$$
Then ${\bf \hat u^{\ast}}$ is a harmonic function on the covering space.
\item[IV.] Finally, project the range of the harmonic function ${\bf \hat u^{\ast}}$ back to $\T$-values:
\begin{equation*}
	u^* := {\bf \hat u^{\ast}}|_{\T^0}\mod 1.
\end{equation*}
Then $u^*$ is the desired harmonic map on $\T$. %That is, it is a minimizer of the discrete energies
%\begin{equation*}
%	\cE_n(u) = 2^n\sum_{j\sim_n i} \frac{d_\T(u_j,u_i)^2}{2},
%\end{equation*}
%for all sufficiently large $n,$ where $d_\T$ is the geodesic distance on $\T$ (see \cite{MM2024}). Of course, it is clear that $u^*$ is none other than the $q$-twisted state \eqref{twist}.
\end{description}

\subsection{$\Gamma$-convergence on the covering space}

In Section~\ref{sec.gamma}, we demonstrated convergence of the energy functional in the KM
to the Dirichlet energy, from which we derived convergence of stable equilibria of the KM to
harmonic functions. To derive these statements to the KM on the nearest-neighbor graph, we first need
to translate the analysis to the real-valued setting. To this end, we use the procedure described in the previous subsection.
Specifically, we first lift the $\T$-valued solutions on $\T$ and their discretizations to $\R$-valued functions
on the covering spaces. Then we replicate the arguments from
Section~\ref{sec.gamma} to obtain desired convergence of the energy functionals and the equilibria on the fundamental domain, $\T^0$. 
%Finally, we will translate the results back to the original setting by restricting ranges back to $\T$.
%After that we will project back to the fundamental domain, i.e., to $\T^0$ to translate these result to the
%original setting of the KM.
% The key take-away of the previous section is the correspondence between harmonic maps
% to the circle and (real-valued) harmonic functions on $[0,1]$ (the fundamental domain of the covering space).
% There are two main advantages to working with real-valued harmonic functions: first, one can take
% advantage of the harmonic structure on the covering space, which inherits the same crucial properties
% such as harmonic extension, etc.  Second, we can utilize classical functional analysis results
% (e.g., H\"older continuity estimates) which are needed to establish $\Gamma$-convergence in this setting.

Discrete harmonic functions $ \hat u^{n}$ minimize
\begin{equation*}
	\cE_n({u}) := 2^n \sum_{(i,j)\in E_n} \frac{({u}(v_j)-{u}(v_i))^2}{2},\quad {u} = ({u}(v_0),{u}(v_1),\dots, {u}(v_{2^n})).
\end{equation*}
over $X_q^n=\{  f\in\R^{2^n+1}:\;  f(v_0)=0,\;  f(v_{2^n})=q\}$, the discrete approximation of the fundamental domain $\T^0$ with fixed jump condition.

Let $X_q$ be the space
\begin{equation*}
	X_q=\{ f\in C([0,1],\R) :  f(0)=0,\;  f(1)=q\}.
\end{equation*}
% As before, for any $f\in X_q$, one can naturally extend the definition of the $\cE_n$ to $f$ using projection operators. Then
As in \eqref{monotone-conv}, the limit
\begin{equation}\label{eq:e_nn}
	\cE( f) = \lim_{n\to\infty} \cE_n( f)
\end{equation}
is well-defined. The domain of $\cE$, $\operatorname{dom}(\cE)=H^1([0,1])$ \cite{Str06}.

To apply the techniques from the proof of Theorem~\ref{thm:SGThm}, we need only
establish H\"older continuity of functions in the domain of $\cE$, cf. Lemma \ref{lem.SG_holder}.
% Indeed, for $u\in \operatorname{dom}(\cE)$ then $u\in H^1([0,1])$, and
% \begin{equation*}
%	\cE(u) = \int_{[0,1]} |u'(x)|^2,
% \end{equation*}
% see, e.g., \cite{Str06}.
By Morrey's inequality  \cite{brezis2011functional}, we have 
\begin{equation*}
	\|{u}\|_{1/2}\leq C\|{u}\|_{H^1([0,1])},
      \end{equation*}
      where $\|\cdot\|_{1/2}$ is the $1/2$-H\"older norm, and $C$ is independent of $u$.
%Hence,
%\begin{equation}
%	\frac{|u(x)-u(y)|}{|x-y|^{1/2}} \leq C\sqrt{\cE(u)},\quad u\in\operatorname{dom}(\cE).
%\end{equation}
% Moreover, for $u\in X_q$, since $u(0)=0$ applying the Poincar\'e inequality\footnote{Specifically, apply the
%  Poincar\'e inequality to $w(x):=\left\{\begin{array}{cc}
%		u(x) & \mbox{ if } x\leq 1\\%
%		u(2-x) & \mbox{ if } x>1\end{array} \right.$ on $[0,2]$.} implies $\|u\|_{L^2([0,1])}\leq C\|u'\|_{L^2([0,1])}$, 
Thus,
$$
  \frac{|{u}(x)-{u}(y)|}{|x-y|^{1/2}}  \leq 	\|{u}\|_{1/2} \leq C_1 \|{u}\|_ {H^1([0,1])}
\le C_2 \|{u}'\|_{L^2([0,1])} = C_2 \sqrt{\mathcal{E}({u})}.
$$
Here, we also used the Poincare inequality to bound $\|{u}\|_ {H^1([0,1])}$.

Now we are in a position to use the arguments from Section~\ref{sec.gamma}.
Fix the space $X_q$ with the supremum metric. As before, $\Gamma-\lim \cE_n=\cE$ due to
the monotone pointwise convergence $\cE_n\nearrow \cE$ and continuity of the $\cE_n$.

Next, we define the Kuramoto energy on on the fundamental domain $\T^0$ 
\begin{equation} 
	\cJ_n({u}):= 2^n \cdot\frac{1}{4\pi^2}\sum_{(i,j)\in E_n} (1-\cos(2\pi ({u}(v_j)-{u}(v_i)))).\label{eq:km_energy_nn}
\end{equation} 
%then 
%$$ 
%\cJ_n 
%\rightarrow\; \min_{X^n_q}
%$$
%yields stable equilibria of the KM. 
Proceeding as in Theorem \ref{thm:SGThm}, we obtain $\Gamma$-convergence for (real-valued) energies on the fundamental domain.
\begin{theorem}\label{thm.nn-graphs}
  Let $(\cJ_n)$ be the Kuramoto energy functionals \eqref{eq:km_energy_nn} and 
  let $\cE$ be given by \eqref{eq:e_nn}. Then
\begin{equation}\label{Kuramoto-converge_nn}
	\cE=\Glim{n\to\infty} \cJ_n.
\end{equation}
Moreover, suppose $({u}^n)$ is a sequence of minimizers of  $(\cJ_n)$ converging to  ${ u}^\ast\in X_q$. Then $ u^\ast$ is a minimizer of $\mathcal{E}$ and
\begin{equation}\label{Kuramoto-minimizers_nn}
	\cE({ u}^\ast) = \lim_n \cJ_n({ u}^n).
\end{equation}
\end{theorem}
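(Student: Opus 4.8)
The plan is to transcribe the proof of Theorem~\ref{thm:SGThm} to the one-dimensional setting, replacing the scaling factor $(5/3)^n$ by $2^n$ and the H\"older exponent $\beta$ by $1/2$, with the H\"older estimate now supplied by the Morrey inequality established just above. First I would Taylor-expand the cosine in \eqref{eq:km_energy_nn}. Since both the prefactor $\frac{1}{4\pi^2}$ and the argument $2\pi(u(v_j)-u(v_i))$ are identical to the SG case, the same algebra as in \eqref{Taylor-cos} gives
\begin{equation*}
\cJ_n(u) = \cE_n(u) + 2^n\sum_{(i,j)\in E_n} g(u(v_j)-u(v_i))\frac{(u(v_j)-u(v_i))^2}{2},
\end{equation*}
with the \emph{same} remainder $g$ and the \emph{same} bound $|g(x)|\le \frac{\pi^2|x|^2}{3}$ from \eqref{g-bound1}.

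For the lower bound C1, I would take any sequence $u^n\to u$ with $u\in\operatorname{dom}(\cE)=H^1([0,1])$ and split $\liminf_n\cJ_n(u^n)$ into $\liminf_n\cE_n(u^n)\ge \cE(u)$ (using $\Glim{n}\cE_n=\cE$) plus the remainder term. To control the remainder I would factor out the discrete H\"older quotient exactly as in \eqref{eq:liminf_ineq}:
\begin{equation*}
\left|2^n\sum_{(i,j)\in E_n} g\,\frac{(u^n(v_j)-u^n(v_i))^2}{2}\right| \le C\left[\max_{j\sim_n i}\frac{|u^n(v_j)-u^n(v_i)|^2}{|v_j-v_i|}\right]\max_{j\sim_n i}|v_j-v_i|\;\cE_n(u^n),
\end{equation*}
where now $2\beta=1$. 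The Morrey--Poincar\'e estimate $\frac{|u(x)-u(y)|}{|x-y|^{1/2}}\le C\sqrt{\cE(u)}$ bounds the bracket by $C\cE(u^n)$, and together with $\cE_n(u^n)\le\cE(u^n)$ and the uniform bound on $\cE(u^n)$ coming from $u^n\to u$, this yields $\le C(\cE(u^n))^2\max_{j\sim_n i}|v_j-v_i|$. Since $|v_j-v_i|=2^{-n}\to 0$, the remainder vanishes, so $\liminf_n\cJ_n(u^n)\ge\cE(u)$, verifying C1.

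For the upper bound C2 I would use the constant recovery sequence $u^n\equiv u$; the same remainder estimate shows the error term tends to $0$, so $\lim_n\cJ_n(u)=\cE(u)$, completing the proof that $\cE=\Glim{n}\cJ_n$. The convergence-of-minimizers statement \eqref{Kuramoto-minimizers_nn} then follows from the standard $\Gamma$-convergence argument given at the end of the proof of Theorem~\ref{thm:SGThm}: the recovery sequence yields $\limsup_n\min_{X_q}\cJ_n\le\inf_{X_q}\cE$, and C1 applied to the minimizing sequence $u^n\to u^\ast$ sandwiches $\cE(u^\ast)=\lim_n\cJ_n(u^n)=\min_{X_q}\cE$.

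Since essentially every ingredient is already in hand, I expect no serious obstacle. The only point requiring care is that the H\"older estimate be applied on the fundamental domain with the correct exponent $1/2$ and a uniform constant, and that $\operatorname{dom}(\cE)=H^1([0,1])$ so that the Morrey inequality is genuinely available for the limit $u$ (and, via uniform energy bounds, for the sequence $u^n$). In the SG case the corresponding H\"older bound was the nontrivial input (Lemma~\ref{lem.SG_holder}); here it is replaced by the elementary Morrey inequality, so the one-dimensional theorem is strictly easier than its SG counterpart.
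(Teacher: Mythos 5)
Your proposal is correct and follows essentially the same route as the paper, which proves Theorem~\ref{thm.nn-graphs} exactly this way: it first derives the $1/2$-H\"older bound $|u(x)-u(y)|\le C|x-y|^{1/2}\sqrt{\cE(u)}$ from Morrey's and Poincar\'e's inequalities as the substitute for Lemma~\ref{lem.SG_holder}, and then repeats the argument of Theorem~\ref{thm:SGThm} with $(5/3)^n$ replaced by $2^n$ and $2\beta=1$, including the same Taylor expansion, remainder bound \eqref{g-bound1}, constant recovery sequence, and the standard convergence-of-minimizers sandwich. Your transcription matches the paper's proof step for step (down to the edge length $|v_j-v_i|=2^{-n}$ on the fundamental domain), so there is nothing to add.
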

\begin{remark}
  Theorem \ref{thm.nn-graphs} establishes convergence of (real-valued) minimizers on
  the covering space $\T^0$. However, by periodicity of $\sin$ and $\cos$, it is clear that
  the minimizers ${ u}^n$ remain minimizers of the associated $\T$-valued energy when
  projected to harmonic maps ${ u}^n \mod 1$.
\end{remark}

The analogue of Theorem \ref{thm:conv_minSG} is trivial in this setting. Indeed, it is known
that for all $n$ sufficiently large (depending on $q$), the discrete twisted states \eqref{disc-twist}
are stable equilibria of the KM. Thus, the minimizers of $\cE_n$ and $\cJ_n$ coincide for sufficiently
large $n$.

\begin{remark}\label{ex.half-twist}
  It is interesting to relate Theorem~\ref{thm.nn-graphs} to the stability analysis of equilibria
  in the KM presented in \cite{BMS2025}. In addition to the twisted states, whose stability can be
  established using the discrete Fourier transform (cf.~\cite[Theorem 3.4]{MedTan2015a}),
  there exists another class of equilibria -- half-twisted states:
\begin{equation*}
	{u}^{(r,n)}(v_i)=  \frac{r}{2^n-2} i,\qquad i\in \{1,\dots,2^n\},
\end{equation*}
where $r\in \Z+1/2$ is a half-integer satisfying $-\frac{2^n}{4}+\frac{1}{2}<r<\frac{2^n}{4}-\frac{1}{2}$ (see Fig.~\ref{fig:half_twist}).
  
Denote the limit of ${u}^{(r,n)}$ by ${u}^r$. If the sequence $({u}^{(r,n)})$ were composed of minimizers of
the Kuramoto energy $\cJ_n$, then by Theorem~\ref{thm.nn-graphs}, the limit $u^r$ would be a minimizer
of $\cE$. However, since $u^r$ is discontinuous, it lies outside  $\operatorname{dom}(\cE)$ .
We therefore conclude that ${u}^{(r,n)}$ cannot be minimizers of $\cJ_n$ for sufficiently large $n$.
This is consistent with the results of \cite{BMS2025}, which show that ${u}^{(r,n)}$ are saddles.
\end{remark}

\begin{figure}[h]
	\centering
    \includegraphics[width = .5\textwidth]{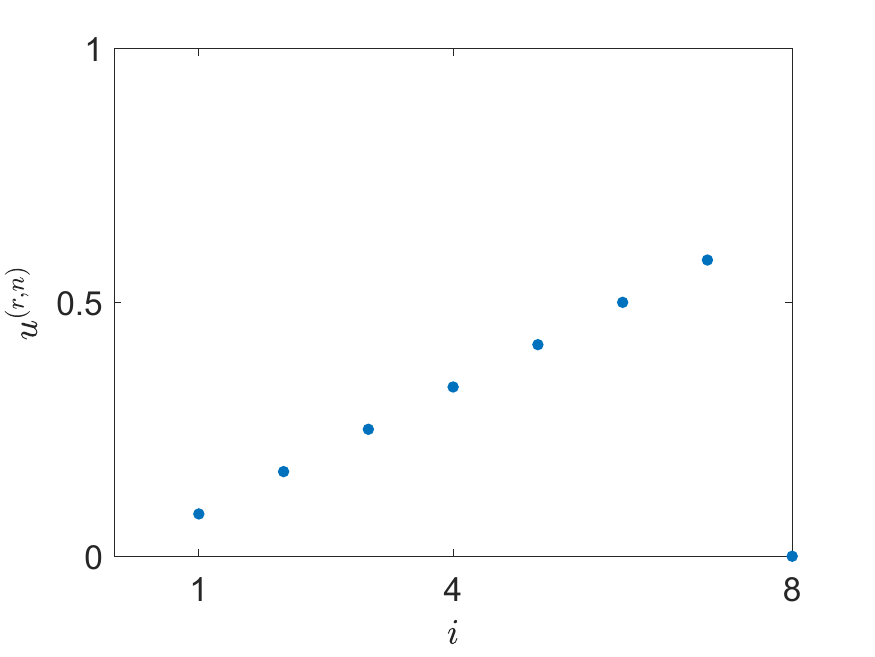}\quad \quad 
	%{\bf b)}\includegraphics[width = .4\textwidth]{figs/half_twist.pdf}
	\caption{A half-twisted state $u^{(r,n)}$ as described in Remark \ref{ex.half-twist}. Here $r=1/2$ and $n=3$. }%{\bf b)} As $n\to\infty$, the $u^{(r,n)}$ converge to $u^r$, a function with a jump discontinuity at $0$. Here, again $r=1/2$.}
	\label{fig:half_twist}
\end{figure}

\section{Harmonic maps from SG to the circle}\label{sec.hmaps}
\setcounter{equation}{0}

Extending the ideas of the previous section, we aim to show that the stable equilibria of the
KM on graphs approximating SG converge to minimizers of the Dirichlet energy for $\T$-valued functions
on SG, i.e., harmonic maps. As before, our approach relies on $\Gamma$-convergence
techniques, now adapted to the setting of $\T$-valued functions on SG. To this end, we construct an
appropriate covering space for SG and lift $\T$-valued functions on SG to real-valued functions on this
covering space. We then apply $\Gamma$-convergence to show that the stable equilibria in the KM are
close to harmonic maps, which minimize the Dirichlet energy on the covering space -- more precisely,
on a fundamental domain. Finally, we project back to SG to establish convergence of the  equilibria
of the KM to harmonic maps.

In Sections~\ref{sec.cover} and~\ref{sec.harm-struct}, we explain the construction of the covering space and
the definition of Dirichlet energy for the problem at hand. This material, adapted from \cite{MM2024}, is
included for completeness. As in the nearest-neighbor graph setting, the covering space depends on
the degree of the harmonic map and is constructed separately for each homotopy class.
For simplicity, Sections~\ref{sec.cover} and~\ref{sec.harm-struct} focus on simple harmonic maps,
whose degrees are given by a single integer. In Section~\ref{sec.higher}, we extend the discussion to
higher-order maps, where degrees are represented by vectors.

% A final issue to address before carrying out the $\Gamma$-convergence analysis is the choice of
% boundary conditions. While Dirichlet boundary conditions were imposed in Section~\ref{sec.gamma},
% here we are also interested in Neumann boundary conditions, which naturally arise as the boundary conditions
% for Euler–Lagrange equation when no constraints are imposed in the minimization problem.
% In Section~\ref{sec.bcs}, we explain how to switch between Neumann and Dirichlet boundary conditions.

After that we prove the convergence of stable equilibria in the KM to a unique harmonic map within the
corresponding homotopy class. This is the goal of Section~\ref{sec.proof}, where we follow the approach
of Section~\ref{sec.gamma} to prove Theorem~\ref{thm.main-sg}, the main result of this paper.

\subsection{The covering space}\label{sec.cover} 

As in Section~\ref{sec.near}, the key step in the construction of harmonic maps  is setting up the
covering space for $K$. This space is constructed separately for each degree vector
$\bar\omega (u)$. To avoid obscuring the main ideas behind the construction with technical details
required for the general case, we focus in this subsection on the special case where
$\bar\omega (u)=(\rho_0), \; \rho_0\neq 0$, i.e., when the winding number along the boundary
of the base triangle $T$ is nontrivial and equal to $\rho_0$, while the winding numbers along all other
loops $\partial T_w, |w|>0$  are trivial.
In the following subsection, we explain the additional details needed to handle the general case.

As in Section~\ref{sec:mapsonT}, the construction of the covering space for $K$
consists of three steps: stack, cut, and identify (see Figure~\ref{fig:sg2}):
\begin{enumerate}
    \item
First, let
\begin{align}
    K_\times & \doteq K\times \Z, \label{G-stack}\\
    K^s & \doteq K\times \{s\} \subset K_\times, \quad s\in\Z, \label{k-sheet}\\
   K^s_i &\doteq F_i(K^s),\quad i\in S,\quad s\in \Z,\\
    K^s_1\cap K^s_2  =\{x^s\}, &\quad  K_2^s\cap K_3^s =\{y^s\}, \quad  K_3^s\cap 
    K_1^s =\{z^s\}, \quad s\in\Z, \label{xyz}.
\end{align} 
\begin{figure}[h]
	\centering
	\includegraphics[width = .9\textwidth]{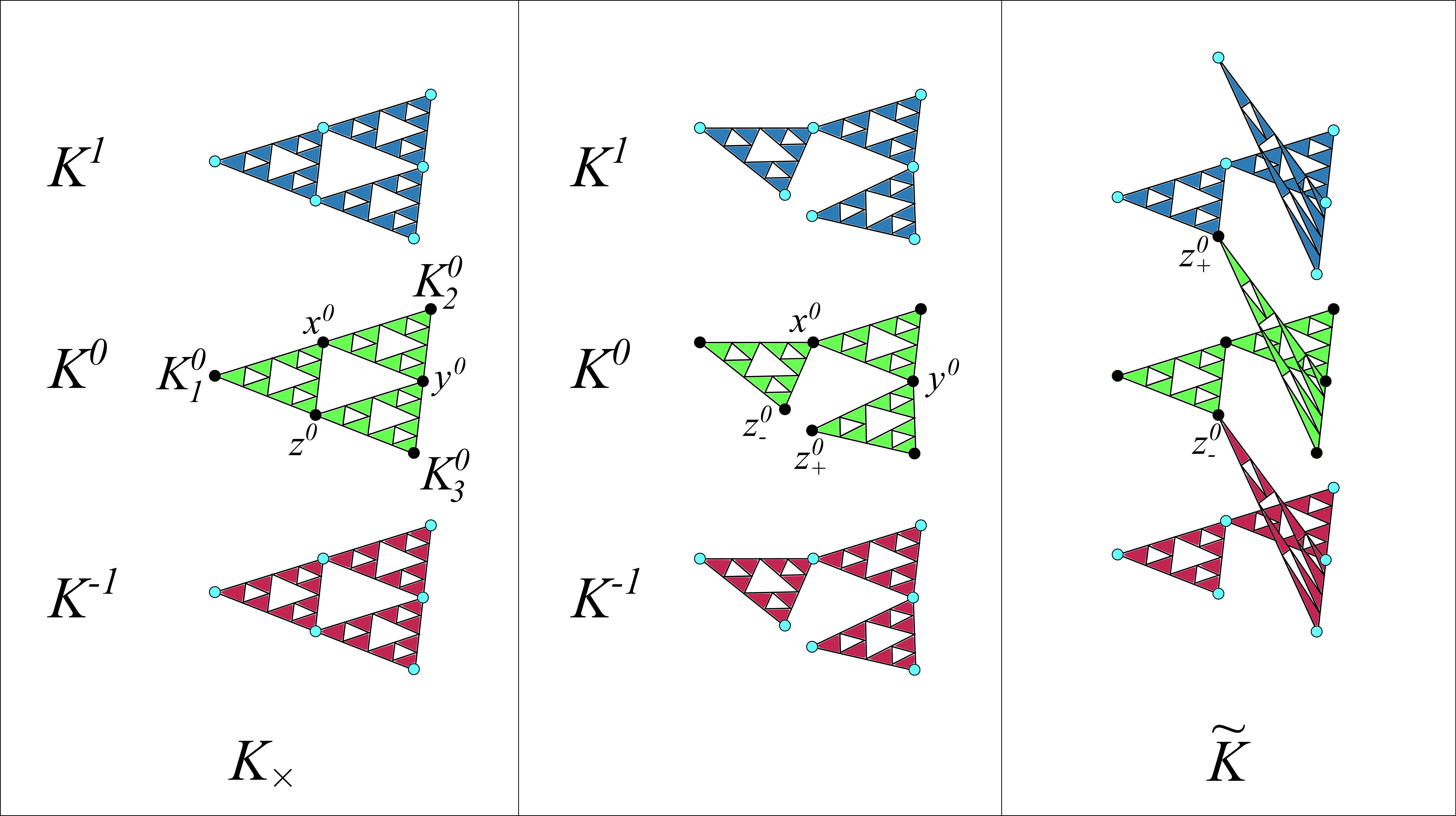}
	\caption{The construction of the covering space of $K$ corresponding to the degree
		$\bar{\omega}(u) = (1)$.}
	\label{fig:sg2}
\end{figure}
\item Cut each $K^s$ at $z^s$, i.e., replace $z^s$ with two distinct copies 
\begin{equation}\label{cut-z}
    z^s_- = v^s_{1\bar{3}}\quad\mbox{and}\quad
    z^s_+ = v^s_{3\bar{1}}.
\end{equation}
\item Identify
\begin{equation}\label{identify}
	z^s_+ =v_{3\bar{1}}^s \simeq v_{1\bar{3}}^{s+\rho_0} = z^{s+\rho_0}_-, \qquad s\in \Z.
\end{equation} 
The resultant covering space is then $\tilde K\doteq K_\times/\simeq$.
  The copies of $K$, belonging to different levels $s\in\Z$, compose the sheets of $\tilde K$.
  We keep denoting them by $K^s$. Each sheet contains both copies of $z^s:$ $z_-^s$ and $z^s_+.$
  $K^0$ is called the \textit{fundamental domain}.
  \end{enumerate}
  
  Finally, we introduce a family of graphs $\Gamma_m^s=(V_m^s,E_m^s)$ approximating $K^s$. The $\Gamma_m^s$ are constructed in the same way as the approximating graphs 
$\Gamma_m$  of SG (see Figure~\ref{fig:sg_graphs}), with the only 
distinction that $z^s$ is replaced with the two copies $z^s_-$ and $z^s_+$ (see Figure~\ref{fig:sg_graph_cut}). 
By identifying $V_m^s\ni z_+^s\simeq z_-^{s+\rho_0}\in V_m^{s+\rho_0}$, 
  we obtain the discretization
  of the covering space $\tilde K$, denoted $\tilde \Gamma_m, \; m\in\N$. The set of nodes
  of $\tilde\Gamma_m$ is $\tilde V_m$. Then $\tilde V_\ast=\bigcup_{m=0}^\infty \tilde V_m$,
  and $\tilde V_\ast^s =\bigcup_{m=1}^\infty V_m^s.$ As before, $\tilde V_\ast$ is dense in $\tilde K.$

\begin{figure}[h]
	\centering
	\includegraphics[width = .4\textwidth]{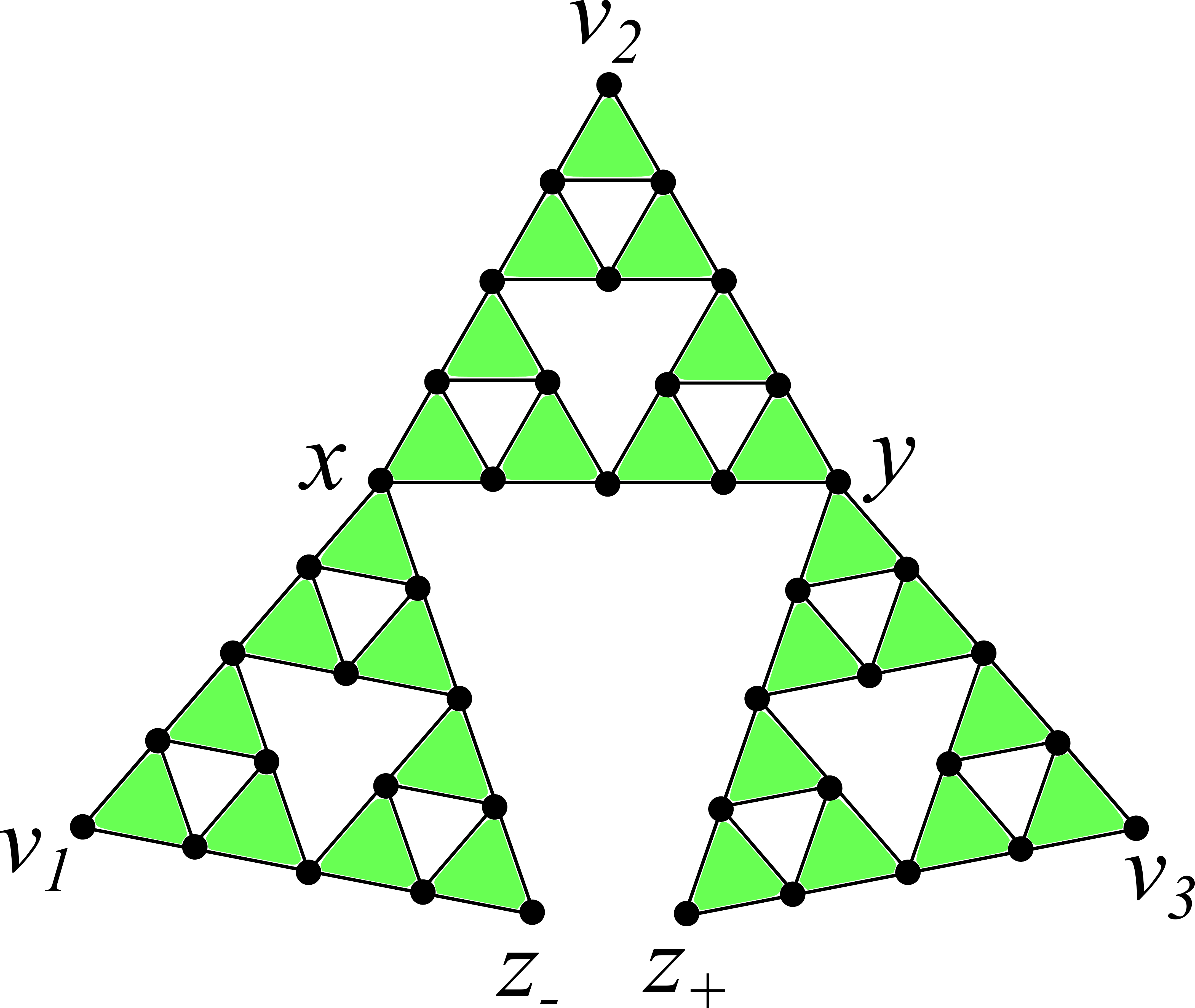}
	\caption{Approximating graphs $\Gamma_m^k$ for each sheet $K^s$ of the covering
          space $\tilde{K}$. Since our construction results in two copies of the vertex $z$,
          the resultant graphs are distinct from Figure \ref{fig:sg_graphs}. Superscripts $s$ are suppressed for simplicity.}
	\label{fig:sg_graph_cut}
\end{figure}

\subsection{Harmonic structure on $\tilde K$}\label{sec.harm-struct}
\setcounter{equation}{0}

In this subsection, we construct a harmonic map from $K$ to $\T$ without imposing
any boundary conditions. To this end, we first construct a real-valued harmonic function on the fundamental domain $K^0$, which will extend to the covering space $\tilde K$.

The energy form  on $\Gamma^0_m$ is defined as follows
  \begin{equation}\label{EGamma}
      \cE_m[\Gamma^0_m]( f) = \left(\frac{5}{3}\right)^m\sum_{(x,y)\in E_m^0} \frac{\left( f(x)- f(y)\right)^2}{2},
      \qquad  f\in L(V_m^0,\R), m\in\N.
    \end{equation}

%    We consider the following minimization problem:
%    \begin{equation}\label{minimize-fundamental}
%\cE_m[\Gamma^0_m](f) \longrightarrow\; \min
%      \end{equation}
We consider minimizing $\cE_m[\Gamma^0_m]( f)$
      subject to the jump condition
      \begin{equation}\label{jump}
  f(z^0_+) =  f(z^0_-)+\rho_0,
\end{equation}
As before, to remove translation invariance we fix $f(v^0_1)=0$. Since we will work exclusively on the fundamental domain $K^0$, in the following we suppress some superscripts for simplicity.

The minimization problem can be reformulated as 
\begin{equation}\label{variational}
\cE_m[\Gamma^0_m](f) \longrightarrow\; \min_{f\in H^0_m},
\end{equation} 
where 
\begin{equation}\label{def-H}
  H^0_m\doteq \{ f\in L(V_m^0, \R):\quad f(v_1)=0,\;
  f(z_+) = f(z_-)+\rho_0\}.
\end{equation}

The energy form $\cE_m[\Gamma^0_m]$ inherits the key properties of the 
Dirichlet form $\cE_m$ (see \eqref{variational-harmonic}, \eqref{variational-extension}):
	\begin{enumerate}
		\item A $\Gamma_m^0$-harmonic\footnote{
 In the present context, a function $f$ is said to be  $\Gamma_m^0$-harmonic if it satisfies the discrete Laplace equation:
\begin{equation}
	\sum_{(x,y)\in E_m^0} (f(y)-f(x)) = 0,\qquad \forall x\in V_m^0\setminus \{v_1,z_+,z_-\}.
\end{equation} 
        } 
        $f_m^0\in H_m^0$ minimizes $\cE_m[\Gamma^0_m]$ over $H_m^0$:
		\begin{equation}\label{E-harmonic}
			\cE_m[\Gamma^0_m](f^0_m)=
			\min\left\{ \cE_m[\Gamma^0_m](f):\; f\in H_m^0 \right\}.
		\end{equation}
		\item The minimum of the energy form $\cE_m[\Gamma^0_m]$ over all extensions of 
		$f\in H_{m-1}^0$ to $H_m^0$ is equal to $\cE_m[\Gamma^0_{m-1}](f)$:
		\begin{equation}\label{E-extension}
			\min\left\{ \cE_m[\Gamma^0_m](\tilde f):\; \tilde f\in H_m^0,\; 
			\tilde f|_{V_{m-1}}=f\in H_{m-1}^0 \right\} = \cE_m[\Gamma^0_{m-1}](f).
		\end{equation} 
	Moreover, this minimizer is obtained by taking the harmonic extension of $f$.
	\end{enumerate}
 
 As in \eqref{monotone-conv}, since energies are monotone increasing, the limit
 \begin{equation}\label{eq:limit_k0_energy}
 	\cE_{K^0}(f) = \lim_{m\to\infty} \cE_m[\Gamma_m^0](f)	
\end{equation}
is well defined for $f\in C(K^0,\R)$.

Applying the arguments in \cite[Lemma~5.1]{MM2024}, we also conclude that the variational problem \eqref{variational} has a unique solution
$f^0_m \in H^0_m$:
$$
\cE_m[\Gamma_m^0](f^0_m)=\min_{f\in H_m^0} \cE_m[\Gamma_m^0](f),
$$
and the minimizers $f^0_m$ satisfy the following properties:
\begin{enumerate}
	\item $f_m^0$ \; is $\Gamma^0_m$-harmonic,
	\item $f^0_{m+1}|_{V^0_m} = f_m^0$. 
\end{enumerate}

\begin{remark}
	This second property requires comment. In principle, it is not obvious that minimizers must agree on common vertices (particularly as boundary conditions at $v_2$ and $v_3$ are not fixed, in contrast to \cite{MM2024}). However, given $f_m^0$,  from \eqref{E-extension}, its harmonic extension $\tilde{f}_m^0 \in H_{m+1}^0$ satisfies $\cE_{m+1}[\Gamma_{m+1}^0](f^0_{m+1}) = \cE_{m+1}[\Gamma_{m+1}^0](\tilde{f})$. Since minimizers are unique, it must follow that $\tilde{f} = f^0_{m+1}$ and so $f^0_{m+1}|_{V^0_m}= f_m^0$.
\end{remark}

\begin{figure}[h]
	\centering
	\includegraphics[width = .3\textwidth]{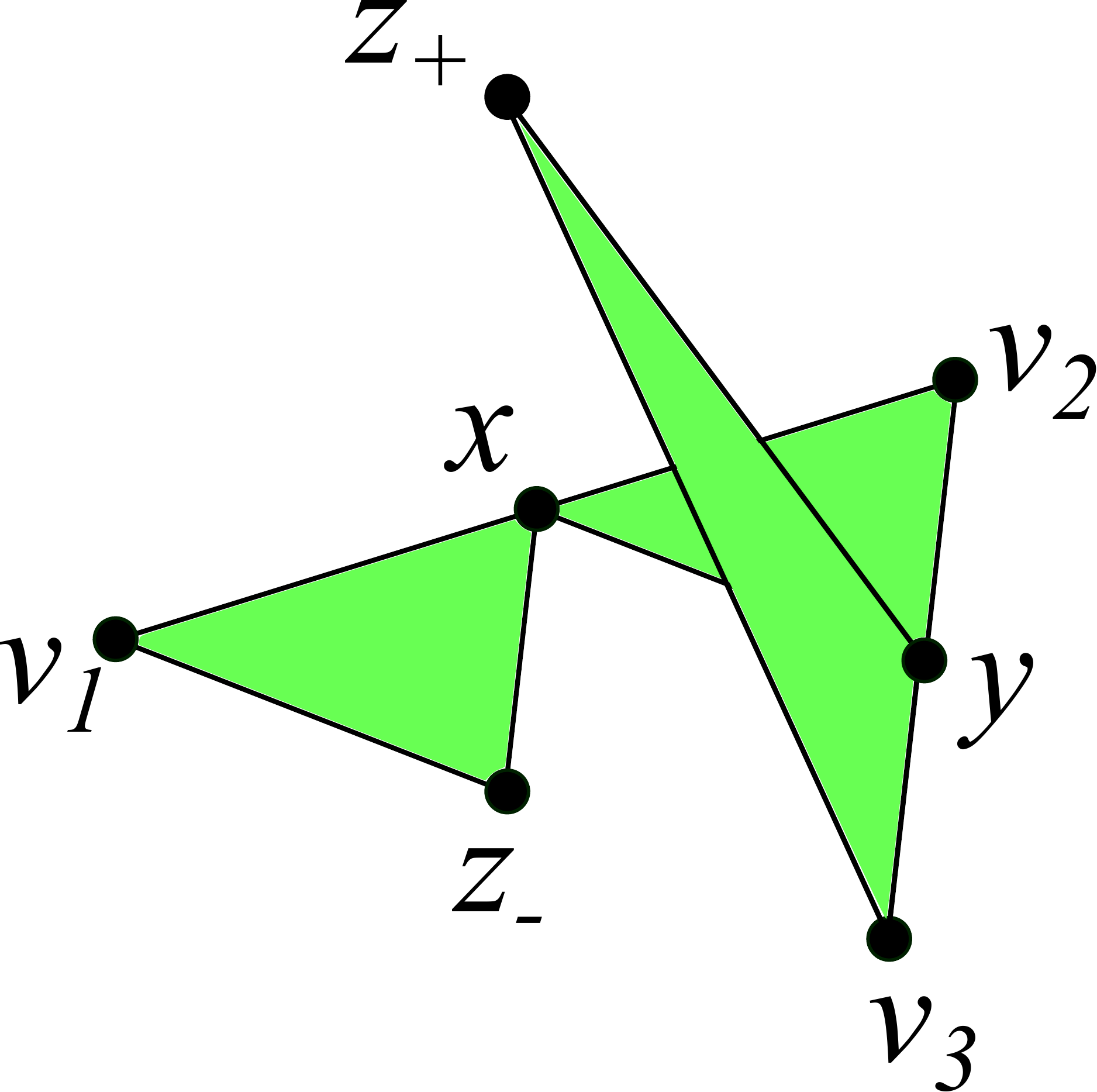}
	\caption{The boundary conditions for the minimization problem \eqref{variational}:
          the values of the function is prescribed only at  $v_1$ and the  jump condition
          is imposed at $z_-$ and $z_+$. }
	\label{fig:sg_fund_domain}
\end{figure}
%
%Now, for every $m\in\N$, return to the minimization problem 
%\begin{equation}\label{variational}
%    E[\Gamma_m^0](f)\longrightarrow \min_{f\in H_m^0}.
%\end{equation}

Thus, minimizers of \eqref{variational} are constructed by first minimizing $\cE_1[\Gamma_1^0]$, followed by repeated application of the harmonic extension.
Let $\hat f_m$ be  the minimizer of $E[\Gamma_m^0](f)$. By \eqref{E-extension}, 
$\cE_m[\Gamma_m^0](\hat f_m)=\cE_1[\Gamma_1^0](\hat f_1)$ and $\hat f_m|_{V_1} = \hat f_1$. 

So, we first find the values of $\hat f_1$ at points $x,y,z:=z_+, v_2,v_3$, by minimizing  $\cE_1[\Gamma_1^0]$ (see Fig.~\ref{fig:sg_fund_domain}). Direct calculation yields the unique solution
\begin{equation*}
	\begin{pmatrix} 
		\hat f_1(x)\\\hat f_1(v_2)\\\hat f_1(y)\\\hat f_1(v_3)\\\hat f_1(z)
		\end{pmatrix} 
	=
	\begin{pmatrix} 1\\1\\1\\1\\1\end{pmatrix} \hat f_1(v_1)+
	\begin{pmatrix}
		1/6\\2/6\\3/6\\4/6\\5/6
	\end{pmatrix}
\rho_0.
\end{equation*}
Using the harmonic extension \eqref{classical-extension} in the 
subdomains $v_1xz_-$, $xv_2y$, and $z_+yv_3$ inductively, we obtain a minimizer $\hat f^*$,  on $V_*^0$. 

%
%
%Thus, we first find the values of $f^\ast$ at points $x, y,$ and $z:=z_-$
%(see Fig.~\ref{fig:sg_fund_domain}) such that
%$$
%E[\Gamma_1^0]\longrightarrow \min_{f_x, f_y, f_z}.
%$$
%This yields
%\begin{equation}\label{Hext}
%\begin{pmatrix} f_x \\ f_y\\ f_z 
%\end{pmatrix}
%=
%\begin{pmatrix}
%    \frac{2}{5} &  \frac{2}{5} &  \frac{1}{5}\\
%     \frac{1}{5} &  \frac{2}{5} &  \frac{2}{5}\\
%      \frac{2}{5} &  \frac{1}{5} &  \frac{2}{5}
%\end{pmatrix}
%\begin{pmatrix} f_{v_1} \\ f_{v_2}\\ f_{v_3}
%\end{pmatrix}
%+
%\begin{pmatrix}
%   \frac{3}{10} &  \frac{1}{10} &  \frac{1}{10}\\ 
%    \frac{1}{10} &  \frac{3}{10} &  \frac{1}{10}\\ 
%     \frac{1}{10} &  \frac{1}{10} &  \frac{3}{10}
%\end{pmatrix}
%\begin{pmatrix} 0\\ \rho_0\\ -2\rho_0 .
%\end{pmatrix}
%\end{equation}
%Having found $f_x,$ $f_y,$ and $f_z$, we now use the harmonic extension \eqref{classical-extension} in the 
%subdomains $v_1xz$, $xv_2z$, and $zyv_3$ inductively.

We extend it by continuity to the fundamental domain $K^0$,
and further to a uniformly
continuous function on the covering space:
\begin{equation}\label{k-period}
  \mathbf{\hat f}^\ast (x,s) = \hat f^\ast(x) +s , \qquad x\in K^0, \; s\in \Z.
\end{equation}

Recall that $\tilde\Gamma_m$ and $\tilde V_m$ stand for the family of graphs approximating
the covering space $\tilde K$ and their vertex sets respectively (see item $5$ in \S\ref{sec.cover}).
By \cite[Theorem~5.2]{MM2024}, the restriction of $  \mathbf{\hat f}^\ast$ on $\tilde V_m$, $ \mathbf{\hat f}^\ast|_{\tilde V_m}$,
is  $\tilde\Gamma_m$-harmonic for every $m\in \N$,
 i.e., it satisfies the discrete Laplace equation
 \begin{equation}\label{G-harmonic}
  \Delta_m   \mathbf{\hat f}^\ast|_{\tilde V_m}(x) = 0\quad \forall x\in \tilde{V}_m\setminus \{v^s_1\}.   
  \end{equation}

Finally, by restricting the domain of $  \mathbf{\hat f}^\ast$ to the fundamental domain 
and by projecting the range of $  \mathbf{\hat f}^\ast$ to $\T$, we obtain the harmonic map from
the SG to the circle $f^\ast:\, K\to \T$:
\begin{equation}\label{project-back}
     f^\ast\doteq   \mathbf{\hat f}^\ast|_{K^0}\mod 1.
  \end{equation}

H\"older continuity of functions on $K^0$ also holds with minor adjustments. 
  The fundamental domain $K^0$ is the union of the three subdomains
  $K^0_i, \; i=1,2,3$ (see Fig.~\ref{fig:sg_graph_cut}), each a copy of $K$ itself. Specifically,  $K^0_1, K^0_2,$ and $K^0_3$
  correspond to the intersection of $K^0$ with the closed triangles $v_1xz_-,$ $xv_2y$,
  and $z_+yv_3$, respectively.  Let $ f\in \operatorname{dom}(\cE_{K^0})$.
  By Lemma \ref{lem.SG_holder}, we immediately see that the restriction of
  $ f$ on each of the subdomains $K^0_i, i\in [3]$ is
  a H\"older continuous function. This yields the following lemma.
\begin{lemma}\label{lem.Holder-G^0}
	Let $ f\in \operatorname{dom}(\cE_{K^0})$ then for every $K_i^{0}, i\in [3]$ and $x,y\in K^0_3$,
	\begin{equation*}
          \frac{| f(x)- f(y)|}{|x-y|^\beta}\leq C\sqrt{\cE_{K^0}( f)},\quad
          \beta = \frac{\log(5/3)}{2\log(2)},
	\end{equation*}
	where $C>0$ is independent of $ f$. 
\end{lemma}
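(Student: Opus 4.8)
The plan is to reduce the desired bound to the H\"older estimate on a single, uncut copy of the SG (Lemma \ref{lem.SG_holder}) by exploiting the self-similar structure of the Dirichlet form together with the fact that each subdomain $K_i^0$ is an exact $2^{-1}$-scaled copy of $K$. Concretely, $K_i^0=F_i(K)$ (with the corner $z$ read as $z_-$ in $K_1^0$ and as $z_+$ in $K_3^0$), so the pullback $g_i\doteq f\circ F_i$ is a well-defined real-valued function on $K$ and $f=g_i\circ F_i^{-1}$ on $K_i^0$. The cut plays no role within a single subcell, since it only separates $z_-\in K_1^0$ from $z_+\in K_3^0$ and never lies in the interior of one subdomain.

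First I would establish the renormalization identity
\begin{equation*}
\cE_{K^0}(f)=\frac{5}{3}\sum_{i=1}^3 \cE(g_i).
\end{equation*}
For each $m\ge 1$ the edge set $E_m^0$ of $\Gamma_m^0$ splits into the three disjoint families $F_i(E_{m-1})$, no edge joining distinct subcells; on the $i$-th family each summand $(f(F_i a)-f(F_i b))^2$ equals $(g_i(a)-g_i(b))^2$ as $(a,b)$ ranges over the edges of $\Gamma_{m-1}$. Collecting the scaling constants gives $\cE_m[\Gamma_m^0](f)=\frac{5}{3}\sum_i \cE_{m-1}(g_i)$, and letting $m\to\infty$ (using \eqref{eq:limit_k0_energy} and \eqref{monotone-conv}) yields the identity. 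Since the three summands are nonnegative, it follows that $g_i\in\operatorname{dom}(\cE)$ with $\cE(g_i)\le\frac{3}{5}\cE_{K^0}(f)<\infty$, so Lemma \ref{lem.SG_holder} is applicable to each $g_i$.

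Next I would transfer the estimate back to $K_i^0$, tracking the metric scaling. For $\xi=F_i(a)$ and $\eta=F_i(b)$ in $K_i^0$ one has $|\xi-\eta|=2^{-1}|a-b|$ and $f(\xi)-f(\eta)=g_i(a)-g_i(b)$, hence
\begin{equation*}
\frac{|f(\xi)-f(\eta)|}{|\xi-\eta|^{\beta}}=2^{\beta}\,\frac{|g_i(a)-g_i(b)|}{|a-b|^{\beta}}\le 2^{\beta}C\sqrt{\cE(g_i)}\le 2^{\beta}\sqrt{\tfrac{3}{5}}\,C\sqrt{\cE_{K^0}(f)}.
\end{equation*}
The point I would emphasize is that $\beta=\frac{\log(5/3)}{2\log 2}$ is precisely the exponent for which $2^{\beta}=\sqrt{5/3}$, so the prefactor $2^{\beta}\sqrt{3/5}=1$ collapses and the bound holds with the very same constant $C$ as on the unscaled gasket. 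This scale invariance of the H\"older seminorm is exactly the compatibility between the length contraction $2^{-1}$ and the energy renormalization $5/3$ built into the harmonic structure.

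I do not expect a genuine obstacle here: the whole content is the renormalization identity, and the remainder is direct substitution into Lemma \ref{lem.SG_holder}. The only points demanding care are the bookkeeping in the edge decomposition -- verifying that the doubled vertices $z_\pm$ are assigned to the correct subcell and that no edge is counted twice -- and the metric scaling under $F_i$; both are routine. I would therefore present the decomposition as the single substantive step and relegate the scaling to a one-line computation.
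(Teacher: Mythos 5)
Your proof is correct and follows essentially the same route as the paper, which simply notes that each $K^0_i$ is a copy of $K$ and invokes Lemma \ref{lem.SG_holder} on the restriction $f|_{K^0_i}$. You merely supply the details the paper leaves implicit --- the edge decomposition $E_m^0=\bigcup_i F_i(E_{m-1})$ giving $\cE_{K^0}(f)=\frac{5}{3}\sum_i\cE(f\circ F_i)$, and the observation that $2^\beta=\sqrt{5/3}$ makes the constant scale-invariant --- both of which are accurate (in fact, since $C$ is only required to be independent of $f$, even the crude bound $\cE(g_i)\le\cE_{K^0}(f)$ would have sufficed).
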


We conclude this subsection by showing that minimization of the Dirichlet energy on the fundamental
domain without imposing any boundary conditions on $V_0$ results in a solution satisfying
zero Neumann boundary conditions.

 \begin{lemma}\label{lem.natural}
$f^\ast$ satisfies natural boundary conditions \eqref{HM-neu}.
\end{lemma}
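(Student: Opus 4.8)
The plan is to reduce the claim to the vanishing, at every level $m$, of the discrete normal derivative of the real-valued minimizer $\hat f^\ast$ at the three corners $v_1, v_2, v_3$; the continuum conditions \eqref{HM-neu} then follow by passing to the limit. Since $f^\ast$ is obtained by projecting $\hat f^\ast$ from the fundamental domain $K^0$ (cf.~\eqref{project-back}), its normal derivative at a boundary vertex is computed through this real-valued lift, so it suffices to show $\partial_{\bf n}\hat f^\ast(v_i)=0$ for $i=1,2,3$. Recall that
$$
\partial_{\bf n} \hat f^\ast(v_i) = \lim_{m\to\infty} \Bigl(\tfrac{5}{3}\Bigr)^m \sum_{j\sim_m i}\bigl(\hat f^\ast(v_i)-\hat f^\ast(v_j)\bigr),
$$
where each corner $v_i$ has exactly two neighbors in $\Gamma_m^0$. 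Hence it is enough to show that the inner sum vanishes for every $m$ and every $i\in\{1,2,3\}$, after which the limit is $0$ irrespective of the renormalization constant.

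For $v_2$ and $v_3$ this is immediate. These vertices are not among the constrained points $\{v_1, z_+, z_-\}$, so the defining property of $\Gamma_m^0$-harmonicity (the discrete Laplace equation holds at every $x\in V_m^0\setminus\{v_1,z_+,z_-\}$) is exactly the statement that the inner sum vanishes. Equivalently, since the values at $v_2$ and $v_3$ are left free in \eqref{variational}, the first-order optimality condition -- vanishing of the partial derivative of $\cE_m[\Gamma_m^0]$ in the nodal value at $v_i$ -- reproduces the discrete Laplace equation there.

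The only delicate case is $v_1$, which is excluded from the definition of $\Gamma_m^0$-harmonicity because we have fixed $\hat f^\ast(v_1)=0$. I would argue that this is a harmless normalization rather than a genuine boundary condition. Since $\cE_m[\Gamma_m^0]$ is invariant under the addition of a constant, minimizing it over the jump-constrained set $\{f:\,f(z_+)-f(z_-)=\rho_0\}$ alone yields a minimizer unique up to an additive constant; the corresponding Euler--Lagrange conditions, with a single Lagrange multiplier for the jump \eqref{jump}, force the discrete Laplace equation at \emph{every} vertex other than $z_\pm$, and in particular at $v_1$. Imposing $f(v_1)=0$ merely selects the representative of this one-parameter family that vanishes at $v_1$, and translating by a constant leaves all differences $\hat f^\ast(v_i)-\hat f^\ast(v_j)$ unchanged, so the Laplace equation persists at $v_1$. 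This can be checked directly at level $1$ from the values computed above: the neighbors of $v_1$ in $\Gamma_1^0$ are $x$ and $z_-$, and using \eqref{jump} to write $\hat f_1(z_-)=\hat f_1(z_+)-\rho_0=-\tfrac{\rho_0}{6}$, one finds $2\hat f_1(v_1)-\hat f_1(x)-\hat f_1(z_-)=0-\tfrac{\rho_0}{6}+\tfrac{\rho_0}{6}=0$.

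Combining the three cases, the inner sum vanishes at $v_1, v_2, v_3$ for every $m$, so $\partial_{\bf n} f^\ast(v_i)=0$ for $i=1,2,3$, which is precisely \eqref{HM-neu}. The step requiring the most care is $v_1$: one must recognize that the constraint $\hat f^\ast(v_1)=0$ only removes the translation invariance of the energy and therefore introduces no spurious normal flux there.
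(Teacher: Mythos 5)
Your proof is correct, and for the one genuinely delicate vertex, $v_1$, it takes a different route from the paper. For $v_2$ and $v_3$ you argue exactly as the paper does: these nodal values are free in \eqref{variational}, so the first-order optimality condition is the discrete Laplace equation there, i.e., vanishing of the two-term normal-derivative sum at every level $m$. At $v_1$, however, the paper does \emph{not} relax the normalization; instead it introduces a discrete Laplacian adapted to the jump vertex (summing over both copies $z_\pm$), observes that it vanishes at all $x\notin V_0$, and sums over interior vertices to obtain a discrete divergence (Gauss--Green) identity equating zero to the sum of the three boundary fluxes; since the fluxes at $v_2,v_3$ vanish by the Euler--Lagrange equations, the flux at $v_1$ must vanish too. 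Your argument instead exploits translation invariance: since the energy and the jump constraint \eqref{jump} are both invariant under adding a constant, the normalized minimizer $\hat f^\ast$ also minimizes the relaxed problem in which $f(v_1)$ is free, so the plain first-order condition (no Lagrange multiplier is even needed, since the constraint does not involve $v_1$) yields the discrete Laplace equation at $v_1$ at every level $m$; your level-$1$ sanity check against the explicit values is consistent with this. Both arguments are sound, and both conclude by noting that the projection \eqref{project-back} preserves normal derivatives. Your route is more local and elementary, and it makes transparent that fixing $f(v_1)=0$ is pure gauge-fixing rather than a boundary condition; the paper's route buys a structural conservation identity (the discrete divergence theorem) that deduces the condition at $v_1$ from interior harmonicity alone, without invoking minimality at $v_1$, which is a tool of independent use in the p.c.f.\ setting. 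One small caveat: your argument leans on the fact that $\hat f^\ast|_{V^0_m}$ is the level-$m$ minimizer for \emph{every} $m$ (the consistency property $f^0_{m+1}|_{V^0_m}=f^0_m$), which you use implicitly when asserting the sum vanishes for all $m$; this is established in the paper and should be cited at that step.
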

\begin{proof}
	We recall that for a harmonic function $\hat f^* \in C(K^0,\R)$, the normal derivative at a boundary vertex $v_j \in V_0$ is 
	\begin{equation*}
		\partial_{\bf n} \hat f^*(v_j) = \left(\frac{5}{3}\right)^m\sum_{y\sim_m j} \hat f^*(y)-\hat f^*(v_j),
	\end{equation*}
	which is well-defined because the right hand side is constant in $m$ for harmonic functions \cite{Str06}. In the following, we work always on the fundamental domain and suppress superscripts/subscripts for simplicity.
	
	For every $m\in \N$, the Euler-Lagrange equations for \eqref{EGamma} at the boundary points $v_2$ and $v_3$ is 
	\begin{equation}\label{eq:EL_lift}
		\sum_{j\sim_m i} (\hat f^*(v_j)-\hat f^*(v_i))=0, \quad i=2,3,
	\end{equation}
	from which it follows $\partial_{\bf n} \hat f^*(v_2)=\partial_{\bf n}\hat f^*(v_3) = 0$. To see that $\partial_{\bf n} \hat f^*(v_1)=0$ as well, we use a discrete divergence theorem. For $x\in \Gamma^0_n\setminus V^0_0$, define 
	\begin{equation*} 
		\Delta_n \hat {f}^*(x) = \left\{ \begin{array}{cc} 
			\sum_{y\sim_n x} (\hat f^*(y)-\hat f^*(x)), &\mbox{ if } x \mbox{ is not a jump vertex}\\
			\sum_{y\sim_n x_+} (\hat f^*(y)-\hat f^*(x_+))+\sum_{y\sim_n x_-} (\hat f^*(y)-\hat f^*(x_-)), &\mbox{ if } x \mbox{ is a jump vertex}.
		\end{array}\right.
	\end{equation*} 
	This way,  $\Delta_n \hat f^*(x)=0$ for all $x\not\in V_0$ (cf. \eqref{G-harmonic}). Then, summing over all junction points results in the discrete divergence theorem: 
	\begin{equation*} \sum_{x\in V_n\setminus V_0} \Delta_n \hat f^*(x)=\sum_{v_i\in V_0}\sum_{y\sim_n v_i} \hat f^*(y)-\hat f^*(v_i).
	\end{equation*} 
	Together with \eqref{eq:EL_lift} this shows $\partial_{\bf n}\hat  f^*(v_1)=0$. Projecting  $f^*=\hat f^*|_\T$ preserves normal derivatives, so the result follows.
      \end{proof}

  \subsection{Higher order harmonic maps}\label{sec.higher}

  In the previous section, we explained how to construct a harmonic map when the
  winding vector has a single nontrivial entry, $\bar{\omega}(f) = (\rho_0)$. The procedure
  for a general winding vector, $\bar{\omega}(f) = (\rho_0, \rho_1, \dots, \rho_k)$, follows the
  same general approach, but it requires additional cuts to construct the covering space.
  The idea is to introduce one cut point for each nonzero entry in the winding vector.
  Care must be taken in choosing these points, as the loops used for cutting must be
  linearly independent. The choices of cut points when $\bar{\omega}(f) = (\rho_0, \rho_1, \rho_2, \rho_3)$
  and $\bar{\omega}(f) = (\rho_0, \rho_1, \rho_2, \dots, \rho_{12})$ are shown
  in Figure~\ref{fig:gen_cuts}. The general procedure of selecting cut points for a p.c.f. fractal is
  explained in detail in Sections 7 and 8 of \cite{MM2024}. For completeness, 
  we discuss the construction of the covering space for SG for an arbitrary winding vector below
  and refer the reader to \cite{MM2024} for further details.

  Let $f\in C(K,\T)$ be a harmonic map of order $N\in \N$, i.e., $\bar{\omega}(f) = (\rho_0,\rho_1,\dots, \rho_k)$
    with
    $$
    \frac{3^N-1}{2} \le k\le \frac{3^{N+1}-3}{2}.
    $$
    Note that to each nonzero entry of $\bar{\omega}(f)$, $\rho_i, i\le \frac{3^{N+1}-3}{2}$,
    corresponds a unique loop $\partial T_w, w\in S^N$. For the remainder of this subsection,
    let us denote $\rho_i$ by $\rho_w$, i.e., by the itinerary of the cell, whose boundary has
    winding number $\rho_i$.

    For each nonzero $\rho_w$ with $n:=|w|\le N$, we cut $K^s$ at a vertex  $z_w$
    satisfying $z_w \in \partial T_w \setminus (\bigcup_{|v|=n-1} \partial T_v),$
    see Fig. \ref{fig:gen_cuts}.
    Denote the two copies of $z_w$ by $(z_w)_-$ and $(z_w)_+$.
    Finally, identify vertices in the sheets $K^s$ by
  \begin{equation*}
  	(z_w^k)_+ \simeq (z_w^{k+\rho_w})_-.
  \end{equation*}
  The covering space is then $\tilde{K} \doteq K_{\times}/ \simeq$,
  and the fundamental domain $K^0$. Construction of the approximating graphs
  of the covering space, $\Gamma^s_m=(V^s_m,E^s_n)$, follows analogously.

  \begin{figure}[h]
  	\centering
  	\includegraphics[width = .3\textwidth]{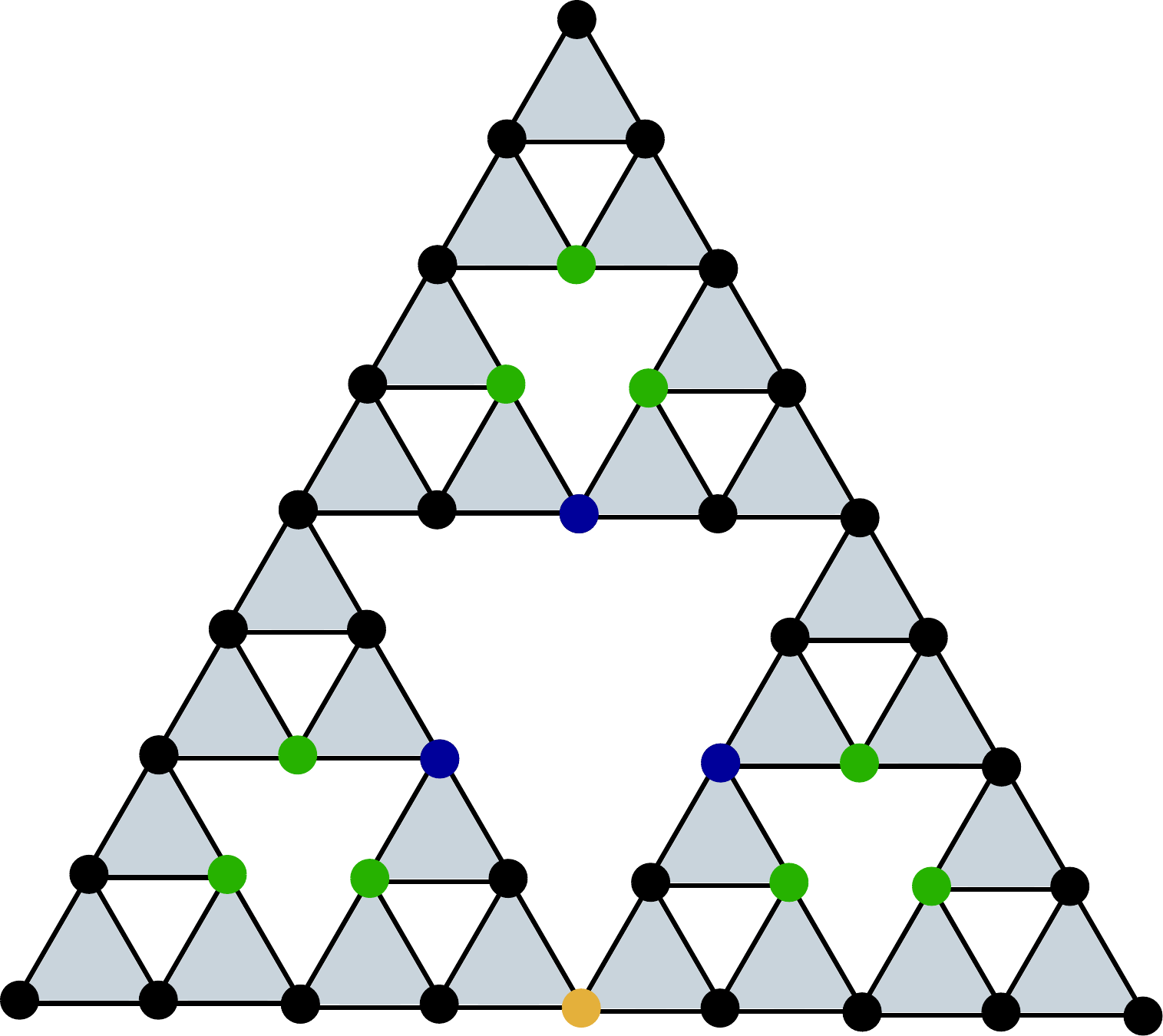}
  	\caption{For arbitrary degree $\bar{\omega}(f)$, each non-zero entry requires an
          associated cut vertex. As before, for $\partial T$ we cut at the single orange vertex.
          For loops $\partial T_1,\partial T_2,\partial T_3$ we cut the three blue vertices.
          Finally, for loops $\partial T_{ij}$ there are 9 corresponding green vertices.}
  	\label{fig:gen_cuts}
  \end{figure}

  Constructing the harmonic structure on $\tilde{K}$ also follows along the same lines as before.
  We ntroduce the
  corresponding spaces on the fundamental domain $K^0$:
  \begin{align*}
  	H^0_m = \{  f \in L(V^0_m,\R)\colon  & f(v^0_1)=0,\\
  	&\;  f((z^0_w)_+)= f((z^0_w)_-)+\rho_w, \forall w\in S^n, |n|\leq m\}
  \end{align*}
  and energies $\cE_m[\Gamma_m^0]$ as in \eqref{EGamma}.
  The resultant harmonic structure satisfies the same results as in
  Section~\ref{sec.harm-struct}, so we construct a harmonic function ${\bf \hat f}^*$ on
  the covering space by minimizing the $\cE_m[\Gamma_m^0]$ on the fundamental domain within the class
  $H_m^0$, followed by inductive application of the harmonic extension algorithm.
  The projection of the range of ${\bf \hat f}^*$ to $\T$ yields the desired HM,
  $f^*$, with degree $\bar{\omega}$.

 Arguing as in \S~\ref{sec.harm-struct}, we represent the fundamental domain as a union of
closed $m$-cells $G^0=\bigcup_{|w|=m} K^0_w$. Recall $K^0_w=F_w(K^0)$.  As before,
the restriction of $ f\in\operatorname{dom}(\cE_{K^0})$ on each of these subdomains
is a H\"older continuous function.
  \begin{lemma}\label{lem:holder_sg_lift}
  	Let $ f\in \operatorname{dom}(\cE_{K^0})$ and  $K^0_w,$ $|w|=m$. For any 
$x,y\in K^0_w$ we have
  	\begin{equation*}
  		\frac{| f(x)- f(y)|}{|x-y|^\beta}\leq C\sqrt{\cE_{K^0}( f)}, \quad \beta = \frac{\log(5/3)}{2\log(2)},
  	\end{equation*}
  	where $C$ is independent of $ f$. 
  \end{lemma}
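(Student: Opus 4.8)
The plan is to reduce the estimate on each $m$-cell to the already-established Hölder estimate on the whole SG, Lemma~\ref{lem.SG_holder}, by pulling back through the contraction $F_w$. Fix $w$ with $|w|=m$ and set $g\doteq f\circ F_w$. Since $F_w$ is a similarity with ratio $2^{-m}$ carrying $K$ onto the cell $K^0_w$, the map $g$ is a continuous function on $K$, so the first task is to bound its energy $\cE(g)$ in terms of $\cE_{K^0}(f)$ and then transfer the Hölder bound back to the cell.

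The key input is the self-similar decomposition of the energy. Grouping the edges of $\Gamma_\ell^0$ according to the $m$-cell containing them and noting that the edges lying in $K^0_w$ are precisely the $F_w$-images of the edges of $\Gamma_{\ell-m}$, the definition \eqref{EGamma} gives, for every $\ell\ge m$, that the contribution of $K^0_w$ to $\cE_\ell[\Gamma_\ell^0](f)$ equals $(5/3)^m\,\cE_{\ell-m}(g)$. Letting $\ell\to\infty$ and summing the (nonnegative) cell contributions yields
\begin{equation*}
  \cE_{K^0}(f)=\Bigpar{\tfrac{5}{3}}^{m}\sum_{|w|=m}\cE(g_w),\qquad g_w\doteq f\circ F_w,
\end{equation*}
so in particular $\cE(g)\le (3/5)^m\,\cE_{K^0}(f)$. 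One minor point to dispatch here is that finitely many cells meet a cut vertex; since each such vertex contributes only one of its two copies to a given cell, $f$ restricts to a genuine continuous finite-energy function on each $K^0_w$ and the decomposition above is unaffected.

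With the energy bound in hand, the estimate follows from a scaling computation. For $x,y\in K^0_w$ write $x=F_w(x')$, $y=F_w(y')$ with $x',y'\in K$, so that $|x-y|=2^{-m}|x'-y'|$. Applying Lemma~\ref{lem.SG_holder} to $g$ and substituting the energy bound gives
\begin{equation*}
  |f(x)-f(y)|=|g(x')-g(y')|\le C\,|x'-y'|^{\beta}\sqrt{\cE(g)}
  \le C\,\bigpar{2^{m}|x-y|}^{\beta}\sqrt{(3/5)^m\,\cE_{K^0}(f)}.
\end{equation*}
The whole point is the identity $2^{\beta}=(5/3)^{1/2}$, which is exactly how $\beta=\tfrac{\log(5/3)}{2\log 2}$ was chosen: it forces $2^{m\beta}(3/5)^{m/2}=1$, and the scale factors cancel to leave $|f(x)-f(y)|\le C\,|x-y|^{\beta}\sqrt{\cE_{K^0}(f)}$ with the same constant $C$ as in Lemma~\ref{lem.SG_holder}, independent of $f$, $w$, and $m$.

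The scaling cancellation is automatic once $\beta$ is identified, so I expect the only real work to be the bookkeeping in the energy decomposition and the verification that it is not spoiled by the finitely many cut cells. This mirrors the argument already used for Lemma~\ref{lem.Holder-G^0} in the single-cut case, now carried out at the level of arbitrary $m$-cells rather than the three $1$-cells.
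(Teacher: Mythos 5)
Your proof is correct and takes essentially the same route as the paper, whose entire argument is the remark that the restriction of $f$ to each $m$-cell (a copy of $K$ itself, with each cut vertex contributing only one of its two copies per cell) is H\"older continuous by Lemma~\ref{lem.SG_holder}. Your pullback via $F_w$, the self-similar energy identity $\cE_{K^0}(f)=(5/3)^m\sum_{|w|=m}\cE(f\circ F_w)$, and the exact cancellation $2^{m\beta}(3/5)^{m/2}=1$ simply spell out the scaling bookkeeping --- in particular the uniformity of $C$ in $m$ and $w$ --- that the paper leaves implicit.
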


  \section{The proof of Theorem~\ref{thm.main-sg}}\label{sec.proof}
  \setcounter{equation}{0}

  We are now prepared to prove Theorem~\ref{thm.main-sg}. We work with the
  lifts of solutions of the KM restricted to the fundamental domain. The proof for the most
  part coincides with the proofs of Theorems~\ref{thm:SGThm} and \ref{thm:conv_minSG}.
  Only minor adjustments are required to deal with the fact that the covering space is a union
  of finitely many copies of $K$. 
   Below, we comment on the necessary
  adjustments to the arguments in Section~\ref{sec.gamma}.

  Let
  $$
  \omega^\ast=\left(\rho_0, \rho_1, \dots,\rho_{M(m)}\right)\in\Z^{M(m)+1}, M(m)\doteq \frac{3(3^m-1)}{2},
  $$
  be arbitrary but fixed. We assume that
  $$
\sum_{i=M(m-1)+1}^{M(m)} |\rho_i|>0.
  $$
  The last condition means that at least some cells of order $m$ are used in the construction of the
  covering space corresponding to $\omega^\ast$. 
  % which determines the homotopy class for the
  %solution of the boundary value problem \eqref{HM-Lap}-\eqref{HM-deg}.

  Next, construct the covering space using the algorithm explained in
  \S\S\ref{sec.harm-struct}, \ref{sec.higher}. Decompose the fundamental domain $K^0$ into the union of
  $m$-cells, $K^0=\bigcup_{|w|=m} K^0_w$. Let $\hat u_m$ stand for the unique minimizer of the
  Dirichlet energy
  $
\cE_m[\Gamma_m^0]
  $
  over $H^0_m$ and denote by $\hat u^*$ the harmonic extension of $\hat u_m$ to $K^0$. Finally, $u^\ast$ is obtained
  by projecting the co-domain of $\hat u^*$ to $\T$. By construction (and using Lemma \ref{lem.natural}), $u^\ast$ satisfies \eqref{HM-Lap} - \eqref{HM-deg}.
  
 \noindent{\bf $\Gamma$-convergence.} We first establish $\Gamma$-convergence of the Kuramoto energies on $K^0$.
	Take as $X$ the space of continuous functions $C(K^0,\R)$ with the supremum metric. As before, $\Glim{n} \cE_n = \cE_{K^0}$ by virtue of continuity of the $\cE_n$ and monotone pointwise convergence of the energies. Next define the Kuramoto energy on $\Gamma_n^0$:
	\begin{equation}\label{eq:k0_km_energy}
		\cJ_n(u) = \left(\frac{5}{3}\right)^n\frac{1}{4\pi^2} \sum_{(i,j)\in E_n^0} (1-\cos(2\pi(u(v_j)-u(v_i)))).
	\end{equation} 
Repeating the proof of Theorem \ref{thm:SGThm} establishes $\Glim{n} \cJ_n = \cE_{K^0}$. We remark only that when establishing estimates \eqref{eq:liminf_ineq}, each pair of adjacent vertices live in a common $m$-cell so that H\"older estimates of Lemma \ref{lem:holder_sg_lift} apply.

\noindent{\bf Existence of sequence of stable equilibria.}  It remains to show that there is a sequence of stable steady states of the KM,     $u^{n}\in L(V_n^0,\T),$ converging to $u^\ast$. 

We proceed as in Theorem \ref{thm:conv_minSG} to construct minimizers of $\cJ_n$ converging to $u^*$.
Importantly, since we now seek minimizers in the space $H_m^0$
(which does not assume Dirichlet boundary conditions), then minimizers of $\cJ_n$ correspond to stable equilibria of \eqref{KM-intro} (cf. \eqref{min-intro}). Since the proofs are essentially the same, we only outline the main steps.

 Recall that
$K^0 = \bigcup_{|w|=m} K^{0}_w$. Introduce the space of H\"older continuous functions on each
$K^{0}_w$ with norm
	\begin{equation*}
          \|f\|_{C^\beta(K^{0}_w)} =  \sup_{x\in K^{0}_w} |f(x)|+  \sup_{x\neq y} \frac{|f(y)-f(x)|}{|x-y|^\beta}.
	\end{equation*}
	Let $X$ be the space of functions on $K^0$ that are H\"older continuous on each $K^0_w$ with exponent $\beta = \frac{\log(5/3)}{2\log 2}$. 
	Fix $\varepsilon>0$ and consider the minimization of $\cJ_n$ on the closed set
	\begin{equation*}
		\overline{B_\varepsilon(u^*)}:=\{f\in X\colon \max_w \|f-u^*\|_{C^\beta(K^0_w)}\leq \varepsilon\}.
	\end{equation*}
	Restricting the minimization problem to the space of harmonic splines, $X^n$, (cf. \eqref{harm-splines}) is a finite dimensional problem, and so repeating the compactness argument of $\cJ_n$ guarantees a minimum on $X^n\cap \overline{B_\varepsilon(u^*)}$.

	We need only show that minimizers are attained on the interior of the ball $B_\varepsilon(u^*)$ for all $n$ sufficiently large (so as to be true local minima). Suppose not. Then, there is a sequence $f^n\in X$
        with $\max_w \|f^n-u^*\|_{C^\beta(K^0_w)}= \varepsilon$.   Restricting to $K^0_1$, Lemma \ref{lem:holder_sg_lift} implies that there is a subsequence $f^{n_k}\to \bar{f}\in C(K^0_1)$. If necessary, pass to a further subsequence so that
        $f^{n_{k_\ell}}\to \bar{f}\in C(K^0_1\cup K^0_2)$. As there are finitely many subdomains $K^0_w$, repeating
        this results in a subsequence (which we denote $f^{n_j}$ for simplicity) that converges on the entire domain:
        $f^{n_j}\to \bar{f} \in C(K^0)$.
        
      As before, by the $\Gamma$-convergence of $\cJ_n$ to $\cE_{K^0}$, we obtain the contradiction
      $\bar{f}=u^*$. Thus, for all $n$ sufficiently large, there is a (local) minimizer, $u^n$, of $\cJ_n$ in the interior of $B_\varepsilon(u^*)$. By construction $u^n$ has degree $\omega^\ast$. Thus both \eqref{steady-state-approx} and \eqref{harm-spline-estimate} follow.
      
%       establishing \eqref{steady-state-approx}.       
%      Finally, as harmonic functions satisfy the maximum principle \cite{Str06}, the harmonic spline extending $u^{n}$ to a piecewise harmonic function on $K^0$ is the desired continuous  function $\tilde u^{n}$ with degree $\omega^\ast$, completing.

  \section{The KM on post-critically finite fractals}\label{sec.pcf}
  \setcounter{equation}{0}
The analysis of functions on SG naturally extends to a broader class of self-similar domains
known as p.c.f. fractals. Examples include higher-order Sierpinski Gaskets, the pentagasket, and the hexagasket (see Fig.~\ref{fig:pcf_figs}), to name a few.  
In this final section, we describe how the preceding results can be
extended to the KM on p.c.f. fractals. While there are some technical differences in the construction of covering spaces (highlighted below), the techniques developed above remain applicable.
\begin{figure}[h]
\centering
\includegraphics[width=.3\textwidth]{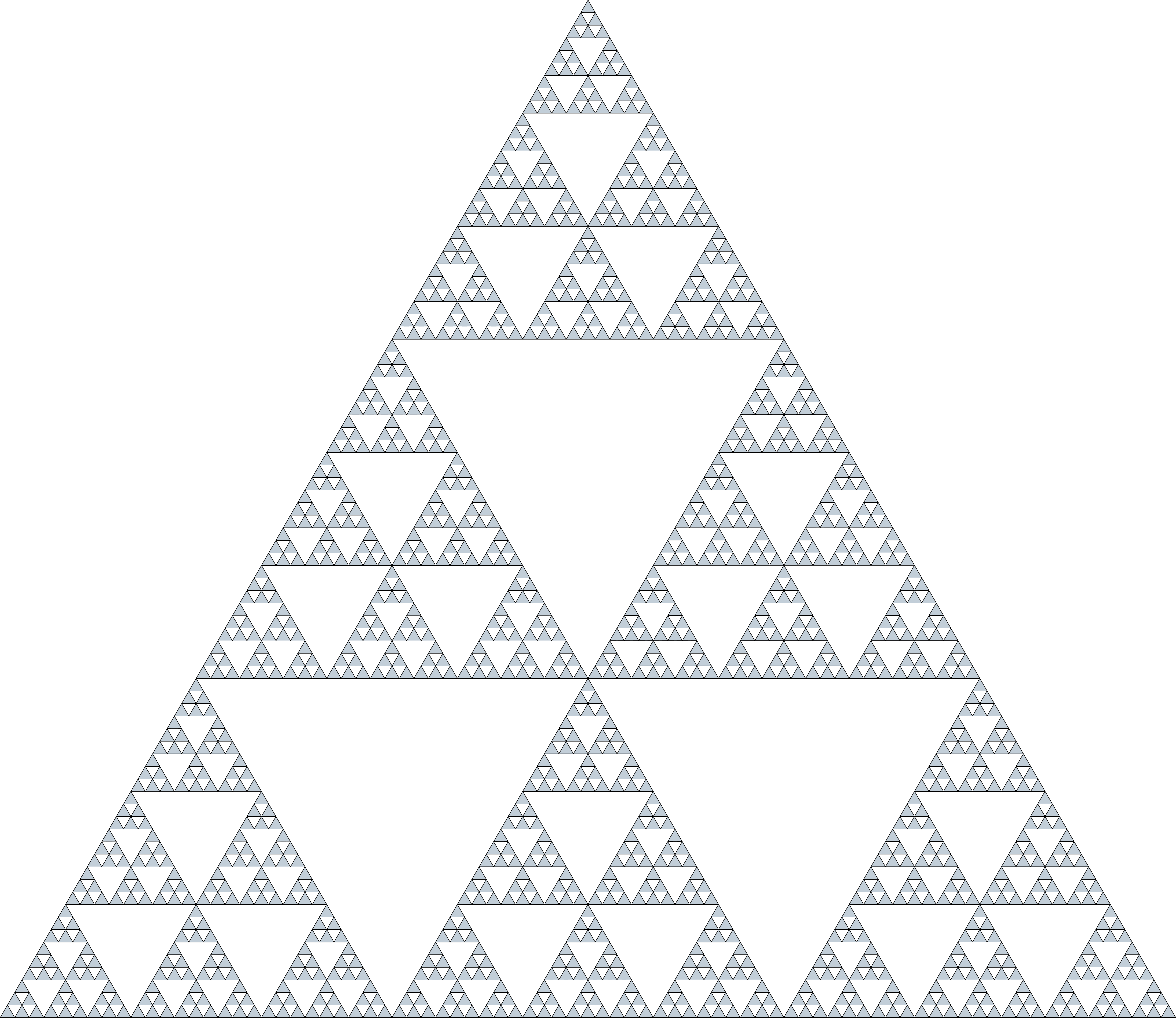}
\includegraphics[width=.3\textwidth]{}
\hspace{4mm}
\includegraphics[width=.25\textwidth]{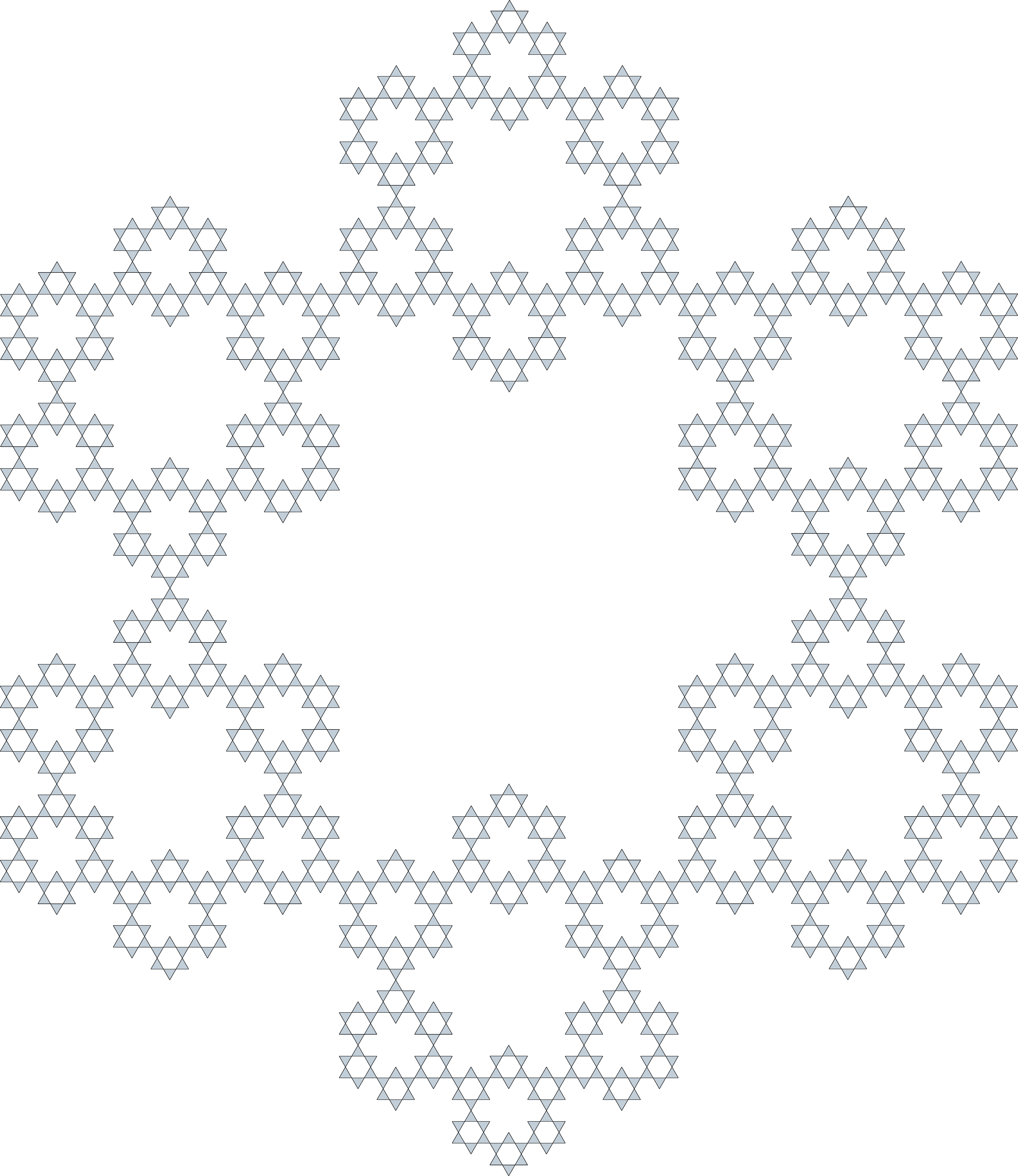}
\caption{Representative examples of p.c.f. fractals: the three-level SG, $SG_3$, the pentagasket, and the hexagasket. For details on their construction see \cite{Str06,MM2024}.}
\label{fig:pcf_figs}
\end{figure}

  \subsection{Background on p.c.f. fractals}
  
  We first review some relevant background of p.c.f. fractals and refer the reader to \cite{Kig01} for further details. Let $K\subset \R^d$ be a non-empty, connected, and compact set satisfying
  \begin{equation}
  	K=\bigcup_{i=1}^N F_i(K),
  \end{equation}
  where the $F_i:\R^d\to \R^d$, $1\leq i\leq N$, are contraction maps with corresponding contraction ratios $0<\lambda_i<1$. For simplicity, assume the $F_i$ are homotheties so that $|F_i(x)-F_i(y)|=\lambda_i|x-y|$.
  Denoting the $n$-cells
  \begin{equation*}
  	K_w = F_w(K),\quad w=(w_1,w_2,\dots,w_n)\in [N]^n,
  \end{equation*}
  the critical set is defined by
  \begin{equation*}
  	\mathcal{C} = \bigcup_{i\neq j} (K_i\cap K_j).
  \end{equation*}
  If $\mathcal{C}\neq \emptyset$, then the post-critical set is
  \begin{equation*}
  	V_0 = \bigcup_{m\geq 1}\bigcup_{|w|=m} F_w^{-1}(\mathcal{C}).
  \end{equation*}
  If $V_0$ is finite, then $K$ is called a p.c.f. fractal. Moreover, the post-critical set $V_0$ is taken to be the boundary of $K$ (e.g., for the associated Dirichlet problem).

  We assume there is a harmonic structure defined on $K$, which we describe below. Define the vertex sets
  \begin{equation*}
  	V_n = \bigcup_{i=1}^N F_i(V_{n-1}),\quad n\geq 1,
  \end{equation*}
  with $V_*=\bigcup_{n\geq 1} V_n$. Following \eqref{order-nodes}, denote the vertices $V_n = \{v_1,v_2,v_3,\dots\}$. Define the Dirichlet energies
  \begin{equation}
  	\label{eq:quadratic_energy_pcf}
  	\mathcal{E}_n(u) = \sum_{i,j=1}^{|V_n|}c_{ij}^n \frac{(u(v_j)-u(v_i))^2}{2},\quad u\in L(V_n,\R),
  \end{equation}
  where $c_{ij}^n\geq 0$. As before, it is possible to extend the domain of $\mathcal{E}_n$ to all $u\in C(K,\R)$.
  %In doing so, we can assume that the $\mathcal{E}_n$ all share a common domain, $C(K,\R)$. 
  We assume that the $\mathcal{E}_n$ satisfy the following two conditions. 
  \begin{enumerate}
  	\item There are $0<r_i<1$ so that
  	\begin{equation}
  		\label{eq:compatability0}
  		\mathcal{E}_n(u)=\sum_{i=1}^N r_i^{-1} \mathcal{E}_{n-1}(u\circ F_i),\quad u\in L(V_n,\R).
  	\end{equation}
  	\item For every $n\geq 1$,
  	\begin{equation}
  		\label{eq:compatability}
  		\mathcal{E}_{n-1}(u) = \min\{\mathcal{E}_n(\bar{u})\colon \bar{u}\in L(V_n,\R), \bar{u}|_{V_{n-1}}=u\}.
  	\end{equation}
  \end{enumerate}
  
  In particular, \eqref{eq:compatability} implies that $\mathcal{E}_n(u)$ is non-decreasing for any $u\in C(K,\R)$ so the limit
  \begin{equation*}
  	\mathcal{E}(u)=\lim_{n\to\infty} \mathcal{E}_n(u)
  \end{equation*}
  is well-defined. Moreover, it is possible to define a harmonic extension algorithm analogous to \eqref{classical-extension} \cite{Str06}. As before write
  \begin{equation*}
  	\operatorname{dom}(\mathcal{E})=\{u\in C(K,\R)\colon \mathcal{E}(u)<\infty\}.
  \end{equation*} 
%  Condition \eqref{eq:compatability0} ensures that the limiting energy $\mathcal{E}$ is self-similar on $K$, e.g.,
%  \begin{equation*}
 % 	\mathcal{E}(u) = \sum_{i=1}^N r_i^{-1} \mathcal{E}(u\circ F_i).
%  \end{equation*}
  The energies \eqref{eq:quadratic_energy_pcf} also naturally induce graph structures $\Gamma_n = (V_n,E_n)$. Namely, for all $v_j,v_i\in V_n$,
  \begin{equation*}
  	i\sim_n j \mbox{ iff } c_{ij}^n>0.
  \end{equation*}

To define the degree of a $\T$-valued map on a p.c.f. fractal $K$, we must describe the cycle space, $Cyc(\Gamma_m)$ of the approximating graph $\Gamma_m$ of $K$. For $SG$, the boundaries of the $n$-cells $\partial F_w(K)$ sufficed. In general, however, there is no canonical way to form a basis for this cycle space, so one must choose it. Let
  \begin{equation}
  	\mathcal{B}_m=\left\{\gamma_1^{(m)},\dots,\gamma_{n_m}^{(m)}\right\}
  \end{equation}
  be a basis for $Cyc(\Gamma_m)$.  For $f\in C(K,\T)$, define the degree 
  \begin{equation*}
  	\bar{\omega}^{(m)}(f)=\left( \omega_{\gamma_1^{(m)}}(f),\dots, 	\omega_{\gamma_{n_m}^{(m)}}(f)\right),
  \end{equation*}
  where 
  $
  \omega_{\gamma}(f)
  $
  is the degree of $f|_\gamma$. 
  
  Again, to generate harmonic maps, we construct a covering space depending on the degree. To that end, we require a few technical assumptions, described below. We believe they are not overly restrictive, e.g., they hold for standard examples of p.c.f. fractals such as the higher order Sierpinski Gaskets, the pentagasket, and the hexagasket \cite{MM2024}.
  %These technical assumptions are used in the construction of the covering spaces of $K$. 
 
  Fix an embedding {$\iota :\mathcal{B}_m\to \operatorname{Cyc}(\Gamma_{m+1})$,}
  such that 
  {
  	\begin{equation}\label{cycle_extend}
  		V\left(\gamma^{(m)}_i\right) \subset V\left(\iota(\gamma^{(m)}_i)\right),\quad 1\leq i\leq n_m,
  	\end{equation}
  }
  and
  \begin{equation}\label{embed}
  	V\left(\iota(\gamma^{(m)}_i )\right)\cap V\left(\iota(\gamma^{(m)}_j )\right)  =
  	V\left(\gamma^{(m)}_i \right)\cap V\left(\gamma^{(m)}_j \right), \quad 1\le i,j\le n_m,\;\;\; {j\neq i}.
  \end{equation}
  
  Assume that for every $i \in [n_m]$ there is $\xi(\gamma_i^{(m)})\in V_{m+1}$ such that
  \begin{equation}\label{pcf-cut}
  	V\left(\iota(\gamma^{(m)}_i )\right)\ni  \xi(\gamma_i^{(m)})\notin V\left(\iota(\gamma^{(m)}_j )\right),
  	\quad j\neq i.
  \end{equation}
 In particular, the $\xi(\gamma_i^{(m)})$ serve the role of the cut vertices.

  %Let $\gamma_0\in\operatorname{Cyc}(\Gamma_m)$ be a cycle that contains boundary vertices $V_0\subset V(\gamma_0)$. 
  % \marginpar{From the same embedding issues above, $p$ may not correspond to a % unique loop in $K$. Maybe instead we start from $p$, a loop in $K$ containing % $V_0$. Then its restrictions to $V_m$ are in $Cyc(\Gamma_m)$.}
  %Recall that $\bar f_p$ stands for the 
  %lift of $f_p=f\circ c_p,$ where $c_p$ is an appropriate parametrization
  %of $p$.
 % Here, by abuse of notation $\gamma_0$ also stands for the corresponding 
 % loop in $K$. Denote $\rho_0=\omega_{\gamma_0}(f)$, and define the boundary increments
 % \begin{align}
 % 	\p_{v_1v_2} f&= \delta_1, \nonumber\\
 % 	\p_{v_2v_3} f&= \delta_2, \label{pcf-bc-hm}\\
 % 	&\dots,\nonumber\\
 % 	\p_{v_{\ell-1}v_{\ell}} f &= \delta_{\ell-1}\nonumber \\
 %	\p_{v_{\ell}v_1} f&=\rho_0-\delta_1-\delta_2-\dots-\delta_{\ell-1}, \nonumber
 % \end{align}
 % where  $V_0=\{v_1,v_2,\dots, {v_\ell}\}.$
 
From \eqref{cycle_extend}-\eqref{pcf-cut}, we can construct a covering space $\tilde{K}$ using the same procedure as in Section \ref{sec.hmaps}. Indeed, first fix a degree
\begin{equation}
	\bar{\omega}^{(m)}(f) = (\rho_1,\dots, \rho_{n_m})\in \Z^{n_m}.
\end{equation}
Let $K_\times = K\times \Z$. On each sheet $K^s = K\times \{s\}$, create two copies of the cut vertices $z_i^{s}:= \xi(\gamma_i^{(m)})$, named $(z_{i}^{s})_-$ and $(z_{i}^{s})_+$. Identifying 
\begin{equation*}
	(z_{i}^s)_+ \simeq (z_{i}^{s+\rho_i})_-,
\end{equation*}
results in the covering space $\tilde{K} = K_\times/\simeq$ with fundamental domain $K^0$. Additionally, $K^0$ is approximated by graphs $\Gamma_m^s = (V_m^s,E_n^s)$. 

As a visual example, Figure \ref{fig:sg3_cut} shows this construction on $\Gamma_1$ of $SG_3$, a higher-order Sierpinski Gasket (see, e.g., \cite{MM2024} for a precise construction of this higher order SG). Given degree $\bar{\omega}^{(m)}$, gluing cut points across sheets results in the associated covering space $\tilde{K}$.

\begin{figure}[h]
	\centering
	{\bf a)}\qquad\includegraphics[width = .4\textwidth]{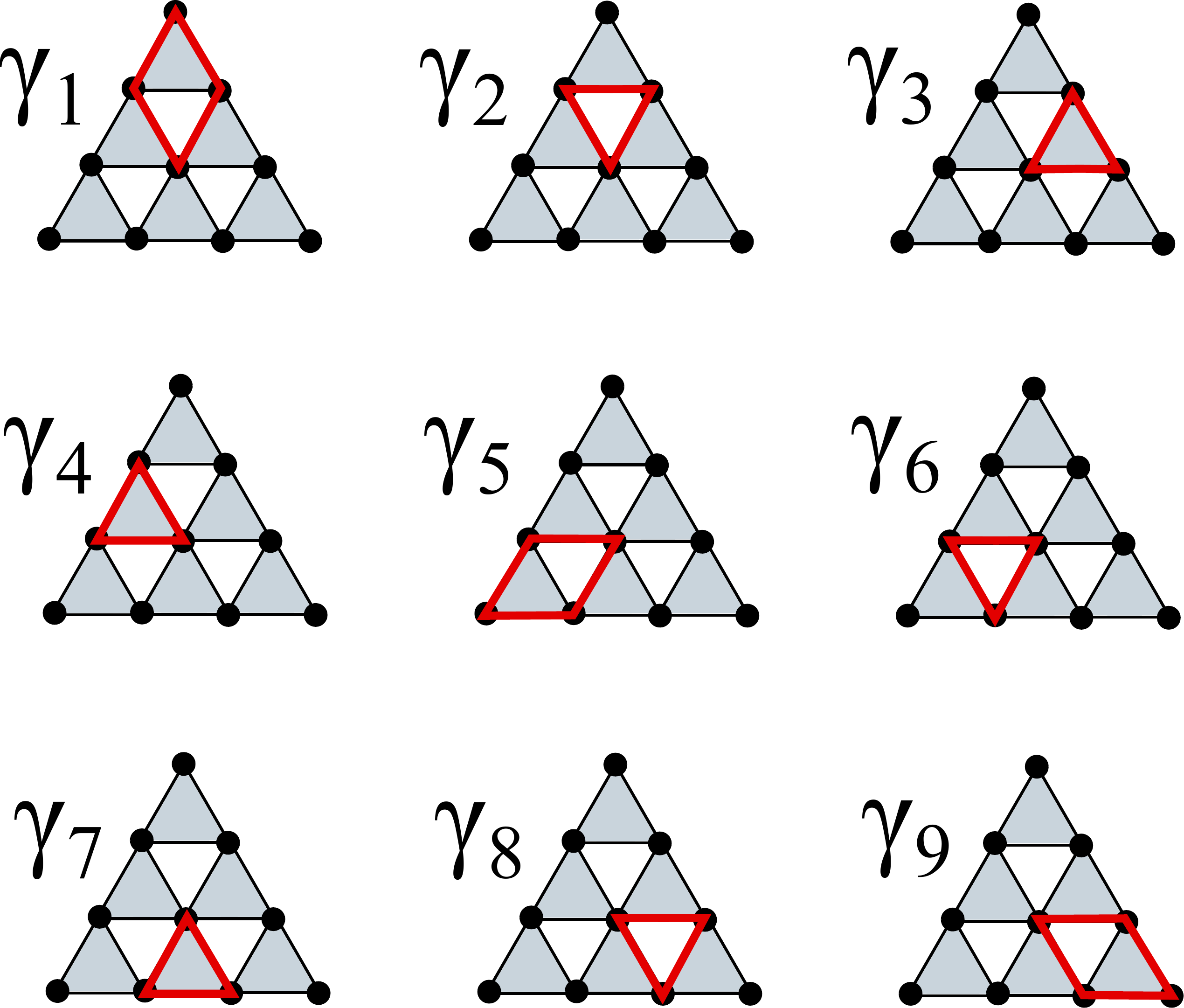}
	\hspace{1cm}
	{\bf b)}\qquad\includegraphics[width = .35\textwidth]{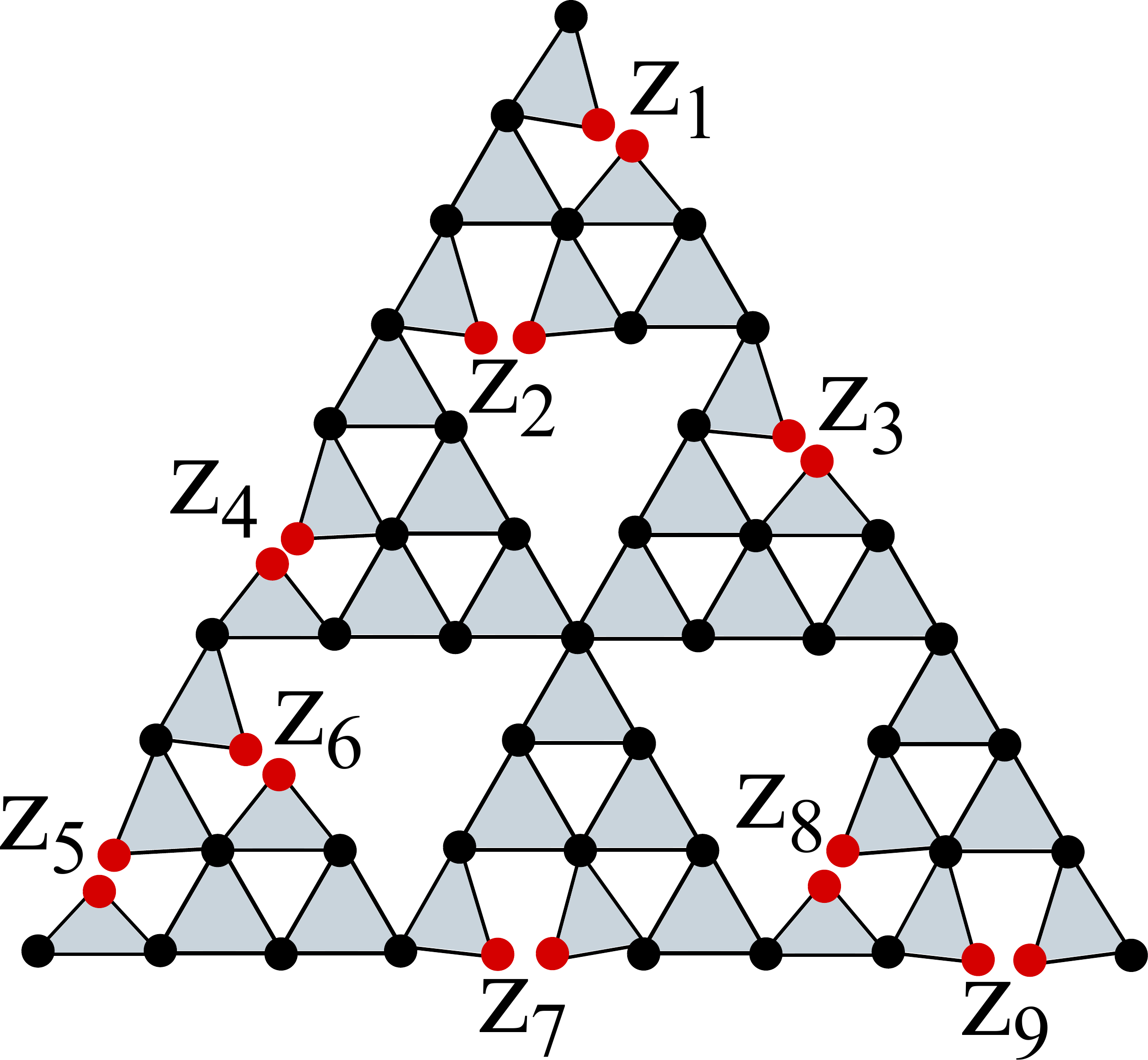}
	\caption{Covering space construction for the higher-order Sierpinski Gasket, $SG_3$. {\bf a)} A basis for the cycle space of $\Gamma_1$. In particular, the boundaries of the $n$-cells are not sufficient to form a basis. {\bf b)} Associated cut vertices on $\Gamma_2$. Superscripts omitted for simplicity.}
	\label{fig:sg3_cut}
\end{figure}

\subsection{Convergence of KM on p.c.f. fractals}

An analogous theory of harmonic functions exists on these covering spaces. Introduce Dirichlet energies on the fundamental domain,
\begin{equation*}
	\cE_n[\Gamma_n^0](u) = \sum_{(i,j)\in E^0_n} c_{ij}^n\frac{(u(v_j)-u(v_i))^2}{2},
\end{equation*}
and corresponding spaces
  \begin{align*}
	H^0_n = \{ f \in L(V^0_n,\R)\colon  &f(v^0_1)=0,\\
	&\; f((z^0_i)_+)=f((z^0_i)_-)+\rho_i, \forall i\in [n_m]\},
\end{align*}
where, as usual, the value at $v_1^0\in V^0$ is fixed to remove translation invariance. As in \eqref{eq:limit_k0_energy}, the limit
\begin{equation*}
	\cE_{K^0}(f) = \lim_{n\to\infty} \cE_n[\Gamma_n^0](f)
\end{equation*}
is well-defined. Denote $\operatorname{dom}(\cE_{K^0})\doteq \{u\in C(K^0,\R) : \cE_{K^0}(u)<\infty\}$.
 
Minimizing $\cE_n[\Gamma_n^0]$ and repeatedly applying the harmonic extension algorithm results in a harmonic function on $K^0$, which can be extended to all of $\tilde{K}$ as in Section \ref{sec.harm-struct}. Projecting the range to $\T$ results in the desired harmonic map. Moreover, the resultant map satisfies natural boundary conditions (cf. Lemma \ref{lem.natural}).
% For further details, see also \cite{MM2024}.

  \begin{theorem}\label{thm.pcf}
  	  Fix $m\in \N$ and a degree
  	  \begin{equation}\label{eq:winding_vec_general}
  		\bar{\omega}^{(m)}(f)=(\rho_1,\dots,\rho_{n_m}).
  	\end{equation}
  	There is a harmonic map $f\colon K\to\mathbb{T}$ satisfying \eqref{eq:winding_vec_general} and $\partial_{\bf n}f(v_i)=0$ for all $v_i\in V^0$.  \end{theorem}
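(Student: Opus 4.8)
The plan is to follow verbatim the blueprint of Section~\ref{sec.harm-struct}, since Theorem~\ref{thm.pcf} is the direct p.c.f.\ analogue of the SG construction and the covering space $\tilde K$ with fundamental domain $K^0$ has already been assembled in the preceding subsection via the stack/cut/identify procedure at the cut vertices $z_i=\xi(\gamma_i^{(m)})$. First I would pose, on the approximating graphs $\Gamma_n^0$, the constrained minimization of $\cE_n[\Gamma_n^0]$ over $H_n^0$, i.e.\ subject to $f(v_1^0)=0$ together with the jump conditions $f((z_i^0)_+)=f((z_i^0)_-)+\rho_i$. The cut conditions \eqref{cycle_extend}--\eqref{pcf-cut} are precisely what guarantee that these linear constraints are mutually consistent and that each jump is attached to a single basis loop, so that the constraints do not interact.

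Second, I would establish existence, uniqueness, and compatibility of the minimizers $f_n^0\in H_n^0$. Existence and uniqueness follow from strict convexity of the quadratic energy \eqref{eq:quadratic_energy_pcf} restricted to the affine subspace $H_n^0$. The key martingale property $f_{n+1}^0|_{V_n^0}=f_n^0$ then follows exactly as in the remark following \eqref{E-extension}: the harmonic extension of $f_n^0$ lies in $H_{n+1}^0$ and, by \eqref{eq:compatability}, realizes the minimal energy, so uniqueness forces it to coincide with $f_{n+1}^0$. This compatibility lets me build a single function $\hat f^*$ on $V_*^0$ by iterated harmonic extension, extend it by continuity to $K^0$, lift it periodically to $\tilde K$ by $\mathbf{\hat f}^*(x,s)=\hat f^*(x)+s$, and project the range to obtain $f^*=\mathbf{\hat f}^*|_{K^0}\bmod 1$. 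Because the lift $\mathbf{\hat f}^*$ is single-valued away from the identified cut vertices and increments by exactly $\rho_i$ upon crossing the loop $\gamma_i$, the resulting map has $\bar\omega^{(m)}(f^*)=(\rho_1,\dots,\rho_{n_m})$, as required by \eqref{eq:winding_vec_general}.

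Third, I would verify the zero Neumann conditions by transcribing the proof of Lemma~\ref{lem.natural}. At every boundary vertex $v_i\in V_0$ other than the fixed $v_1$, the Euler--Lagrange equation for the unconstrained directions reads $\sum_{j\sim_n i}(\hat f^*(v_j)-\hat f^*(v_i))=0$, which is exactly $\partial_{\bf n}\hat f^*(v_i)=0$. For $v_1$ itself, where the value is pinned, I would invoke the discrete divergence theorem: summing the discrete Laplacian (suitably defined at the jump vertices) over all interior vertices telescopes to a sum of normal derivatives over $V_0$; since the interior Laplacian vanishes and the remaining boundary normal derivatives are already zero, $\partial_{\bf n}\hat f^*(v_1)=0$ as well. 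Projection to $\T$ preserves normal derivatives, yielding $\partial_{\bf n}f^*(v_i)=0$ for all $v_i\in V^0$.

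The main obstacle I anticipate is not any individual step but confirming that the abstract p.c.f.\ hypotheses \eqref{eq:compatability0}--\eqref{eq:compatability}, combined with the embedding and cut conditions \eqref{cycle_extend}--\eqref{pcf-cut}, genuinely support the two structural facts that were essentially automatic on SG: namely, that the \emph{constrained} variational problem inherits uniqueness and that its minimizer's harmonic extension preserves the jump constraints, and that the discrete divergence theorem closes as cleanly as it did on SG, where the symmetric $\tfrac{1}{5}$--$\tfrac{2}{5}$ rule and the simple boundary geometry were used implicitly. In particular, one must check that the harmonic extension algorithm for the general p.c.f.\ harmonic structure respects the jump conditions cell-by-cell, so that the inductive construction of $\hat f^*$ remains valid.
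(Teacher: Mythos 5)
Your proposal is correct and takes essentially the same route as the paper: the paper's proof of Theorem~\ref{thm.pcf} is precisely the sketch you give---minimize $\cE_n[\Gamma_n^0]$ over $H_n^0$ with the jump conditions at the cut vertices $\xi(\gamma_i^{(m)})$, iterate the harmonic extension, extend to the covering space as in Section~\ref{sec.harm-struct}, project the range to $\T$, and obtain the Neumann conditions via the argument of Lemma~\ref{lem.natural}. Your added details (strict convexity for uniqueness, the compatibility $f_{n+1}^0|_{V_n^0}=f_n^0$ via \eqref{eq:compatability}, and the discrete divergence theorem at the pinned vertex $v_1$) are exactly what the paper leaves implicit by appeal to the SG case.
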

  	
As in the previous sections, to establish convergence results we require H\"older continuity. A variation of the following can be found in \cite{Str06}. Recall that the fundamental domain $K^0$ can be represented as a finite union of $m$-cells. Denoting $K^0_w\doteq K_w(K^0)$, we have $K^0 = \bigcup_{|w|=m}K^0_w$, where each $K^0_w$ is a copy of $K$.

\begin{lemma}\label{lem:holder_estimate_pcf}
	Let $f\in \operatorname{dom}(\mathcal{E}_{K^0})$, and represent the fundamental domain as a union of $m$-cells: $K^0=\bigcup_{|w|=m} K^0_w$. There exists $\beta>0$ so that for any $x,y \in K^0_w$,  
	\begin{equation*}
		\frac{|f(x)-f(y)|}{|x-y|^{\beta}}\leq C\sqrt{\mathcal{E}_{K^0}(f)},
	\end{equation*}
	where $C$ is independent of $f$.
\end{lemma}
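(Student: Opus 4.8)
The plan is to reduce the estimate on each $m$-cell of the fundamental domain to the corresponding H\"older estimate on the base fractal $K$, exactly as Lemma~\ref{lem.Holder-G^0} and Lemma~\ref{lem:holder_sg_lift} were derived from Lemma~\ref{lem.SG_holder} in the SG setting. The first step is to record the base estimate: under the standing assumptions on the harmonic structure (in particular $0<r_i<1$), every $g\in\operatorname{dom}(\mathcal{E})$ satisfies $|g(p)-g(q)|\le C_0\,|p-q|^{\beta}\sqrt{\mathcal{E}(g)}$ for all $p,q\in K$, with some $\beta>0$ and $C_0$ independent of $g$. This is the general p.c.f. analogue of Lemma~\ref{lem.SG_holder} and follows from \cite{Str06} (see also \cite{Kig01}): a Poincar\'e inequality on the finite set $V_0$ bounds the oscillation of $g$ over $V_0$ by $\sqrt{\mathcal{E}(g)}$, iterating \eqref{eq:compatability0} gives $\mathcal{E}(g\circ F_w)\le r_w\,\mathcal{E}(g)$ with $r_w:=r_{w_1}\cdots r_{w_{|w|}}$, and a telescoping argument across the chain of cells joining $p$ to $q$ converts these cell-oscillation bounds into a H\"older modulus, the exponent $\beta$ arising from balancing the resistance decay $r_w$ against the metric decay $\lambda_w:=\lambda_{w_1}\cdots\lambda_{w_{|w|}}$.

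Next I would transfer this to the fundamental domain. As recorded in the text preceding the lemma, each $m$-cell $K^0_w$ ($|w|=m$) is a copy of $K$ under the homothety $F_w$. Iterating \eqref{eq:compatability0} a total of $m$ times, the energy of $f$ restricted to the edges internal to $K^0_w$ equals $r_w^{-1}\mathcal{E}(f\circ F_w)$; since this is one group of nonnegative summands in $\mathcal{E}_{K^0}(f)$, passing to the limit in the approximation level yields
\begin{equation*}
\mathcal{E}(f\circ F_w)\le r_w\,\mathcal{E}_{K^0}(f).
\end{equation*}

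Now fix $|w|=m$ and $x,y\in K^0_w$, and set $p=F_w^{-1}(x)$, $q=F_w^{-1}(y)\in K$ and $g=f\circ F_w$. Because $F_w$ is a homothety, $|x-y|=\lambda_w|p-q|$, so applying the base estimate to $g$ together with the energy bound above gives
\begin{equation*}
\frac{|f(x)-f(y)|}{|x-y|^{\beta}}
=\lambda_w^{-\beta}\,\frac{|g(p)-g(q)|}{|p-q|^{\beta}}
\le \lambda_w^{-\beta}\,C_0\,\sqrt{\mathcal{E}(g)}
\le C_0\,\lambda_w^{-\beta}\sqrt{r_w}\,\sqrt{\mathcal{E}_{K^0}(f)}.
\end{equation*}
Since there are only finitely many cells $K^0_w$ with $|w|=m$ and $m$ is fixed, taking $C:=C_0\max_{|w|=m}\lambda_w^{-\beta}\sqrt{r_w}<\infty$ produces a constant independent of $f$ and completes the proof.

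The main obstacle is the base estimate on $K$ itself: for a general p.c.f. fractal the contraction ratios $\lambda_i$ and resistance factors $r_i$ need not be uniform, so the existence of a single positive exponent $\beta$ relies on the regularity of the harmonic structure, and the chaining argument needed to upgrade the cell-by-cell oscillation control into a genuine H\"older modulus of continuity is the delicate point. Once that estimate is available, the reduction to the fundamental domain is purely a matter of the self-similar scaling of the energy \eqref{eq:compatability0} and the homothety property of $F_w$, as carried out above.
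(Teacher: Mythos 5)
Your proposal is correct, and its reduction to the base fractal is the same one the paper intends (the paper compresses it to ``arguing as in Lemma~\ref{lem:holder_sg_lift}, it is sufficient to establish H\"older continuity on $K$ itself''); in fact you carry that reduction out more explicitly than the paper does, via the scaling $\mathcal{E}(f\circ F_w)\le r_w\,\mathcal{E}_{K^0}(f)$ from \eqref{eq:compatability0}, the homothety identity $|x-y|=\lambda_w|p-q|$, and the finite maximum of $\lambda_w^{-\beta}\sqrt{r_w}$ over the finitely many $m$-cells. Where you genuinely diverge is the base estimate on $K$: you sketch a direct oscillation-plus-telescoping proof (Poincar\'e on $V_0$, iterated energy scaling, chaining across nested cells), whereas the paper routes through the effective resistance metric: it invokes the standard bound $|f(x)-f(y)|\le R(x,y)^{1/2}\sqrt{\mathcal{E}(f)}$ from \cite{Kig01} and then proves the metric comparison $R(x,y)^{1/2}\le C|x-y|^{\alpha/2}$ by testing the resistance with the harmonic spline equal to $\delta_{zy}$ on $V_n$, which gives $R(x,y)\le C\bar{r}^{\,n}$ for $x,y$ in a common $n$-cell, against the Euclidean separation $|x-y|\ge C\underline{\lambda}^{\,n}$, yielding $\alpha=\log(\bar{r})/\log(\underline{\lambda})$ --- the same balancing of $\bar r$ against $\underline{\lambda}$ that you identify. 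The paper's route buys a short, essentially self-contained argument, because the one-line spline test function replaces the chaining step that you correctly flag as the delicate point of your approach: in your telescoping argument you would still need to justify that two nearby points of $K$ lie in a common cell or in a bounded number of adjacent cells of comparable scale, and to handle interior points not captured by oscillation over $V_0$ alone (the paper sidesteps the latter by working on the dense set $V_*$ and concluding by continuity). If you want to close your sketch without importing the full chaining machinery from \cite{Str06,Kig01}, substituting the paper's resistance-metric step for your base estimate is the most economical fix; otherwise your argument is sound as a blueprint and identical to the paper's in its second half.
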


\begin{proof}
	Arguing as in Lemma \ref{lem:holder_sg_lift}, it is sufficient to establish H\"older continuity on $K$ itself. Let $R(x,y)$ be the effective resistance metric induced by $\cE_{K}$ on $K$. That is,
	\begin{equation*}
		R(x,y)^{-1}\doteq \min \{\cE(u)\colon u(x)=0,\; u(y)=1\}.
	\end{equation*} 
	Then $R$ is a metric and induces a 1/2-H\"older norm on the domain of $\cE_{K}$, cf \cite{Kig01}:
	$$|f(x)-f(y)|\leq R(x,y)^{1/2}\sqrt{\cE(f)}.$$
	Hence, we need only show that $R(x,y)^{1/2}\leq C|x-y|^\beta$. 
	
	To that end, assume first that $x,y\in V_n$ and take $u\in \operatorname{dom}(\cE)$ to be the harmonic spline whose values on $V_n$ are $u(z)=\delta_{zy}$ (i.e., $u(z)=1$ only when $z=y$). Then $\cE(u)\geq \cE_n(u)\geq Cr_w^{-1}$, where $r_w$ is a product of the $r_i$ defined in \eqref{eq:compatability0}. From the definition of $R$ certainly $R(x,y)\leq Cr_w$. Defining $\bar{r}=\max r_i <1$, we have $R(x,y)\leq C\bar{r}^n$. Next, recall that the $F_i$ have contraction ratios $\lambda_i$. Letting $\underline{\lambda}=\min \lambda_i$, we have a lower bound $|x-y|\geq C\underline{\lambda}^n$, $x,y \in V_n$. Thus taking $\alpha = \log(\overline{r})/\log(\underline{\lambda})$ then 
	\begin{equation*}
		R(x,y)^{1/2}\leq C |x-y|^{\alpha/2}.
	\end{equation*}
	If $x,y\in K$ are arbitrary, then by density of $V_*$ there are sequences $x_n\to x$ and $y_n\to y$ with $x_n,y_n\in V_n$ and the same estimate follows by continuity.
\end{proof}

The convergence of stable equilibria of KM on p.c.f. fractals now follows similarly. Consider the KM on $\Gamma_n$ approximating $K$:
\begin{equation}\label{KM-pcf}
	\dot u(t,v_i)= \sum_{i,j}c_{i,j}^n \sin\left(2\pi\left(u(t,v_j)-u(t,v_i)\right)\right), \quad
	v_i\in  V_n.
\end{equation} 
  	Combining the covering space construction of harmonic maps (Theorem \ref{thm.pcf}), and the a priori H\"older estimates on the fundamental domain (Lemma \ref{lem:holder_estimate_pcf}), applying the same arguments as in Section \ref{sec.proof} results in the following Theorem.
  	
  	\begin{theorem}
  		Fix $m\in \N$ and a degree $\bar{\omega}^{(m)}(u^*)$. Construct a harmonic map $u^*\in C(K,\T)$ as in Theorem \ref{thm.pcf}. Then, for each $\varepsilon>0$, there exists $N\in \N$ such that for all $n\geq N$, the KM on $\Gamma_n$ has a stable steady state solution $u^{n}\in L(V_n,\T)$ such that
  	\begin{equation}
	\max_{x\in V_n} | u^{n}(x)-u^*(x)|<\epsilon.
\end{equation}
Moreover, there is an extension of $u^{n}$ to a continuous function on $K$, $\tilde{u}^{n}$
	such that 
	$\bar{\omega}^{(m)}(\tilde u^{n})=\bar{\omega}^{(m)}(u^*)$ and
	\begin{equation}
		\max_{x\in K} | \tilde{u}^{n}(x)-u^*(x)|<\epsilon.
	\end{equation}
  	\end{theorem}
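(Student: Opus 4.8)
The plan is to follow the three–stage strategy of Section~\ref{sec.proof} essentially verbatim: lift the $\T$–valued problem to a real–valued one on the fundamental domain $K^0$ of the covering space $\tilde K$, establish $\Gamma$–convergence of the Kuramoto energies there, and then extract a sequence of interior local minimizers converging to $u^*$, finally projecting back to $\T$. The harmonic map $u^*$ with prescribed degree $\bar\omega^{(m)}(u^*)=(\rho_1,\dots,\rho_{n_m})$ and natural boundary conditions is already supplied by Theorem~\ref{thm.pcf}, and its lift $\hat u^*$ is the harmonic function on $K^0$ realizing the jump conditions encoded in $H_n^0$. Because the minimization is performed over $H_n^0$ — which fixes only the normalization $u(v_1^0)=0$ together with the jumps across the cut vertices, but imposes no Dirichlet data — every local minimizer of $\cJ_n$ is, by the gradient structure of \eqref{KM-pcf}, a Lyapunov–stable equilibrium of the KM; and by the $1$–periodicity of $\cos$ the cut–graph energy on $\Gamma_n^0$ coincides with the $\T$–valued Kuramoto energy on $\Gamma_n$, exactly as in the remark following Theorem~\ref{thm.nn-graphs}.

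For the $\Gamma$–convergence step I would repeat the proof of Theorem~\ref{thm:SGThm}, now with the Kuramoto energy $\cJ_n(u)=\frac{1}{4\pi^2}\sum_{(i,j)\in E_n^0}c_{ij}^n(1-\cos(2\pi(u(v_j)-u(v_i))))$ assembled from the p.c.f.\ weights $c_{ij}^n$ of \eqref{eq:quadratic_energy_pcf}. Taylor expanding $\cos$ writes $\cJ_n=\cE_n[\Gamma_n^0]+R_n$, and the alternating–series bound $|g(x)|\le \pi^2|x|^2/3$ controls the remainder by $|R_n|\le C\bigl(\max_{j\sim_n i}|u^n(v_j)-u^n(v_i)|^2\bigr)\cE_n[\Gamma_n^0](u^n)$. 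Two facts force $R_n\to 0$. First, Lemma~\ref{lem:holder_estimate_pcf} gives, for any $v_i,v_j$ lying in a common $m$–cell, the estimate $|u^n(v_j)-u^n(v_i)|\le C\sqrt{\cE_{K^0}(u^n)}\,|v_j-v_i|^\beta$; such a pair is precisely an adjacent pair in $\Gamma_n$ once $n\ge m$. Second, the contraction ratios $\lambda_i<1$ give $\max_{j\sim_n i}|v_j-v_i|\le \bar\lambda^{\,n}\operatorname{diam}(K)\to 0$, with $\bar\lambda=\max_i\lambda_i$. Combining these with the monotone bound $\cE_n[\Gamma_n^0]\le\cE_{K^0}$ yields $|R_n|\le C(\cE_{K^0}(u^n))^2\max_{j\sim_n i}|v_j-v_i|^{2\beta}\to 0$, which reproduces \eqref{eq:liminf_ineq} and establishes C1; the recovery sequence for C2 is again the constant sequence $u^n\equiv u^*$.

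The existence of stable equilibria then mirrors Theorem~\ref{thm:conv_minSG}. I would work in the space $X$ of functions on $K^0$ that are $\beta$–H\"older on each $m$–cell $K^0_w$, fix $\varepsilon>0$, and minimize $\cJ_n$ over the compact set $X^n\cap\overline{B_\varepsilon(u^*)}$, where $X^n$ is the finite–dimensional space of harmonic splines (on which $\cJ_n$ depends only through the free $V_n^0$–values, so the constrained problem genuinely ranges over all of $H_n^0$). Suppose, for contradiction, the minimizer $u^n$ satisfied $\max_w\|u^n-u^*\|_{C^\beta(K^0_w)}=\varepsilon$ for infinitely many $n$. Since the H\"older bound is only cell–wise, compactness must be applied one subdomain at a time: extract a subsequence convergent on $K^0_1$, refine it on $K^0_1\cup K^0_2$, and iterate over the finitely many cells to obtain $u^{n_j}\to\bar u\in C(K^0)$. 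The $\Gamma$–convergence of $\cJ_n$ to $\cE_{K^0}$ together with the minimality of $u^*$ then forces $\cE_{K^0}(\bar u)=\cE_{K^0}(u^*)$ and hence $\bar u=u^*$, contradicting that $\bar u$ lies on the $\varepsilon$–sphere. Thus for all large $n$ there is a genuine interior local minimizer, which projected modulo $1$ gives a $\T$–valued harmonic spline $\tilde u^n$ whose restriction to $V_n$ is the stable equilibrium $u^n$; the jump constants $\rho_i$ built into $H_n^0$ guarantee $\bar\omega^{(m)}(\tilde u^n)=\bar\omega^{(m)}(u^*)$, and both sup–norm estimates follow because the cell–wise H\"older norm dominates the supremum norm.

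The main obstacle is ensuring the $\Gamma$–convergence remainder genuinely vanishes in the p.c.f.\ generality: here the H\"older exponent $\beta$ arises from the resistance metric and the contraction ratios (Lemma~\ref{lem:holder_estimate_pcf}) rather than from the explicit $2^{-n}$ edge scaling available on SG, and H\"older continuity is guaranteed only within individual $m$–cells. Consequently both the remainder estimate and the compactness extraction must be organized cell–by–cell, and one must verify that adjacency in $\Gamma_n$ implies membership in a common $m$–cell for every $n\ge m$, so that Lemma~\ref{lem:holder_estimate_pcf} applies to each term of $R_n$. Once this geometric bookkeeping is in place, the remaining arguments are identical to those of Theorems~\ref{thm:SGThm} and~\ref{thm:conv_minSG}.
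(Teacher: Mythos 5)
Your proposal is correct and follows essentially the same route as the paper, which proves this theorem by citing Theorem~\ref{thm.pcf}, Lemma~\ref{lem:holder_estimate_pcf}, and the arguments of Section~\ref{sec.proof} (lifting to the fundamental domain, repeating the $\Gamma$-convergence proof of Theorem~\ref{thm:SGThm} with the weights $c_{ij}^n$, and running the cell-by-cell compactness and contradiction argument of Theorem~\ref{thm:conv_minSG}). Your added bookkeeping --- that adjacent vertices of $\Gamma_n$ share a common $m$-cell for $n\geq m$, and that edge lengths decay like $\bar\lambda^{\,n}$ via the contraction ratios --- is exactly the detail the paper leaves implicit, so no gap remains.
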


	%\noindent\textbf{Acknowledgments.} The work of both authors
    %was partially supported by NSF DMS Award \# 2406941.

    \noindent\textbf{Author Contributions}\; 
    G.S.M. and M.S.M. contributed to writing the manuscript equally. All authors reviewed the manuscript.
    
    \noindent\textbf{Funding}\;
    The work of G.S.M and M.S.M. was partially supported by NSF DMS Awards \# 2406941  and \\
    \# 2406942 respectively.

    \noindent\textbf{Data availability}\; No datasets were generated or analysed during the current study.

\bibliographystyle{amsplain}
\bibliography{gamma}
		
\end{document}